\documentclass[12pt,a4paper]{article} %twoside
\usepackage[left=38mm,right=28mm, top=35mm, bottom=35mm]{geometry}

\usepackage[english]{babel}
\usepackage[babel,english=american]{csquotes}
\usepackage{color, comment}
\usepackage{amssymb,amsfonts,amsmath,amsthm,bbm}
\usepackage{latexsym}
\usepackage{graphicx}
\usepackage{multirow}
\usepackage{stmaryrd}
\usepackage{dsfont}
\usepackage{nicefrac}
\usepackage{enumerate}
\usepackage{mathrsfs}
\usepackage{natbib}
\usepackage[reftex]{theoremref}

%\usepackage[colorlinks, linkcolor = black, citecolor = black, filecolor = black, urlcolor = blue]{hyperref}

%%%%%%%%%%%%%%%%%%%%%%%%%%%%%%%%%%%%%%%%%%%%%%%%%%%%%%%%%%%%%%%%
%% Raumbezeichnungen
%%%%%%%%%%%%%%%%%%%%%%%%%%%%%%%%%%%%%%%%%%%%%%%%%%%%%%%%%%%%%%%%
\newcommand{\R}{\mathbb{R}}

\newcommand{\N}{\mathbb{N}}

\newcommand{\Linf}[2]{L_{#2}^\infty(#1)}

%%%%%%%%%%%%%%%%%%%%%%%%%%%%%%%%%%%%%%%%%%%%%%%%%%%%%%%%%%%%%%%%
%% Skript-Buchstaben
%%%%%%%%%%%%%%%%%%%%%%%%%%%%%%%%%%%%%%%%%%%%%%%%%%%%%%%%%%%%%%%%

\newcommand{\CB}{\mathcal{B}}

\newcommand{\CE}{\mathcal{E}}
\newcommand{\CF}{\mathcal{F}}
\newcommand{\CG}{\mathcal{G}}
\newcommand{\CH}{\mathcal{H}}
\newcommand{\CI}{\mathcal{I}}
\newcommand{\CJ}{\mathcal{J}}

\newcommand{\CM}{\mathcal{M}}

\newcommand{\CR}{\mathcal{R}}
\newcommand{\CS}{\mathcal{S}}
\newcommand{\CT}{\mathcal{T}}

\newcommand{\CX}{\mathcal{X}}

%%%%%%%%%%%%%%%%%%%%%%%%%%%%%%%%%%%%%%%%%%%%%%%%%%%%%%%%%%%%%%%%
%% MaÃe
%%%%%%%%%%%%%%%%%%%%%%%%%%%%%%%%%%%%%%%%%%%%%%%%%%%%%%%%%%%%%%%%
\newcommand{\PW}{\mathds{P}}

%%%%%%%%%%%%%%%%%%%%%%%%%%%%%%%%%%%%%%%%%%%%%%%%%%%%%%%%%%%%%%%%
%% griechische Buchstaben
%%%%%%%%%%%%%%%%%%%%%%%%%%%%%%%%%%%%%%%%%%%%%%%%%%%%%%%%%%%%%%%%
\newcommand{\WR}{\Omega}
\newcommand{\om}{\omega}

%%%%%%%%%%%%%%%%%%%%%%%%%%%%%%%%%%%%%%%%%%%%%%%%%%%%%%%%%%%%%%%%
%% RisikomaÃe und Aggregationsfunktionen
%%%%%%%%%%%%%%%%%%%%%%%%%%%%%%%%%%%%%%%%%%%%%%%%%%%%%%%%%%%%%%%%
\newcommand{\LG}{\Lambda_{\CG}}

%%%%%%%%%%%%%%%%%%%%%%%%%%%%%%%%%%%%%%%%%%%%%%%%%%%%%%%%%%%%%%%%
%% Vektoren
%%%%%%%%%%%%%%%%%%%%%%%%%%%%%%%%%%%%%%%%%%%%%%%%%%%%%%%%%%%%%%%%
\newcommand{\vecone}{\mathbf{1}_d}
\newcommand{\veczero}{\mathbf{0}_d}

%%%%%%%%%%%%%%%%%%%%%%%%%%%%%%%%%%%%%%%%%%%%%%%%%%%%%%%%%%%%%%%%
%% Matrizen
%%%%%%%%%%%%%%%%%%%%%%%%%%%%%%%%%%%%%%%%%%%%%%%%%%%%%%%%%%%%%%%%

%%%%%%%%%%%%%%%%%%%%%%%%%%%%%%%%%%%%%%%%%%%%%%%%%%%%%%%%%%%%%%%%
%% Dach-Konstrukte
%%%%%%%%%%%%%%%%%%%%%%%%%%%%%%%%%%%%%%%%%%%%%%%%%%%%%%%%%%%%%%%%

%%%%%%%%%%%%%%%%%%%%%%%%%%%%%%%%%%%%%%%%%%%%%%%%%%%%%%%%%%%%%%%%
%% Tilde-Konstrukte
%%%%%%%%%%%%%%%%%%%%%%%%%%%%%%%%%%%%%%%%%%%%%%%%%%%%%%%%%%%%%%%%

\newcommand{\WH}{\widetilde{H}}

\newcommand{\Wrho}{\widetilde{\rho}}

\newcommand{\Wu}{\widetilde{u}}
\newcommand{\WU}{\widetilde{U}}

%%%%%%%%%%%%%%%%%%%%%%%%%%%%%%%%%%%%%%%%%%%%%%%%%%%%%%%%%%%%%%%%
%% Quer-Konstrukte
%%%%%%%%%%%%%%%%%%%%%%%%%%%%%%%%%%%%%%%%%%%%%%%%%%%%%%%%%%%%%%%%

\newcommand{\rhoo}{\bar{\rho}}

%%%%%%%%%%%%%%%%%%%%%%%%%%%%%%%%%%%%%%%%%%%%%%%%%%%%%%%%%%%%%%%%
%% Sonstige
%%%%%%%%%%%%%%%%%%%%%%%%%%%%%%%%%%%%%%%%%%%%%%%%%%%%%%%%%%%%%%%%
\newcommand{\eqd}{\stackrel{\text{d}}{=}}

\newcommand{\ind}{\mathbbmss{1}}

\newcommand{\prefto}{\succcurlyeq}

\newcommand{\EW}[2]{\mathds{E}_{#1}\left[#2\right]}
\newcommand{\BEW}[3]{\mathds{E}_{#1}\left[\left.#2\,\right|\,#3\right]}
%%%%%%%%%%%%%%%%%%%%%%%%%%%%%%%%%%%%%%%%%%%%%%%%%%%%%%%%%%%%%%%%
%% Ãquivalenzklassen
%%%%%%%%%%%%%%%%%%%%%%%%%%%%%%%%%%%%%%%%%%%%%%%%%%%%%%%%%%%%%%%%
%\newcommand{\eq}[1]{\lfloor#1\rfloor}
\newcommand{\eq}[1]{#1}

\newcommand{\eqX}{\eq{X}}

\newcommand{\eqXom}{\eq{X(\om)}}
\newcommand{\eqx}{\eq{x}}

\newcommand{\eqY}{\eq{Y}}
\newcommand{\eqYom}{\eq{Y(\om)}}

%%%%%%%%%%%%%%%%%%%%%%%%%%%%%%%%%%%%%%%%%%%%%%%%%%%%%%%%%%%%%%%%
%% Funktionsbezeichnungen
%%%%%%%%%%%%%%%%%%%%%%%%%%%%%%%%%%%%%%%%%%%%%%%%%%%%%%%%%%%%%%%%

\DeclareMathOperator*{\esssup}{esssup}

\DeclareMathOperator{\id}{id}
\DeclareMathOperator{\Imag}{Im}

\newcommand{\supnorm}[1]{\left\|#1\right\|_{\infty}}

%%%%%%%%%%%%%%%%%%%%%%%%%%%%%%%%%%%%%%%%%%%%%%%%%%%%%%%%%%%%%%%%
%% Umgebungen
%%%%%%%%%%%%%%%%%%%%%%%%%%%%%%%%%%%%%%%%%%%%%%%%%%%%%%%%%%%%%%%%
\numberwithin{figure}{section}
\numberwithin{table}{section}
\numberwithin{equation}{section}

\newtheorem{Theorem}{Theorem}[section]
\newtheorem{Proposition}[Theorem]{Proposition}
\newtheorem{Lemma}[Theorem]{Lemma}
\newtheorem{Corollary}[Theorem]{Corollary}
\newtheorem{Definition}[Theorem]{Definition}

\theoremstyle{definition}
\newtheorem{Example}[Theorem]{Example}

\newtheorem{Assumption}[]{Assumption}

\theoremstyle{remark}
\newtheorem{Remark}[Theorem]{Remark}

%%%%%%%%%%%%%%%%%%%%%%%%%%%%%%%%%%%%%%%%%%%%%%%%%%%%%%%%%%%%%%%%%%%%%%%%%%%%%%%%%%%%%%%%%%%%%%%%%%%%%%%%%%%%%
%%%%%%%%Ende Header%%%%%%%%%%%%%%%%%%%%%%%%%%%%%%%%%%%%%%%%%%%%%%%%%%%%%%%%%%%%%%%%%%%%%%%%%%%%%%%%%%%%%%%%%%
%%%%%%%%%%%%%%%%%%%%%%%%%%%%%%%%%%%%%%%%%%%%%%%%%%%%%%%%%%%%%%%%%%%%%%%%%%%%%%%%%%%%%%%%%%%%%%%%%%%%%%%%%%%%%

\begin{document}
\selectlanguage{english}
\title{Strongly Consistent Multivariate Conditional Risk Measures}
\author{Hannes Hoffmann\thanks{Department of Mathematics, University of Munich, Theresienstra{\ss}e 39, 80333 Munich, Germany. Emails: hannes.hoffmann@math.lmu.de, meyer-brandis@math.lmu.de and gregor.svindland@math.lmu.de.}\and
Thilo Meyer-Brandis\footnotemark[1]\and
Gregor Svindland\footnotemark[1]}

\maketitle
\begin{abstract}
We consider families of strongly consistent multivariate conditional risk measures. We show that under strong consistency these families admit a decomposition into a conditional aggregation function and a univariate conditional risk measure as introduced \cite{Hoffmann2016}. Further, in analogy to the univariate case in \cite{Follmer2014}, we prove that under law-invariance strong consistency implies that  multivariate conditional risk measures are necessarily multivariate conditional certainty equivalents. 
\\[4mm]\noindent\textbf{Keywords: }multivariate risk measures, systemic risk measures, strong consistency, systemic risk, law-invariance, conditional certainty equivalents. 
\\[4mm]\textbf{MSC 2010 classifications:} 91B30, 91G99
\end{abstract}

\section{Introduction}
Over the recent years the study of \emph{multivariate risk measures} 
\begin{equation}\label{eq:MRM}
\rho: \Linf{\CF}{d}\to\mathbb{R},
\end{equation}
that associate a risk level $\rho(X)$ to a $d$-dimensional vector $X=(X_1,...,X_d)$ of random risk factors at a given future time horizon $T$ has increasingly gained importance. Here, $\Linf{\CF}{d}$ denotes the space of $d$-dimensional bounded random vectors on a probability space $(\Omega,\CF,\PW)$, i.e.\ we restrict the analysis to bounded risk factors $X$ for technical simplicity. 

%In this work we will focus on multivariate risk measures which are of the form
%\begin{equation}\label{eq:aim1}
%	\rho(X)=\eta\left(\Lambda(X)\right),
%\end{equation}
%where one first aggregates the $d$-dimensional risk factors $X$ into a univariate risk by some aggregation function $\Lambda: \mathbb{R}^d \rightarrow  \mathbb{R}$, and then applies a univariate risk measure $\eta:\Linf{\CF}{}\to\mathbb{R}$.

A natural extension of the static viewpoint of deterministic risk measurement in \eqref{eq:MRM} is to consider \emph{conditional risk measures} which allow for risk measurement under varying information. A conditional multivariate risk measure is a map
\begin{equation}\label{eq:condrm}\rho_\CG:\Linf{\CF}{d}\to\Linf{\CG}{},\end{equation} 
that associates to a $d$-dimensional risk factor a $\CG$-measur\-able bounded random variable, where $\CG\subseteq\CF$ is a sub-$\sigma$-algebra. We interpret $\rho_\CG(X)$ as the risk of $X$ given the information $\CG$. In the present literature, conditional risk measures have mostly been studied within the framework of univariate \emph{dynamic} risk measures, where one adjusts the risk measurement in response to the flow of information that is revealed when time elapses. For a good overview on univariate dynamic risk measures we refer the reader to \cite{Acciaio2011} or \cite{Tutsch2007}. One possible motivation to study conditional multivariate risk measures is thus the extension from univariate to multivariate dynamic risk measures, and to study the question of what happens to the risk of a system as new information arises in the course of time. In the context of multivariate risk measures, however, also a second interesting and important dimension of conditioning arises, besides dynamic conditioning: Risk measurement conditional on information in space in order to identify systemic relevant structures. In that case $\CG$ represents for example information on the state of a subsystem, and one is interested in questions of the type: How is the overall risk of the system affected, given that a subsystem is in distress? Or how is the risk of a single institution affected, given the entire system is in distress? In \cite{Follmer2014} and \cite{Follmer2014a} the authors analyze such spatial conditioning in the context of univariate conditional risk measures, so-called spatial risk measures. Another field of application where these questions are important are the systemic risk measures, which measure the risk of a financial network. In particular the systemic risk measures CoVaR of \cite{Adrian2011} or the systemic expected shortfall of \cite{Acharya2010} can be considered to be examples of conditional multivariate risk measures.

When dealing with families of conditional risk measures, a frequently imposed requirement is that the conditional risk measurement behaves consistent in a certain way with respect to the flow of information. In particular, in the literature on univariate dynamic risk measures most often the so-called \emph{strong consistency} is studied; c.f.\ \cite{Detlefsen2005,Cheridito2006,Cheridito2011,Kupper2009,Penner2007}. Two univariate conditional risk measures $\rho_\CG$ and $\rho_\CH$ with corresponding $\sigma$-algebras $\CG\subseteq\CH\subseteq\CF$ are called strongly consistent if for all $X,Y\in\Linf{\CF}{}$
\begin{equation}\label{eq:intro:strongconst}
	\rho_\CH(X)\leq \rho_\CH(Y)\Longrightarrow \rho_\CG(X)\leq \rho_\CG(Y),
\end{equation}
i.e.\ strong consistency states that if $Y$ is riskier than $X$ given the information $\CH$, then this risk preference also holds under less information. 

The purpose of this paper is to study the concept of strong consistency for multivariate conditional risk measures. Note that the motivation and interpretation of strong consistency in \eqref{eq:intro:strongconst} remains perfectly meaningful when extending to the multivariate case. In analogy to the univariate case % which we  For this purpose we first of all need to extend the concept and characterization of strong consistency, which to our knowledge has only been studied in the univariate case so far, to multivariate conditional risk measures. To this end, 
we thus define strong consistency of two multivariate conditional risk measures $\rho_\CG$ and $\rho_\CH$ with $\CG\subseteq\CH\subseteq\CF$ as in \eqref{eq:intro:strongconst} for any $d$-dimensional risk vectors $X$ and $Y$ in $\Linf{\CF}{d}$.  %, i.e.~the interpretation of strong consistency for multivariate risk measures is in exact analogy to the univariate case.
As a first main result we then prove that the members of any family of strongly consistent multivariate conditional risk measures are necessarily of the following from:  \begin{equation}\label{eq:aim2}
\rho_\CG(X)=\eta_\CG \left(\Lambda_\CG(X)\right),
\end{equation}
where $\eta_\CG:\Linf{\CF}{}\to\Linf{\CG}{}$ is a univariate conditional risk measure, and $\Lambda_\CG:\Linf{\CF}{d}\to\Linf{\CF}{}$ is a (conditional) aggregation function. This subclass of multivariate conditional risk measures corresponds to the idea that we first aggregate the risk factors $X$ and then evaluate the risk of the aggregated values. In fact many prominent examples of multivariate conditional risk measures are of type \eqref{eq:aim2}, for instance the Contagion Index of \cite{Cont2013} or the SystRisk of \cite{Brunnermeier2013}  from the systemic risk literature. \cite{Chen2013} were the first to axiomatically describe this intuitive type of multivariate risk measures on a finite state space, and in \cite{Kromer2016} this has been extended to general $L^p$-spaces, whereas the conditional framework was studied in \cite{Hoffmann2016}.
We also remark that in \cite{Kromer2014} the authors study consistency of risk measures over time which can be decomposed as in \eqref{eq:aim2}. However, their definition of consistency differs from ours in \eqref{eq:intro:strongconst} as they require consistency of the underlying univariate risk measure and the aggregation function in \eqref{eq:aim2} simultaneously.

A requirement on the strongly consistent family of multivariate conditional risk measures we ask for here---which is automatically satisfied in the univariate case---is that it contains a terminal risk measure $\rho_\CF:\Linf{\CF}{d}\to \Linf{\CF}{}$ under full information $\CF$. Such a terminal risk measure is nothing but a statewise aggregation rule for the components of a risk $X\in \Linf{\CF}{d}$. In the univariate case, if $X\in \Linf{\CF}{}$, there is of course no aggregation necessary. Indeed letting the terminal risk measure correspond to the identity mapping, i.e.\ $\rho_\CF=-\operatorname{id}$, we have that any univariate risk measure $\rho_\CG$ with $\CG\subseteq \CF$ is strongly consistent with $\rho_\CF$ by monotonicity, so the existence of such a terminal risk measure which is strongly consistent with the other risk measures of the family is no further restriction. In the truly multivariate case, however, it is 
very natural that also under full information there is a rule for aggregating risk over the dimensions, and the risk measures in the family should be consistent with this terminal aggregation rule. If this is the case, we show, as already mentioned, that the members of the family are necessarily of type \eqref{eq:aim2}. Indeed we show that by strong consistency the risk measures inherit a property called \emph{risk-antitonicity} in \cite{Hoffmann2016} from the terminal risk measure. This property  is the essential axiom behind allowing for a decomposition of type \eqref{eq:aim2}; see Theorem~\ref{l:riskanti2}.

Along the path to this result we characterize strong consistency in terms of a tower property. It is well-known, see e.g.\ \cite{Tutsch2007}, that for univariate conditional risk measures which are normalized on constants ($\eta_\CG(a)=-a$ for all $a\in L^\infty(\CG)$), strong consistency \eqref{eq:intro:strongconst} is equivalent to the following tower property:
\begin{equation}\label{eq:intro:tower}
\rho_\CG(X)=\rho_\CG\big(-\rho_\CH(X)\big)\text{ for all }X\in\Linf{\CF}{}.
\end{equation}
The recursive formulation \eqref{eq:intro:tower} is often more useful than \eqref{eq:intro:strongconst} when analyzing strong consistency.  The formulation \eqref{eq:intro:tower}, however, cannot be extended in a straight forward manner to the multivariate case. Firstly, note that \eqref{eq:intro:tower} is not even well-defined in the multivariate case since $\rho_\CH(X)$ is not a $d$-dimensional random vector but a random number. Secondly, also in the univariate case the equivalence  $\eqref{eq:intro:strongconst} \Leftrightarrow \eqref{eq:intro:tower}$ only holds for risk measures that are normalized on constants, which in the {\em monetary} univariate case is implied up to a normalization by requiring that this class of risk measures satisfy cash-additivity ($\eta_\CG(X+a)=\eta_\CG(X)-a$). For multivariate risk measures there is neither a canonical extension of the concept of cash-additivity nor is it clear that such a property is desirable at all. In a first step we therefore derive a generalization of 
the 
recursive formulation \eqref{eq:intro:tower} of strong consistency for not necessarily cash-additive multivariate risk measures. Indeed, under some typical regularity assumptions, one of our first results is that two multivariate conditional risk measures $\rho_\CG$ and $\rho_\CH$  with $\CG\subseteq\CH\subseteq\CF$ are strongly consistent if and only if for all $X\in\Linf{\CF}{d}$
\begin{equation}\label{eq:intro:MultRecur}
\rho_\CG(X)=\rho_\CG\big(f_{\rho_\CH}^{-1}(\rho_\CH(X))\vecone\big),
\end{equation}
where $\vecone$ is a $d$-dimensional vector with all entries equal to $1$, and $f_{\rho_\CH}^{-1}$ is the (well-defined) inverse of the function $f_{\rho_\CH}$ associated to $\rho_\CH$ given by 
\begin{equation}\label{eq:function}
f_{\rho_\CH}:\Linf{\CH}{}\to\Linf{\CH}{}; \alpha\mapsto\rho_\CH(\alpha\vecone).
\end{equation} 
The map $f_{\rho_\CH}$ describes the risk of a system where each component is equipped with the same amount of ($\CH$-constant) cash $\alpha$. Note that if $\rho_\CH$ is a univariate risk measure that is normalized on constants then $f_{\rho_\CH}=-\id$ is minus the identity map and \eqref{eq:intro:MultRecur} reduces to \eqref{eq:intro:tower}. In this sense, for a multivariate risk measure $\rho_\CH$ the generalization of the normalization on constants property that is suited for our purposes is the requirement $f_{\rho_\CH}=-\id$. Further, we remark that one can always "normalize" a given conditional risk measure $\rho_\CH$ by putting 
\begin{equation}\label{eq:normalization}
\rhoo_\CH(X):=-f_{\rho_\CH}^{-1}\circ\rho_\CH(X).
\end{equation} 
Then  $\rhoo_\CH$ is a multivariate conditional risk measure with $f_{\rhoo_\CH}=-\id$.

After studying strong consistency for general families of multivariate conditional risk measures, we move on to give a characterization of strongly consistent multivariate conditional risk measures which are also conditionally law-invariant. In contrast to before we do not require consistency with respect to a risk measure under full information, but with respect to the initial risk measure given the trivial information $\{\emptyset,\WR\}$.
These studies were triggered by the results obtained in \cite{Follmer2014} for univariate risk measures, where it is shown that the only family of univariate, strongly consistent, conditional, cash-additive, convex risk measures is the family of conditional entropic risk measures, i.e.~the conditional risk measures are conditional certainty equivalents of the form
$$\rho_\CH(X)=-u^{-1}\left(\BEW{\PW}{u(X)}{\CH}\right), \quad X\in\Linf{\CF}{}, $$
with deterministic utility function $u(x)=a+be^{\beta x}$ or $u(x)=a+bx,$ where $a\in\R$ and $b,\beta>0$ are constants. We also remark that \cite{Kupper2009} showed this characterization for the case of dynamic risk measures by an alternative proof. 
In the multivariate case we will see that every strongly consistent family of multivariate conditionally law-invariant conditional risk measures consists of risk measures of type
\begin{equation}\label{eq:StochCertEq}
\rho_\CH(X)=f_{\rho_\CH}\left(f_{u}^{-1}\big(\BEW{\PW}{u(X)}{\CH}\big)\right),\quad X\in\Linf{\CF}{d},
\end{equation} where $u:\R^d\to \R$ is a multivariate utility function and $f_u(x):=u(x\vecone)$, $x\in \R$.
In other words they can be decomposed into the function $f_{\rho_\CH}$ in \eqref{eq:function} applied to a multivariate conditional certainty equivalent $\left(f_{u}^{-1}\big(\BEW{\PW}{u(X)}{\CH}\big)\right)$. For the study of univariate conditional certainty equivalents and their dynamic behavior we refer the interested reader to \cite{Frittelli2011b}. Moreover, we will derive the decomposition \eqref{eq:aim2} from  \eqref{eq:StochCertEq}, i.e.\ in terms of $u$ and $f_u$.

\paragraph{Structure of the paper}~

\smallskip\noindent
In Section 2 we introduce our notation and multivariate conditional risk measures. Moreover, we give the definition and some auxiliary results for the function $f_{\rho_\CH}$ mentioned in \eqref{eq:function}. In Sections~3 and 4 we prove our main results outlined above for two strongly consistent conditional risk measures, where the law-invariant case is studied in Section~4. Throughout Section 5 we extend these results to families of multivariate conditional risk measures.   

\section{Definitions and basic results}
Throughout this paper $(\WR,\CF,\PW)$ is a probability space. 
For $d\in\N$ we denote by $\Linf{\CF}{d}:=\{X=(X_1,...,X_d):X_i\in L^\infty(\WR,\CF,\PW)\;\forall i\}$ the space of equivalence classes of $\CF$-measurable, $\PW$-almost surely (a.s.) bounded random vectors. It is a Banach space when equipped with the norm $\|X\|_{d,\infty}:=\max_{i=1,\ldots,d}\supnorm{X_i}$ where $\|F\|_\infty:=\operatorname{esssup} |F|$ is the supremum norm for  $F\in L^\infty(\WR,\CF,\PW)$. We will use the usual componentwise orderings on $\R^d$ and $\Linf{\CF}{d}$, i.e.\ $x=(x_1,\ldots,x_d)\geq y=(y_1,\ldots,y_d)$ for $x,y\in \R^d$ if and only if $x_i\geq y_i$ for all $i=1,\ldots, d$, and similarly $\eqX\geq\eqY$ if and only if $\eqX_i\geq \eqY_i$ $\PW$-a.s.\ for all $i=1,...,d$.
Furthermore $\vecone$ and $\veczero$ denote the $d$-dimensional vectors whose entries are all equal to 1 or all equal to 0, respectively. %In the following we define what we will call a multivariate conditional risk measure throughout this paper.

\begin{Definition}\th\label{def:rm}
Let $\CG\subseteq\CF$. A conditional risk measure (CRM) is a function 
$$\rho_\CG:\Linf{\CF}{d}\to\Linf{\CG}{},$$
possessing the following properties:
\begin{itemize}
\item[i)] There exists a position with zero risk, i.e.\  $0\in\Imag \rho_\CG$.
\item[ii)] {\bf Strict Antitonicity:} $X\geq Y$ and $\PW(X>Y)>0$ implies  $\rho_\CG(X)\leq\rho_\CG(Y)$ and $\PW\big(\rho_\CG(X)<\rho_\CG(Y)\big)>0$.
\item[iii)] {\bf $\CG$-Locality:} For all $A\in\CG$ we have $\rho_\CG(X\ind_A+Y\ind_{A^C})=\rho_\CG(X)\ind_A+\rho_\CG(Y)\ind_{A^C}$.
\item[iv)] {\bf Lebesgue property:} If $(X_n)_{n\in\N}\subset \Linf{\CF}{d}$ is a $\|\cdot\|_{d,\infty}$-bounded sequence such that $X_n\to X$ $\PW$-a.s., then
$$\rho_\CG(X)=\lim_{n\to\infty}\rho_\CG(X_n)\quad\PW\text{-a.s.}$$
\end{itemize} 
\end{Definition}

We remark that the properties in \th\ref{def:rm} are standard in the literature on conditional risk measures.
Note that strict antitonicity is sometimes also referred to as strong sensitivity in the literature.
In order to stress the dimension we often use the term univariate conditional risk measure for a conditional risk measure as defined in Definition~\ref{def:rm} with $d=1$ and we typically denote it by $\eta_\CG$. For $d>1$ the risk measure $\rho_\CG$ of  Definition~\ref{def:rm} is called multivariate conditional risk measure.

%\begin{Definition}
%A family of CRMs $(\rho_\CG)_{\CG\in\CE}$ consists of a set $\CE$ of sub-$\sigma$-algebras of %$\cal{F}$ and of a CRM $\rho_\CG:\Linf{\CF}{d}\to\Linf{\CG}{}$ for each $\CG\in \CE$. Moreover, %we assume that all families are normalized in the following sense: For all $\CG,%\widetilde{\cal{G}}\in \CE$ we have $\rho_\CG(\veczero)=\rho_{\widetilde{\CG}}(\veczero)$.
%\begin{description}
%	\item[Standardized:] There exists a $\gamma$ which is bounded and measurable for all %$\CG\in\CE$ such that $\rho_\CG(\veczero)=\gamma,\,\forall \CG\in\CE$.
%\end{description}
%In addition, we will say that the CRM $\rho_\CG$ is 
%\end{Definition} 

%
%\begin{Example}\th\label{e:0}
%Let $\Lambda:\Linf{\CF}{d}\to\Linf{\CF}{}$ and $\eta_\CG:\Linf{\CF}{}\to\Linf{\CG}{}$ be such that $-\Lambda$ and $\eta_\CG$ are strictly antitone, local and have the Lebesgue property. Then the composition $\eta_\CG\circ\Lambda$ is a CRM. 
%\end{Example}
%
%Later on we will see that basically every pair of multivariate conditional risk measures which is coupled by strong consistency can be decomposed as in \thref{e:0}.  \\
A standard assumption on univariate CRMs is cash-additivity, i.e.\ $\eta_\CG(X+\alpha)=\eta_\CG(X)-\alpha$ for all $\alpha\in L^\infty(\CG)$, which in particular implies that we postulate a certain behavior of the risk measure $\eta_\CG$ on ($\CG$)-constants $\alpha\in L^\infty(\CG)$ which turns out to be helpful in the study of consistency. Since we do not require this property - given that a multivariate analogue is tricky to define and probably not reasonable to ask for - %it can be easily seen that every deterministic, continuous and strictly increasing %transformation of a CRM is again a CRM. Thus 
we will have to extract the behavior of a CRM on constants in the following way.

\begin{Definition}\th\label{def:f} 
For every CRM $\rho_\CG$ we introduce the function
$$f_{\rho_\CG}:\Linf{\CG}{}\to\Linf{\CG}{};\alpha\mapsto\rho_\CG(\alpha\vecone)$$
and the corresponding inverse function 
$$f^{-1}_{\rho_\CG}:\Imag f_{\rho_\CG}\to\Linf{\CG}{};\beta\mapsto \alpha\text{ such that }f_{\rho_\CG}(\alpha)=\beta.$$
\end{Definition}

\begin{Remark}\th\label{r:1}
Note that the strict antitonicity of $\rho_\CG$ implies that the inverse function $f_{\rho_\CG}^{-1}$ in \thref{def:f} is well-defined. Indeed let $\beta\in\Imag f_{\rho_\CG}$ and $\alpha_1,\alpha_2\in\Linf{\CG}{}$ such that $f_{\rho_\CG}(\alpha_1)=\beta=f_{\rho_\CG}(\alpha_2)$. Suppose that $\PW(A)>0$ where $A:=\{\alpha_1>\alpha_2\}\in\CG$. Then by strict antitonicity and $\CG$-locality we obtain that
\begin{align*}
\beta\ind_A+\rho_\CG(\veczero)\ind_{A^C}&=\rho_\CG(\alpha_1\vecone)\ind_A+\rho_\CG(\veczero)\ind_{A^C}=\rho_\CG(\alpha_1\vecone\ind_A)\\
&\leq\rho_\CG(\alpha_2\vecone\ind_A)=\rho_\CG(\alpha_2\vecone)\ind_A+\rho_\CG(\veczero)\ind_{A^C}\\
&=\beta\ind_A+\rho_\CG(\veczero)\ind_{A^C},
\end{align*}
and the inequality is strict with positive probability which is a contradiction. Thus we have that $\PW(\alpha_1>\alpha_2)=0$. The same argument for $\{\alpha_1<\alpha_2\}$ yields  $\alpha_1=\alpha_2$ $\PW$-a.s.
\end{Remark}

Next we will show that properties of $\rho_\CG$ transfer to $f_{\rho_\CG}$ and $f^{-1}_{\rho_\CG}$. Since the domain of $f_{\rho_\CG}^{-1}$ might be only a subset of $\Linf{\CG}{}$, we need to adapt the definition of the Lebesgue property for $f^{-1}_{\rho_\CG}$ in the following way:
If $(\beta_n)_{n\in\N}\subset \Imag f_{\rho_\CG}$ is a sequence which is lower- and upper-bounded by some $\underline{\beta},\overline{\beta}\in\Imag f_{\rho_\CG}$, i.e.\ $\underline{\beta}\leq\beta_n\leq\overline\beta$ for all $n\in\N$, and such that $\beta_n\to\beta$ $\PW$-a.s., then $f_{\rho_\CG}^{-1}(\beta_n)\to f_{\rho_\CG}^{-1}(\beta)$ $\PW$-a.s.
Note that this alternative definition of the Lebesgue property is equivalent to  \thref{def:rm}~(iv) if the domain is $\Linf{\CG}{}$. The properties 'strict antitonicity' and 'locality' of $f_{\rho_\CG}$ or $f^{-1}_{\rho_\CG}$ are defined analogous to \thref{def:rm}~(ii) and (iii).

\begin{Lemma}\th\label{l:-1}
Let $f_{\rho_\CG}$ and $f^{-1}_{\rho_\CG}$ be as in \thref{def:f}. Then $f_{\rho_\CG}$ and  $f^{-1}_{\rho_\CG}$ are strictly antitone, $\CG$-local and fulfill the Lebesgue property.
\end{Lemma}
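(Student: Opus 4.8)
The plan is to prove the three assertions for $f_{\rho_\CG}$ first and then transfer each of them to $f_{\rho_\CG}^{-1}$, exploiting the fact that $f_{\rho_\CG}^{-1}$ is a genuine bijection from $\Imag f_{\rho_\CG}$ onto $\Linf{\CG}{}$ by \thref{r:1}. For $f_{\rho_\CG}$ itself, all three properties are essentially inherited pointwise from $\rho_\CG$ by plugging in constant vectors. For strict antitonicity: if $\alpha_1,\alpha_2\in\Linf{\CG}{}$ with $\alpha_1\geq\alpha_2$ and $\PW(\alpha_1>\alpha_2)>0$, then $\alpha_1\vecone\geq\alpha_2\vecone$ with $\PW(\alpha_1\vecone>\alpha_2\vecone)>0$, so \thref{def:rm}~(ii) gives $f_{\rho_\CG}(\alpha_1)=\rho_\CG(\alpha_1\vecone)\leq\rho_\CG(\alpha_2\vecone)=f_{\rho_\CG}(\alpha_2)$ with strict inequality on a set of positive probability. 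For $\CG$-locality: for $A\in\CG$ and $\alpha,\beta\in\Linf{\CG}{}$ one has $(\alpha\ind_A+\beta\ind_{A^C})\vecone=(\alpha\vecone)\ind_A+(\beta\vecone)\ind_{A^C}$, and applying \thref{def:rm}~(iii) to $\rho_\CG$ yields the claim. For the Lebesgue property: if $\alpha_n\to\alpha$ $\PW$-a.s.\ with $(\alpha_n)$ norm-bounded, then $\alpha_n\vecone\to\alpha\vecone$ $\PW$-a.s.\ and $(\alpha_n\vecone)$ is $\|\cdot\|_{d,\infty}$-bounded, so \thref{def:rm}~(iv) applies.

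Next I would transfer to $f_{\rho_\CG}^{-1}$. Strict antitonicity of $f_{\rho_\CG}^{-1}$ follows by a contraposition/inversion argument: let $\beta_1,\beta_2\in\Imag f_{\rho_\CG}$ with $\beta_1\geq\beta_2$ and $\PW(\beta_1>\beta_2)>0$, and set $\alpha_i:=f_{\rho_\CG}^{-1}(\beta_i)$. On $A:=\{\alpha_1>\alpha_2\}\in\CG$ strict antitonicity of $f_{\rho_\CG}$ together with $\CG$-locality would force $\beta_1\ind_A\leq\beta_2\ind_A$ with strict inequality on a positive-probability subset of $A$, contradicting $\beta_1\geq\beta_2$; hence $\PW(\alpha_1>\alpha_2)=0$, i.e.\ $\alpha_1\leq\alpha_2$. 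To get that the inequality $\alpha_1\le\alpha_2$ is strict with positive probability, observe that if $\PW(\alpha_1<\alpha_2)=0$ as well then $\alpha_1=\alpha_2$, whence $\beta_1=f_{\rho_\CG}(\alpha_1)=f_{\rho_\CG}(\alpha_2)=\beta_2$, contradicting $\PW(\beta_1>\beta_2)>0$. The $\CG$-locality of $f_{\rho_\CG}^{-1}$ follows from that of $f_{\rho_\CG}$ by applying $f_{\rho_\CG}$ to both sides and using injectivity: for $A\in\CG$ and $\beta_1,\beta_2\in\Imag f_{\rho_\CG}$, the element $\alpha:=f_{\rho_\CG}^{-1}(\beta_1)\ind_A+f_{\rho_\CG}^{-1}(\beta_2)\ind_{A^C}$ satisfies $f_{\rho_\CG}(\alpha)=f_{\rho_\CG}(f_{\rho_\CG}^{-1}(\beta_1))\ind_A+f_{\rho_\CG}(f_{\rho_\CG}^{-1}(\beta_2))\ind_{A^C}=\beta_1\ind_A+\beta_2\ind_{A^C}\in\Imag f_{\rho_\CG}$, so $\alpha=f_{\rho_\CG}^{-1}(\beta_1\ind_A+\beta_2\ind_{A^C})$, which is exactly locality of $f_{\rho_\CG}^{-1}$.

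For the (adapted) Lebesgue property of $f_{\rho_\CG}^{-1}$, let $(\beta_n)\subset\Imag f_{\rho_\CG}$ with $\underline\beta\leq\beta_n\leq\overline\beta$ for some $\underline\beta,\overline\beta\in\Imag f_{\rho_\CG}$ and $\beta_n\to\beta$ $\PW$-a.s. Put $\alpha_n:=f_{\rho_\CG}^{-1}(\beta_n)$ and $\underline\alpha:=f_{\rho_\CG}^{-1}(\underline\beta)$, $\overline\alpha:=f_{\rho_\CG}^{-1}(\overline\beta)$. By the already-established antitonicity of $f_{\rho_\CG}^{-1}$ we get $\overline\alpha\leq\alpha_n\leq\underline\alpha$, so $(\alpha_n)$ is an order-bounded, hence $\|\cdot\|_\infty$-bounded, sequence in $\Linf{\CG}{}$. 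The main obstacle is that order boundedness does not by itself give a.s.\ convergence, so I would argue via subsequences: it suffices to show every subsequence of $(\alpha_n)$ has a further subsequence converging a.s.\ to $f_{\rho_\CG}^{-1}(\beta)$. Along any subsequence, pass to a further subsequence $(\alpha_{n_k})$ that converges $\PW$-a.s.\ (this is the delicate point — one route is to note that an order-bounded sequence in $L^0(\CG)$ has a subsequence converging a.s., e.g.\ via the Komlós-type / diagonal extraction through convergence in measure guaranteed by dominated convergence on a bounded sequence, or by invoking that $L^\infty$ bounded sequences are weakly sequentially precompact and extracting an a.s.-convergent one). Call the limit $\alpha_\infty$; then $\overline\alpha\le\alpha_\infty\le\underline\alpha$, and applying the Lebesgue property of $f_{\rho_\CG}$ to the bounded a.s.-convergent sequence $(\alpha_{n_k})$ gives $f_{\rho_\CG}(\alpha_\infty)=\lim_k f_{\rho_\CG}(\alpha_{n_k})=\lim_k\beta_{n_k}=\beta$ $\PW$-a.s., so $\alpha_\infty=f_{\rho_\CG}^{-1}(\beta)$ by injectivity. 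Since the a.s.\ limit is the same along every such sub-subsequence, the whole sequence $(\alpha_n)$ converges $\PW$-a.s.\ to $f_{\rho_\CG}^{-1}(\beta)$, which completes the proof.
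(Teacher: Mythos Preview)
Your treatment of strict antitonicity and $\CG$-locality for both $f_{\rho_\CG}$ and $f_{\rho_\CG}^{-1}$ is correct and essentially identical to the paper's. The gap is in your argument for the Lebesgue property of $f_{\rho_\CG}^{-1}$.

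There are two problems with the sub-subsequence strategy. First, the extraction step does not work: a $\|\cdot\|_\infty$-bounded sequence in $L^\infty(\CG)$ need not have any $\PW$-a.s.\ convergent subsequence. An i.i.d.\ Bernoulli$(1/2)$ sequence is a counterexample: it is bounded by $1$, but by Borel--Cantelli no subsequence converges almost surely. Koml\'os-type results only yield Ces\`aro-convergent subsequences, and weak-$*$ compactness in $L^\infty$ does not entail a.s.\ convergence. Second, even if you could extract such subsequences, the conclusion you draw is too strong: the property ``every subsequence has a further sub-subsequence converging a.s.\ to the same limit'' characterises convergence \emph{in measure}, not almost sure convergence (the typewriter sequence is the standard counterexample). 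So at best your argument would yield $\alpha_n\to f_{\rho_\CG}^{-1}(\beta)$ in measure, which is weaker than required.

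The paper avoids both issues by passing to the monotone envelopes $\beta_n^u:=\sup_{k\geq n}\beta_k$ and $\beta_n^d:=\inf_{k\geq n}\beta_k$. One first shows, using $\CG$-locality and the Lebesgue property of $f_{\rho_\CG}$, that $\beta_n^u,\beta_n^d\in\Imag f_{\rho_\CG}$. Since $\beta_n^u\downarrow\beta$ and $\beta_n^d\uparrow\beta$, antitonicity makes $f_{\rho_\CG}^{-1}(\beta_n^u)$ and $f_{\rho_\CG}^{-1}(\beta_n^d)$ monotone and bounded, hence genuinely a.s.\ convergent. Applying the Lebesgue property of $f_{\rho_\CG}$ to these monotone sequences identifies both limits as $f_{\rho_\CG}^{-1}(\beta)$, and a sandwich argument using $\beta_n^d\leq\beta_n\leq\beta_n^u$ then gives $f_{\rho_\CG}^{-1}(\beta_n)\to f_{\rho_\CG}^{-1}(\beta)$ a.s. The monotonicity is precisely what manufactures the a.s.\ convergence you were unable to extract directly.
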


\begin{proof}
For $f_{\rho_\CG}$ the statement follows immediately from the definition and the corresponding properties of $\rho_\CG$. Concerning the properties of $f^{-1}_{\rho_\CG}$, we start by proving strict antitonicity. Let $\beta_1,\beta_2\in \Imag f_{\rho_\CG}$ such that $\beta_1\geq\beta_2$ and $\PW(\beta_1>\beta_2)>0$. Suppose that $\PW(A)>0$ where $A:=\left\{f^{-1}_{\rho_\CG}(\beta_1)>f^{-1}_{\rho_\CG}(\beta_2)\right\}\in \CG$. Then
\begin{align*}
\beta_1\ind_A+f_{\rho_\CG}(0)\ind_{A^C}&=f_{\rho_\CG}\left(f^{-1}_{\rho_\CG}(\beta_1)\right)\ind_A+f_{\rho_\CG}(0)\ind_{A^C}=f_{\rho_\CG}\left(f^{-1}_{\rho_\CG}(\beta_1)\ind_A\right)\\
&\leq f_{\rho_\CG}\left(f^{-1}_{\rho_\CG}(\beta_2)\ind_A\right)=\beta_2\ind_A+f_{\rho_\CG}(0)\ind_{A^C},
\end{align*}
and the inequality is strict on a set with positive probability since $f_{\rho_\CG}$ is strictly antitone. This of course contradicts  $\beta_1\geq\beta_2$. Hence $f^{-1}_{\rho_\CG}(\beta_1)\leq f^{-1}_{\rho_\CG}(\beta_2)$. Moreover, as 
$$\PW(\beta_1>\beta_2)=\PW\left(f_{\rho_\CG}\left(f^{-1}_{\rho_\CG}(\beta_1)\right)>f_{\rho_\CG}\left(f^{-1}_{\rho_\CG}(\beta_2)\right)\right)>0$$
we must have $f^{-1}_{\rho_\CG}(\beta_1)\neq f^{-1}_{\rho_\CG}(\beta_2)$ with positive probability, i.e.\
$$\PW\left(f^{-1}_{\rho_\CG}(\beta_1)<f^{-1}_{\rho_\CG}(\beta_2)\right)>0.$$
Now we show that $f^{-1}_{\rho_\CG}$ is $\CG$-local. Let $\beta_1,\beta_2\in\Imag f_{\rho_\CG}$ as well as $A\in\CG$ be arbitrary. Further let $\alpha_i=f^{-1}_{\rho_\CG}(\beta_i),i=1,2$, i.e.\ $f_{\rho_\CG}(\alpha_i)=\beta_i$. Then we have that
$$f_{\rho_\CG}(\alpha_1\ind_A+\alpha_2\ind_{A^C})=f_{\rho_\CG}(\alpha_1)\ind_A+f_{\rho_\CG}(\alpha_2)\ind_{A^C}=\beta_1\ind_A+\beta_2\ind_{A^C}.$$
Thus $f^{-1}_{\rho_\CG}(\beta_1\ind_A+\beta_2\ind_{A^C})=\alpha_1\ind_A+\alpha_2\ind_{A^C}.$\\

\noindent Finally for the Lebesgue property let $\underline{\beta},\overline{\beta}\in\Imag f_{\rho_\CG}$ and let $(\beta_n)_{n\in\N}\subset \Imag f_{\rho_\CG}$ be a sequence with $\underline{\beta}\leq\beta_n\leq\overline{\beta}$ for all $n\in\N$ and $\beta_n\to\beta$ $\PW$-a.s.  
Consider the bounded sequences $\beta^u_n:=\sup_{k\geq n}\beta_k$ and $\beta^d_n:=\inf_{k\geq n}\beta_k$, $n\in \N$ which converge monotonically almost surely to $\beta$, i.e.\ $\beta^u_n\downarrow \beta$ $\PW$-a.s.\ and $\beta^d_n\uparrow \beta$ $\PW$-a.s.
Since $\underline\beta\leq\beta_n^u\leq \overline\beta$ for all $n\in\N$ which by antitonicity of $f^{-1}_{\rho_\CG}$ yields $f^{-1}_{\rho_\CG}(\overline\beta)\leq f^{-1}_{\rho_\CG}(\beta_n^u) \leq f^{-1}_{\rho_\CG}(\underline\beta)$, we observe that the sequence $\left(f^{-1}_{\rho_\CG}(\beta_n^u)\right)_{n\in\N}$ is uniformly bounded in $\Linf{\CG}{}$. Note that by the same argumentation also the sequences $\left(f^{-1}_{\rho_\CG}(\beta_n^d)\right)_{n\in\N}$ and $\left(f^{-1}_{\rho_\CG}(\beta_n)\right)_{n\in\N}$ are uniformly bounded in $\Linf{\CG}{}$.
 Next we will show that $\beta^u_n\in\Imag f_{\rho_\CG} $ for all $n\in\N$. Fix $n\in\N$ and set recursively
$$A_{n-1}^n:=\{\beta^u_n=\beta\}\quad\text{and}\quad A_{k}^n:=\{\beta^u_n=\beta_k\}\backslash \bigcup_{i=n-1}^{k-1}A_{i}^n,\;k\geq n,$$
then it follows from induction that $A_k^n\in\CG,k\geq{n-1}$.
Since $\sup\left\{\beta,\beta_k:k\geq n \right\}=\max\left\{\beta,\beta_k:k\geq n \right\}$, we have that $\left(\bigcup_{k\geq n-1}A_k^n\right)^C$ is a $\PW$-nullset. It follows from $\CG$-locality and the Lebesgue property of $f_{\rho_\CG}$ that
\begin{align*}
&f_{\rho_\CG}\left(f^{-1}_{\rho_\CG}(\beta)\ind_{A_{n-1}^n}+\sum_{k\geq n}f^{-1}_{\rho_\CG}(\beta_k)\ind_{A_k^n}\right)\\
&\hspace{1cm}=\beta \ind_{A_{n-1}^n}+f_{\rho_\CG}\left(\lim_{m\to\infty}\sum_{k=n}^m f_{\rho_\CG}^{-1}(\beta_k)\ind_{A_k^n}\right)\ind_{\bigcup_{k\geq n}A_{k}^n}\\
&\hspace{1cm}=\beta \ind_{A_{n-1}^n}+\lim_{m\to\infty}\left(\sum_{k=n}^m\beta_k\ind_{A_{k}^n}+f_{\rho_\CG}\left(0\right)\ind_{\bigcup_{k\geq m}A_k^n}\right)\\
	&\hspace{1cm}=\beta \ind_{A_{n-1}^n}+\sum_{k\geq n}\beta_k\ind_{A_{k}^n}=\beta^u_n,
\end{align*}
which implies $\beta_n^u\in\Imag f_{\rho_\CG}$. By a similar argumentation we obtain $\beta_n^d\in \Imag f_{\rho_\CG}$.
Recall that $\beta_n^u\downarrow\beta$ $\PW$-a.s.\ which by antitonicity of $f_{\rho_\CG}^{-1}$ implies that the sequence $\left(f^{-1}_{\rho_\CG}(\beta_n^u)\right)_{n\in\N}$ is isotone and thus $\alpha=\lim_{n\to\infty}f_{\rho_\CG}^{-1}(\beta_n^u)$ exists in $L^\infty(\CG)$. It follows from antitonicity and the Lebesgue property of $f_{\rho_\CG}$ that
$$\beta=\lim_{n\to\infty}\beta_n^u=\lim_{n\to\infty}f_{\rho_\CG}\left(f_{\rho_\CG}^{-1}(\beta_n^u)\right)=f_{\rho_\CG}(\alpha),$$
and hence that indeed $\alpha=f_{\rho_\CG}^{-1}(\beta)$. Analogously, we obtain that $f_{\rho_\CG}(\hat \alpha)=\beta$ for $\hat\alpha=\lim_{n\to\infty}f_{\rho_\CG}^{-1}(\beta_n^d)$, and thus $\hat\alpha=\alpha=f_{\rho_\CG}^{-1}(\beta)$.
Hence, by antitonicity of $f^{-1}_{\rho_\CG}$ 
\begin{align*}f_{\rho_\CG}^{-1}(\beta)&=\lim_{n\to \infty}f_{\rho_\CG}^{-1}(\beta^u_n)\leq \liminf_{n\to \infty}f_{\rho_\CG}^{-1}(\beta_n)\\ &\leq \limsup_{n\to \infty}f_{\rho_\CG}^{-1}(\beta_n)\leq \lim_{n\to \infty}f_{\rho_\CG}^{-1}(\beta^d_n)= f_{\rho_\CG}^{-1}(\beta),
\end{align*}
 so $\lim_{n\to \infty}f_{\rho_\CG}^{-1}(\beta_n)= f_{\rho_\CG}^{-1}(\beta)$, i.e.\ $f^{-1}_{\rho_\CG}$ has the Lebesgue property.
\end{proof}

\noindent An important observation that will be needed later on is that the domain of $ f^{-1}_{\rho_\CG}$ is equal to the image of $\rho_\CG$, i.e.\ $ f^{-1}_{\rho_\CG}(\rho_\CG(X))$ is well-defined for all $X\in\Linf{\CF}{d}$.

\begin{Lemma}\th\label{l:0}
For a CRM $\rho_\CG:\Linf{\CF}{d}\to \Linf{\CG}{}$ it holds that
$$\rho_\CG(\Linf{\CF}{d})= f_{\rho_\CG}(\Linf{\CG}{}).$$
\end{Lemma}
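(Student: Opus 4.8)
The plan is to prove the two inclusions separately. The inclusion $f_{\rho_\CG}(\Linf{\CG}{})\subseteq\rho_\CG(\Linf{\CF}{d})$ is immediate: for $\alpha\in\Linf{\CG}{}$ one has $\alpha\vecone\in\Linf{\CF}{d}$ and, by \thref{def:f}, $f_{\rho_\CG}(\alpha)=\rho_\CG(\alpha\vecone)\in\rho_\CG(\Linf{\CF}{d})$. For the reverse inclusion, fix $X\in\Linf{\CF}{d}$, set $Y:=\rho_\CG(X)\in\Linf{\CG}{}$, and pick $M>0$ with $\|X\|_{d,\infty}\leq M$, so that $-M\vecone\leq X\leq M\vecone$. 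Antitonicity of $\rho_\CG$ (which follows from strict antitonicity) gives $f_{\rho_\CG}(M)\leq Y\leq f_{\rho_\CG}(-M)$ $\PW$-a.s. It thus suffices to exhibit a $\CG$-measurable $\alpha$ with $-M\leq\alpha\leq M$ and $f_{\rho_\CG}(\alpha)=Y$, since then $Y=\rho_\CG(\alpha\vecone)\in\rho_\CG(\Linf{\CF}{d})$.

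Such an $\alpha$ I would construct by dyadic bisection. Build sequences $(\alpha_n^-)_n,(\alpha_n^+)_n\subset\Linf{\CG}{}$ with $\alpha_0^\pm=\pm M$, maintaining the invariants $\alpha_n^-\leq\alpha_n^+$, $\alpha_n^+-\alpha_n^-=2M/2^n$, and $f_{\rho_\CG}(\alpha_n^+)\leq Y\leq f_{\rho_\CG}(\alpha_n^-)$ $\PW$-a.s. Given the $n$-th pair, set $m_n:=(\alpha_n^-+\alpha_n^+)/2$ and $B_n:=\{f_{\rho_\CG}(m_n)\geq Y\}\in\CG$ (this set lies in $\CG$ since $f_{\rho_\CG}(m_n),Y\in\Linf{\CG}{}$), and put
$$\alpha_{n+1}^-:=m_n\ind_{B_n}+\alpha_n^-\ind_{B_n^C},\qquad\alpha_{n+1}^+:=\alpha_n^+\ind_{B_n}+m_n\ind_{B_n^C}.$$
By the $\CG$-locality of $f_{\rho_\CG}$ from \thref{l:-1} one checks the invariants persist: for instance $f_{\rho_\CG}(\alpha_{n+1}^-)=f_{\rho_\CG}(m_n)\ind_{B_n}+f_{\rho_\CG}(\alpha_n^-)\ind_{B_n^C}$, which is $\geq Y$ on $B_n$ by definition of $B_n$ and $\geq Y$ on $B_n^C$ by the induction hypothesis; symmetrically $f_{\rho_\CG}(\alpha_{n+1}^+)\leq Y$ a.s., while $\alpha_{n+1}^+-\alpha_{n+1}^-=(\alpha_n^+-\alpha_n^-)/2=2M/2^{n+1}$ and $\alpha_n^-\leq\alpha_{n+1}^-\leq\alpha_{n+1}^+\leq\alpha_n^+$.

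Hence $(\alpha_n^-)_n$ is isotone, $(\alpha_n^+)_n$ is antitone, both are bounded by $M$ in absolute value, and $\alpha_n^+-\alpha_n^-\to 0$; so both converge $\PW$-a.s.\ to a common $\CG$-measurable limit $\alpha\in\Linf{\CG}{}$. Applying the Lebesgue property of $f_{\rho_\CG}$ from \thref{l:-1} to the uniformly bounded, a.s.-convergent sequences $(\alpha_n^\pm)_n$ and passing to the limit in $f_{\rho_\CG}(\alpha_n^+)\leq Y\leq f_{\rho_\CG}(\alpha_n^-)$ yields $f_{\rho_\CG}(\alpha)\leq Y\leq f_{\rho_\CG}(\alpha)$, that is $Y=f_{\rho_\CG}(\alpha)$, which completes the argument.

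The bookkeeping (preservation of the invariants, measurability of the $B_n$, monotonicity of the two sequences) is routine; the genuine structural input is the Lebesgue property of $f_{\rho_\CG}$ used in the final limit step, without which the bisection would only squeeze the $f_{\rho_\CG}$-values around $Y$ in a limiting sense. The one point to be careful about is arranging the bisection so that the gap $\alpha_n^+-\alpha_n^-$ stays the \emph{deterministic} constant $2M/2^n$ — this is exactly what forces a.s.\ convergence of both sequences to one and the same limit — and it works precisely because at each stage one retains exactly one of the two equal halves $[\alpha_n^-,m_n]$ or $[m_n,\alpha_n^+]$.
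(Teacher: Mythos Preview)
Your proof is correct. The invariants are properly maintained, the measurability of each $B_n$ and each $\alpha_n^\pm$ is clear by induction, and the Lebesgue property of $f_{\rho_\CG}$ (from \thref{l:-1}) legitimately closes the argument in the limit.

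Your route, however, differs from the paper's. The paper fixes $X$, considers the set $P=\{\alpha\in\Linf{\CG}{}:f_{\rho_\CG}(\alpha)\geq\rho_\CG(X)\}$, shows it is nonempty, bounded above, and upwards directed (via $\CG$-locality), and takes $\alpha^\ast=\esssup P$; the Lebesgue property then gives $\alpha^\ast\in P$, and a short contradiction argument (again using the Lebesgue property to pass from $\{f_{\rho_\CG}(\alpha^\ast)>\rho_\CG(X)\}$ to some $\{f_{\rho_\CG}(\alpha^\ast+1/n)>\rho_\CG(X)\}$) rules out strict inequality. Your bisection is more constructive and avoids the appeal to the esssup machinery for upwards directed families, at the cost of a little more bookkeeping; the paper's argument is shorter and identifies $\alpha^\ast$ intrinsically as the maximal element of a natural order-theoretic set. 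Both proofs ultimately hinge on the same two structural ingredients, $\CG$-locality and the Lebesgue property of $f_{\rho_\CG}$.
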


\begin{proof}
Clearly, $\rho_\CG(\Linf{\CF}{d})\supseteq f_{\rho_\CG}(\Linf{\CG}{})$.\\
For the reverse inclusion let $X\in\Linf{\CF}{d}$. Our aim is to show that there exists an $\alpha^\ast\in\Linf{\CG}{}$ such that
\begin{equation}\label{eq:prob}\rho_\CG(X)= f_{\rho_\CG}(\alpha^\ast).\end{equation}
Define
$$P:=\left\{\alpha\in\Linf{\CG}{}\,: f_{\rho_\CG}(\alpha) \geq \rho_\CG(X) \right\}.$$
As $-\|X\|_{d,\infty}\vecone\leq X\leq \|X\|_{d,\infty}\vecone$ we have that $-\|X\|_{d,\infty}\in P$, so $P\neq\emptyset$. Moreover, $P$ is bounded from above by  $\|X\|_{d,\infty}$ since if $A:=\{\alpha>\|X\|_{d,\infty}\}$ for $\alpha\in \Linf{\CG}{}$ has positive probability, then by $\CG$-locality and strict antitonicity $$f_{\rho_\CG}(\alpha)\ind_A= f_{\rho_\CG}(\alpha \ind_A)\ind_A\leq f_{\rho_\CG}(\|X\|_{d,\infty}\ind_A)\ind_A=f_{\rho_\CG}(\|X\|_{d,\infty})\ind_A\leq \rho_\CG(X)\ind_A$$ where the first inequality is strict with positive probability, so $\alpha\not\in P$. By $\CG$-locality it also follows that $P$ is upwards directed. Hence, for $$\alpha^\ast:=\esssup P$$ there is a uniformly bounded sequence $(\alpha_n)_{n\in \N}\subset P$ such that $\alpha^\ast=\lim_{n\to \infty}\alpha_n$ $\PW$-a.s.; see \cite{Follmer2011}~Theorem~A.33. Thus it follows that $\alpha^\ast\in\Linf{\CG}{}$ and $$f_{\rho_\CG}(\alpha^\ast)=\lim_{n\to\infty}f_{\rho_\CG}(\alpha_n)\geq \rho_\CG(X),$$
i.e.\ $\alpha^\ast\in P$.
Let $$B:=\{f_{\rho_\CG}(\alpha^\ast)>\rho_\CG(X)\}$$
and note that by the Lebesgue property $$B=\bigcup_{n\in \N}\{f_{\rho_\CG}(\alpha^\ast+1/n)>\rho_\CG(X)\}\quad \mbox{$\PW$-a.s.}$$ Hence, if $\PW(B)>0$ it follows that $\PW(B_n)>0$ for some $B_n:=\{f_{\rho_\CG}(\alpha^\ast+1/n)>\rho_\CG(X)\}$. Note that $B_n\in \CG$ and that $$\alpha^\ast\ind_{B_n^C}+(\alpha^\ast+1/n)\ind_{B_n}\in P$$ by $\CG$-locality of $f_{\rho_\CG}$. But this contradicts the definition of $\alpha^\ast$. Hence, $\PW(B)=0$.
\end{proof}

Sometimes it will be useful to normalize the CRM in the following sense:

\begin{Definition}\th\label{r:3}
	We call a CRM $\rho_\CG:\Linf{\CF}{d}\to L^\infty(\CG) $ \emph{normalized on constants} if 
	$$f_{\rho_\CG}(\alpha)=-\alpha\quad\text{for all }\alpha\in\Linf{\CG}{}.$$
\end{Definition}

Indeed let $\rho_\CG:\Linf{\CF}{d}\to L^\infty(\CG)$ be a CRM and define $\rhoo_\CG:=-f_{\rho_\CG}^{-1}\circ\rho_\CG$. Then $\rhoo_\CG$ is a CRM (\thref{l:-1} and \thref{l:0}) which is normalized in the sense of being normalized on constants as defined above. We call $\rhoo_\CG$ the \emph{normalized CRM} of $\rho_\CG$.

\section{Strong consistency}
In this section we study consistency of CRMs. We consider the most frequently used consistency condition for univariate risk measures in the literature which is known as strong consistency and extend it to the multivariate case. We refer to \cite{Detlefsen2005}, \cite{Cheridito2006}, \cite{Cheridito2011}, \cite{Kupper2009}, and \cite{Penner2007} for more information on strong consistency of univariate risk measures. %, and to \cite{Feinstein2013a,Feinstein2015} for an extension to set-valued risk measures. 
\cite{Kromer2014} also study a kind of consistency for multivariate risk measures, however, as we will point out in  \thref{r:Kromer1} below, their definition of consistency differs from our approach.
For the remainder of this section we let $\CG$ and $\CH$ be two sub-$\sigma$-algebras of $\CF$ such that $\CG\subseteq\CH$, and let $\rho_\CG:\Linf{\CF}{d}\to \Linf{\CG}{}$ and $\rho_\CH:\Linf{\CF}{d}\to \Linf{\CH}{}$ be the corresponding CRMs.

\begin{Definition}[Strong consistency]\th\label{def:consistency}
The pair $\{\rho_\CG, \rho_\CH\}$ is called \emph{strongly consistent} if 
\begin{equation}\label{eq:consistency}
\rho_\CH(X)\leq\rho_\CH(Y)\;\Rightarrow\; \rho_\CG(X)\leq \rho_\CG(Y)\quad(X,Y\in\Linf{\CF}{d}).
\end{equation}
\end{Definition}

Strong consistency states that if one risk is preferred to another risk in almost surely all states under more information, then this preference already holds under less information. %For example consider two possible losses of banks $X,Y$ in a financial network. If we know that the financial network is more stable under the losses $X$ than under $Y$ for each possible outcome of one bank in the system, then the network should also considered to be more stable under $X$ than under $Y$ without any information on the banks.\\
Our first result shows that strong consistency can be equivalently defined by a recursive relation.

\begin{Lemma}\th\label{l:1}
Equivalent are:
\begin{enumerate}
	\item $\{\rho_\CG,\rho_\CH\}$ is strongly consistent;
	\item For all $X\in\Linf{\CF}{d}$ it holds that 
	$$\rho_\CG(X)=\rho_\CG\Big(f_{\rho_\CH}^{-1}\big(\rho_\CH(X)\big)\vecone\Big),$$
	where $f_{\rho_\CH}^{-1}$ was defined in \thref{def:f}.
\end{enumerate}
\end{Lemma}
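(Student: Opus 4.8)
The plan is to prove the two implications \mbox{(i) $\Rightarrow$ (ii)} and \mbox{(ii) $\Rightarrow$ (i)} separately. Both rest on one preliminary observation: for every $X\in\Linf{\CF}{d}$ the element $\alpha_X:=f_{\rho_\CH}^{-1}\bigl(\rho_\CH(X)\bigr)$ is a well-defined member of $\Linf{\CH}{}$, because by \thref{l:0} the image of $\rho_\CH$ coincides with $\Imag f_{\rho_\CH}$, which is exactly the domain of $f_{\rho_\CH}^{-1}$. The identity driving the whole argument is then $\rho_\CH(\alpha_X\vecone)=f_{\rho_\CH}(\alpha_X)=\rho_\CH(X)$: the $\CH$-constant vector $\alpha_X\vecone$ carries the same $\CH$-risk as $X$.

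For \mbox{(i) $\Rightarrow$ (ii)}: fix $X$ and apply strong consistency \eqref{eq:consistency} to the pair $(X,\alpha_X\vecone)$. Since $\rho_\CH(X)=\rho_\CH(\alpha_X\vecone)$, the implication \eqref{eq:consistency} may be used in both directions (once with $Y=\alpha_X\vecone$, once with the roles of $X$ and $\alpha_X\vecone$ swapped), giving $\rho_\CG(X)\le\rho_\CG(\alpha_X\vecone)$ and $\rho_\CG(\alpha_X\vecone)\le\rho_\CG(X)$, hence $\rho_\CG(X)=\rho_\CG\bigl(f_{\rho_\CH}^{-1}(\rho_\CH(X))\vecone\bigr)$, which is (ii).

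For \mbox{(ii) $\Rightarrow$ (i)}: assume the recursion and take $X,Y$ with $\rho_\CH(X)\le\rho_\CH(Y)$; write $\alpha_X,\alpha_Y$ as above. The first step is to deduce $\alpha_X\ge\alpha_Y$, i.e.\ to upgrade the strict antitonicity of $f_{\rho_\CH}^{-1}$ recorded in \thref{l:-1} to ordinary antitonicity: if $A:=\{\alpha_X<\alpha_Y\}\in\CH$ had positive probability, then applying the $\CH$-local, strictly antitone map $f_{\rho_\CH}$ to the truncations $\alpha_X\ind_A$ and $\alpha_Y\ind_A$ and restricting back to $A$ by $\CH$-locality would force $(\rho_\CH(X)-\rho_\CH(Y))\ind_A\ge 0$ with strict inequality on a subset of $A$ of positive probability, contradicting $\rho_\CH(X)\le\rho_\CH(Y)$ --- this is essentially the computation already carried out in \thref{r:1}. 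Hence $\alpha_X\vecone\ge\alpha_Y\vecone$ in $\Linf{\CF}{d}$. The second step is the analogous upgrade for $\rho_\CG$ on constants: for each $n\in\N$ we have $\alpha_X+1/n>\alpha_Y$ a.s., so strict antitonicity of $\rho_\CG$ gives $\rho_\CG\bigl((\alpha_X+1/n)\vecone\bigr)\le\rho_\CG(\alpha_Y\vecone)$, and letting $n\to\infty$ the Lebesgue property (\thref{def:rm}(iv), equivalently of $f_{\rho_\CG}$ via \thref{l:-1}) yields $\rho_\CG(\alpha_X\vecone)\le\rho_\CG(\alpha_Y\vecone)$. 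Combining this with (ii) applied to $X$ and to $Y$ gives $\rho_\CG(X)=\rho_\CG(\alpha_X\vecone)\le\rho_\CG(\alpha_Y\vecone)=\rho_\CG(Y)$, which is strong consistency.

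I expect no serious obstacle; the argument is largely bookkeeping. The only point needing care is the two passages from the strict antitonicity postulated in \thref{def:rm} to plain monotonicity --- once for $f_{\rho_\CH}^{-1}$ (by a localization--contradiction argument) and once for $\rho_\CG$ restricted to $\vecone$-multiples of $\CG$-constants (by an $\varepsilon\vecone$-perturbation combined with the Lebesgue property). Conceptually the content is that, lacking cash-additivity, the tower-type recursion must be routed through the behavior-on-constants maps $f_{\rho_\CG},f_{\rho_\CH}$ and their inverses, so that \thref{l:-1} and \thref{l:0} are precisely what make the equivalence work.
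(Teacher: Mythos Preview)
Your proof is correct and follows the same route as the paper's: both directions hinge on the identity $\rho_\CH(X)=\rho_\CH\bigl(f_{\rho_\CH}^{-1}(\rho_\CH(X))\vecone\bigr)$, then use monotonicity of $f_{\rho_\CH}^{-1}$ and $\rho_\CG$ for (ii)$\Rightarrow$(i). The localization and $\varepsilon$-perturbation arguments you insert to pass from strict antitonicity to plain antitonicity are unnecessary, though, since strict antitonicity as stated in \thref{def:rm}(ii) already \emph{contains} plain antitonicity (if $X\ge Y$ and $\PW(X>Y)=0$ then $X=Y$ a.s.\ and the values agree; otherwise the strict hypothesis applies and yields $\rho_\CG(X)\le\rho_\CG(Y)$) --- the paper simply invokes ``antitonicity'' without further comment.
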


\begin{proof}
\underline{(i)$\Rightarrow$(ii):}
As for all  $X\in\Linf{\CF}{d}$
$$\rho_\CH(X)=\rho_\CH\left(f_{\rho_\CH}^{-1}\left(\rho_\CH(X)\right)\vecone\right),$$
it follows from strong consistency that
$$\rho_\CG(X)=\rho_\CG\left(f_{\rho_\CH}^{-1}\left(\rho_\CH(X)\right)\vecone\right).$$
\underline{(ii)$\Rightarrow$(i):}
Let $X,Y\in\Linf{\CF}{d}$ be such that $\rho_{\CH}(X)\leq \rho_{\CH}(Y)$. Then by antitonicity of $f_{\rho_\CH}^{-1}$ and $\rho_{\CG}$ it follows that
$$\rho_{\CG}(X)=\rho_{\CG}\Big(f_{\rho_\CH}^{-1}\big(\rho_{\CH}(X)\big)\vecone\Big)\leq\rho_{\CG}\Big(f_{\rho_\CH}^{-1}\big(\rho_{\CH}(Y)\big)\vecone\Big)=\rho_{\CG}(Y).$$
\end{proof}

\begin{Remark}\th\label{r:2}
Let $\eta_\CG$ and $\eta_\CH$ be two univariate CRMs, where $\eta_\CH$ is normalized on constants, i.e.\ $\eta_\CH(\alpha)=-\alpha$ for all $\alpha\in\Linf{\CH}{}$.
Then $f_{\eta_\CH}(\alpha)=f_{\eta_\CH}^{-1}(\alpha)=-\alpha$ and thus strong consistency is equivalent to 
$$\eta_\CG(F)=\eta_\CG\big(-\eta_\CH(F)\big),\quad F\in\Linf{\CF}{}.$$
\end{Remark}

\begin{Remark}\th\label{r:34}
If $\{\rho_\CG, \rho_\CH\}$ is strongly consistent so is the pair of normalized CRMs $\{\rhoo_\CG,\rhoo_\CH\}$ as defined in \th\ref{r:3} and vice versa. Since $f_{\rhoo_\CG}=f_{\rhoo_\CH}=-\id$ strong consistency of the normalized CRMs is equivalent to 
$$\rhoo_\CG(F)=\rhoo_\CG\big(-\rhoo_\CH(F)\vecone\big),\quad F\in\Linf{\CF}{},$$
in analogy to \th\ref{r:2}.
\end{Remark}

In the following lemma we will show that strong consistency of $\{\rho_\CG, \rho_\CH\}$ uniquely determines the normalized CRM $\rhoo_\CH$. %The proof is similar to the proof of Lemma~2.1 in \cite{Follmer2014}.

\begin{Lemma}\th\label{l:2}
If $\{\rho_\CG,\rho_\CH\}$ is strongly consistent, then $\rho_\CG$ uniquely determines the normalized CRM $\rhoo_\CH=-f_{\rho_\CH}^{-1}\circ\rho_\CH$. 
\end{Lemma}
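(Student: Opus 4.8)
The plan is to prove uniqueness by contradiction. Suppose $\rho_\CH$ and $\rho'_\CH$ are two CRMs with values in $\Linf{\CH}{}$ such that both $\{\rho_\CG,\rho_\CH\}$ and $\{\rho_\CG,\rho'_\CH\}$ are strongly consistent, and let $\rhoo_\CH:=-f_{\rho_\CH}^{-1}\circ\rho_\CH$ and $\rhoo'_\CH:=-f_{\rho'_\CH}^{-1}\circ\rho'_\CH$ denote the associated normalized CRMs; these are well-defined CRMs by \thref{l:-1} and \thref{l:0}, and by construction $f_{\rhoo_\CH}=f_{\rhoo'_\CH}=-\id$ (so in particular $\rhoo_\CH(\veczero)=\rhoo'_\CH(\veczero)=0$). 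The goal is to show $\rhoo_\CH=\rhoo'_\CH$. By \thref{l:1}, strong consistency of $\{\rho_\CG,\rho_\CH\}$ is equivalent to $\rho_\CG(X)=\rho_\CG\big(f_{\rho_\CH}^{-1}(\rho_\CH(X))\vecone\big)=\rho_\CG\big(-\rhoo_\CH(X)\vecone\big)$ for all $X\in\Linf{\CF}{d}$, and likewise for $\rho'_\CH$; combining the two recursions gives
$$\rho_\CG\big(-\rhoo_\CH(X)\vecone\big)=\rho_\CG\big(-\rhoo'_\CH(X)\vecone\big)\qquad\text{for all }X\in\Linf{\CF}{d}.$$

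The main work is to upgrade this equality of $\rho_\CG$-values to the pointwise identity $\rhoo_\CH(X)=\rhoo'_\CH(X)$. This cannot be read off directly, since $\rho_\CG$ is only $\CG$-local and $-\rhoo_\CH(X)\vecone$ and $-\rhoo'_\CH(X)\vecone$ need not be $\PW$-a.s.\ ordered; the idea is to localize to the set where they differ. Fix $X\in\Linf{\CF}{d}$, set $A:=\{\rhoo_\CH(X)>\rhoo'_\CH(X)\}\in\CH$, and suppose $\PW(A)>0$. Applying the displayed identity to $Y:=X\ind_A+\veczero\ind_{A^C}\in\Linf{\CF}{d}$ and using $\CH$-locality of $\rhoo_\CH$ and $\rhoo'_\CH$ together with $\rhoo_\CH(\veczero)=\rhoo'_\CH(\veczero)=0$ yields $\rhoo_\CH(Y)=\rhoo_\CH(X)\ind_A$ and $\rhoo'_\CH(Y)=\rhoo'_\CH(X)\ind_A$, hence
$$\rho_\CG\big(-\rhoo_\CH(X)\ind_A\vecone\big)=\rho_\CG\big(-\rhoo'_\CH(X)\ind_A\vecone\big).$$
Now the two arguments coincide on $A^C$ and on $A$ satisfy $-\rhoo_\CH(X)<-\rhoo'_\CH(X)$, so $-\rhoo_\CH(X)\ind_A\vecone\le-\rhoo'_\CH(X)\ind_A\vecone$ everywhere with strict inequality in every coordinate on the set $A$ of positive probability. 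Strict antitonicity of $\rho_\CG$ (\thref{def:rm}~(ii)) therefore forces $\PW\big(\rho_\CG(-\rhoo'_\CH(X)\ind_A\vecone)<\rho_\CG(-\rhoo_\CH(X)\ind_A\vecone)\big)>0$, contradicting the previous display.

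Consequently $\PW(A)=0$, and the symmetric argument applied to $\{\rhoo'_\CH(X)>\rhoo_\CH(X)\}$ shows that set is null as well, so $\rhoo_\CH(X)=\rhoo'_\CH(X)$ $\PW$-a.s.; as $X\in\Linf{\CF}{d}$ was arbitrary, $\rhoo_\CH=\rhoo'_\CH$. The only genuinely delicate step is the one in the second paragraph: the recursion of \thref{l:1} yields nothing more than an equality between two $\rho_\CG$-images, and converting this into equality of the inner random variables requires the device of testing against $Y=X\ind_A$ — which exploits $\CH$-locality of the normalized CRMs and the normalization $f_{\rhoo_\CH}=-\id$ to evaluate $\rhoo_\CH(Y)$ on $A^C$ — so that the two arguments of $\rho_\CG$ become comparable and strict antitonicity can be brought to bear. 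The remaining manipulations are routine.
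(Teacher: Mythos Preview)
Your proof is correct and follows essentially the same approach as the paper: both arguments use the recursion of \thref{l:1} to obtain $\rho_\CG\big(f_{\rho_\CH}^{-1}(\rho_\CH(X))\vecone\big)=\rho_\CG\big(f_{\rho'_\CH}^{-1}(\rho'_\CH(X))\vecone\big)$, localize to the $\CH$-measurable set $A$ where the inner quantities differ by applying the identity to $X\ind_A$, and then derive a contradiction from strict antitonicity of $\rho_\CG$. The only difference is cosmetic---you phrase the argument in terms of the normalized CRMs $\rhoo_\CH,\rhoo'_\CH$ while the paper works directly with $f_{\rho_\CH^i}^{-1}\circ\rho_\CH^i$---but the logic is identical.
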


\begin{proof}
Suppose that there are two CRMs $\rho_\CH^1$ and $\rho_\CH^2$ which are strongly consistent with respect to $\rho_\CG$, i.e. 
$$\rho_{\CG}\Big(f_{\rho_{\CH}^1}^{-1}\big(\rho_{\CH}^1(X)\big)\vecone\Big)=\rho_{\CG}(X)= \rho_{\CG}\Big(f_{\rho_{\CH}^2}^{-1}\big(\rho_{\CH}^2(X)\big)\vecone\Big),\quad  X\in\Linf{\CF}{d}.$$
We will show that $f_{\rho_{\CH}^1}^{-1}\left(\rho_{\CH}^1(X)\right)=f_{\rho_{\CH}^2}^{-1}\left(\rho_{\CH}^2(X)\right)$. Suppose that there exists an $X\in\Linf{\CF}{d}$ such that $A:=\left\{f_{\rho_{\CH}^1}^{-1}\left(\rho_{\CH}^1(X)\right)>f_{\rho_{\CH}^2}^{-1}\left(\rho_{\CH}^2(X)\right)\right\}\in\CH$ has positive probability.
Then, by the $\CH$-locality of $\rho_{\CH}^1$ and $\rho_{\CH}^2$, we obtain
\begin{align}
\rho_{\CG}(X\ind_A)&=\rho_{\CG}\Big(f_{\rho_{\CH}^1}^{-1}\big(\rho_{\CH}^1(X\ind_A)\big)\vecone\Big)=\rho_{\CG}\Big(f_{\rho_{\CH}^1}^{-1}\big(\rho_{\CH}^1(X)\big)\ind_A\vecone\Big)\notag\\
&\leq\rho_{\CG}\Big(f_{\rho_{\CH}^2}^{-1}\big(\rho_{\CH}^2(X)\big)\ind_A\vecone\Big)=\rho_{\CG}\Big(f_{\rho_{\CH}^2}^{-1}\big(\rho_{\CH}^2(X\ind_A)\big)\vecone\Big)\notag\\
&=\rho_{\CG}(X\ind_A).\label{eq:123}
\end{align}
where the inequality \eqref{eq:123} is strict with positive probability as $\rho_\CG$ is strictly antitone, and hence we have a contradiction. Reverting the role of $\rho_\CH^1$ and $\rho_\CH^2$ in the definition of $A$ proves the lemma.
\end{proof}

In \cite{Hoffmann2016} we studied under which conditions a (multivariate) conditional risk measure can be decomposed as in \eqref{eq:aim2}, i.e.\ into a conditional aggregation function and a univariate conditional risk measure. We will pursue showing that strong consistency of $\{\rho_\CG,\rho_\CF\}$ is already sufficient to guarantee a decomposition \eqref{eq:aim2} for both $\rho_\CG$ and $\rho_\CF$. To this end we need to clarify what we mean by a conditional aggregation function:

\begin{Definition}\th\label{def:CAF}
We call a function $\Lambda:\Linf{\CF}{d}\to\Linf{\CF}{}$ a \emph{conditional aggregation function} if it fulfills the following properties:
\begin{description}
\item[Strict isotonicity:] $X\geq Y$ and $\PW(X>Y)>0$ implies $\Lambda(X)\geq\Lambda(Y)$ and $\PW\big(\Lambda(X)>\Lambda(Y)\big)>0$.
%\item[Strong Sensitivity:] $X\geq Y$ and $\PW(X>Y)>0\quad\Longrightarrow\quad \PW\big(\Lambda(X)>\Lambda\big)>0 \quad (X,Y\in\Linf{\CF}{d})$;
\item[$\CF$-Locality:] $\Lambda(X\ind_A+Y\ind_{A^C})=\Lambda(X)\ind_A+\Lambda(Y)\ind_{A^C}$ for all $A\in\CF$;
\item[Lebesgue property:] For any uniformly bounded sequence $(X_n)_{n\in\N}$ in $\Linf{\CF}{d}$ such that $X_n\to X$ $\PW$-a.s., we have that
$$\Lambda(X)=\lim_{n\to\infty}\Lambda(X_n)\quad\PW\text{-a.s.}$$
\end{description}  
Moreover for $\CH\subset\CF$, we call $\Lambda$ a \emph{$\CH$-conditional aggregation function} if in addition $$\Lambda(\Linf{\CJ}{d})\subseteq\Linf{\CJ}{}\text{ for all }\CH\subseteq\CJ\subseteq\CF.$$
\end{Definition}

\begin{Remark}
The name $\CH$-conditional aggregation function refers to the fact that $\Lambda(x)\in\Linf{\CH}{}$ for all $x\in\R^d$. Thus every conditional aggregation function is at least a $\CF$-conditional aggregation function. 
%Also notice that while a CRM $\rho_\CG:\Linf{\CF}{d}\to \Linf{\CG}{}$ is local with respect to the $\sigma$-algebra $\CG$, a conditional aggregation function is always local with respect to $\CF$.
 \end{Remark}

As for conditional risk measures we define: 

\begin{Definition}\th\label{def:fLambda}
For a conditional aggregation function $\Lambda:\Linf{\CF}{d}\to\Linf{\CF}{}$ let 
$$f_{\Lambda}:\Linf{\CF}{}\to\Linf{\CF}{};F\mapsto\Lambda(F\vecone)$$
and
$$f^{-1}_{\Lambda}:\Imag f_{\Lambda}\to\Linf{\CF}{};G\mapsto F\text{ such that }f_{\Lambda}(F)=G.$$
\end{Definition}

\begin{Lemma}\th\label{r:fLambdaprop}
Let $\Lambda:\Linf{\CF}{d}\to\Linf{\CF}{}$ be a conditional aggregation function. Then  $f_{\Lambda}$ and $f_{\Lambda}^{-1}$ are strictly isotone, $\CF$-local, and fulfill the Lebesgue property. Moreover, $\Lambda(\Linf{\CF}{d})=f_{\Lambda}(\Linf{\CF}{})$ and  $\Lambda(X)=\Lambda\big(f_{\Lambda}^{-1}(\Lambda(X))\vecone\big)$ for all $X\in\Linf{\CF}{d}$.
\end{Lemma}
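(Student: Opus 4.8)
The plan is to mirror, step by step, the arguments already carried out for conditional risk measures in \thref{r:1}, \thref{l:-1} and \thref{l:0}, with antitonicity systematically replaced by isotonicity and the conditioning $\sigma$-algebra $\CG$ replaced by $\CF$; the statement is essentially the ``isotone'' counterpart of those three results.

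First I would record that the three claimed properties of $f_{\Lambda}$ are immediate from its definition $f_{\Lambda}(F)=\Lambda(F\vecone)$. If $F_1\ge F_2$ with $\PW(F_1>F_2)>0$, then $F_1\vecone\ge F_2\vecone$ with $\PW(F_1\vecone>F_2\vecone)>0$, so strict isotonicity of $\Lambda$ gives strict isotonicity of $f_{\Lambda}$. For $A\in\CF$ one has $(F_1\ind_A+F_2\ind_{A^C})\vecone=F_1\vecone\ind_A+F_2\vecone\ind_{A^C}$, so $\CF$-locality of $\Lambda$ yields $\CF$-locality of $f_{\Lambda}$. Finally, a $\|\cdot\|_\infty$-bounded sequence $F_n\to F$ $\PW$-a.s.\ produces the $\|\cdot\|_{d,\infty}$-bounded sequence $F_n\vecone\to F\vecone$ $\PW$-a.s., so the Lebesgue property of $\Lambda$ transfers to $f_{\Lambda}$. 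In particular, exactly as in \thref{r:1} but with strict isotonicity in place of strict antitonicity, two preimages of the same $G\in\Imag f_{\Lambda}$ must agree $\PW$-a.s., so $f_{\Lambda}^{-1}$ is well-defined on $\Imag f_{\Lambda}$, and $f_{\Lambda}\circ f_{\Lambda}^{-1}=\id$ on $\Imag f_{\Lambda}$.

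Next I would establish the three properties of $f_{\Lambda}^{-1}$ by repeating the proof of \thref{l:-1} verbatim up to the direction of the inequalities. For strict isotonicity, assuming $G_1\ge G_2$ with $\PW(G_1>G_2)>0$ and supposing $A:=\{f_{\Lambda}^{-1}(G_1)<f_{\Lambda}^{-1}(G_2)\}\in\CF$ had positive probability, applying $f_{\Lambda}$ and using $\CF$-locality together with strict isotonicity of $f_{\Lambda}$ on $A$ gives $G_1\ind_A+f_{\Lambda}(0)\ind_{A^C}\le G_2\ind_A+f_{\Lambda}(0)\ind_{A^C}$ with strict inequality on a positive-probability set, contradicting $G_1\ge G_2$; hence $f_{\Lambda}^{-1}(G_1)\ge f_{\Lambda}^{-1}(G_2)$, and the strictness on a positive-probability set follows from $\PW(G_1>G_2)=\PW\big(f_{\Lambda}(f_{\Lambda}^{-1}(G_1))>f_{\Lambda}(f_{\Lambda}^{-1}(G_2))\big)>0$. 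The $\CF$-locality of $f_{\Lambda}^{-1}$ is the same one-line computation as in \thref{l:-1}. For the Lebesgue property (in the adapted sense for a function defined only on a subset, as in the paragraph preceding \thref{l:-1}) I would pass to the monotone envelopes $\beta^u_n:=\sup_{k\ge n}\beta_k$ and $\beta^d_n:=\inf_{k\ge n}\beta_k$, use the recursive partition $A^n_k$ of $\WR$ into the $\CF$-sets on which $\beta^u_n$ coincides with $\beta$ or with $\beta_k$, and apply $\CF$-locality and the Lebesgue property of $f_{\Lambda}$ to identify $f_{\Lambda}\big(f_{\Lambda}^{-1}(\beta)\ind_{A^n_{n-1}}+\sum_{k\ge n}f_{\Lambda}^{-1}(\beta_k)\ind_{A^n_k}\big)=\beta^u_n$, hence $\beta^u_n\in\Imag f_{\Lambda}$ (and likewise $\beta^d_n$); monotonicity and the Lebesgue property of $f_{\Lambda}$ then force $f_{\Lambda}^{-1}(\beta^u_n)$ and $f_{\Lambda}^{-1}(\beta^d_n)$ to converge to $f_{\Lambda}^{-1}(\beta)$, and a sandwich argument (using antitonicity-turned-isotonicity of $f_{\Lambda}^{-1}$) finishes it. I expect the verification that $\beta^u_n,\beta^d_n$ stay in $\Imag f_{\Lambda}$ to be the only mildly delicate point, and it is a direct transcription of the corresponding step in \thref{l:-1}.

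Finally, for $\Lambda(\Linf{\CF}{d})=f_{\Lambda}(\Linf{\CF}{})$ the inclusion ``$\supseteq$'' is trivial. For ``$\subseteq$'', fix $X\in\Linf{\CF}{d}$ and, mimicking \thref{l:0} with the order reversed to account for isotonicity, set $P:=\{F\in\Linf{\CF}{}:f_{\Lambda}(F)\le\Lambda(X)\}$; then $-\|X\|_{d,\infty}\in P$, $P$ is bounded above by $\|X\|_{d,\infty}$ (by $\CF$-locality and strict isotonicity of $f_{\Lambda}$) and upward directed (by $\CF$-locality), so $F^\ast:=\esssup P$ is an a.s.\ limit of a uniformly bounded sequence in $P$ and lies in $P$ by the Lebesgue property; the set $\{f_{\Lambda}(F^\ast)<\Lambda(X)\}=\bigcup_{n}\{f_{\Lambda}(F^\ast+1/n)<\Lambda(X)\}$ must be a $\PW$-nullset, since otherwise $F^\ast+(1/n)\ind_{B_n}\in P$ for a suitable $\CF$-set $B_n$ of positive probability, contradicting maximality of $F^\ast$. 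Hence $f_{\Lambda}(F^\ast)=\Lambda(X)$, which proves the identity and in particular that $\Lambda(X)\in\Imag f_{\Lambda}$, so $F^\ast=f_{\Lambda}^{-1}(\Lambda(X))$ is well-defined and $\Lambda\big(f_{\Lambda}^{-1}(\Lambda(X))\vecone\big)=f_{\Lambda}(F^\ast)=\Lambda(X)$ for all $X\in\Linf{\CF}{d}$.
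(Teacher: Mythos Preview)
Your proposal is correct and follows essentially the same approach as the paper: the paper's proof of this lemma consists solely of the remark that the well-definedness of $f_{\Lambda}^{-1}$ follows as in \thref{r:1} and that the rest is analogous to the proofs of \thref{l:-1} and \thref{l:0}, which is precisely the transcription (antitone $\leadsto$ isotone, $\CG\leadsto\CF$) you carry out in detail.
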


\noindent
The well-definedness of $f_{\Lambda}^{-1}$ follows as in \th\ref{r:1}. Further the proof of \thref{r:fLambdaprop} is analogous to the proofs of \thref{l:-1} and \thref{l:0} and therefore omitted here. %Hence, we omit it. %a proof of \thref{r:fLambdaprop} since it 
%Analogously to \thref{l:-1} and \thref{l:0} 
%\begin{Definition}\th\label{def:CBRM}
%Let $\Lambda_\CG$ be a conditional aggregation function with $\CG\subseteq\CF$. We call a function $\eta^\CG:\Imag\Lambda_\CG\to\Linf{\CG}{}$ a \emph{conditional base risk measure (w.r.t $\Lambda_\CG$)} if it is 
%\begin{description}
%	\item[Antitone:]$F\geq G \quad\Longrightarrow\quad \eta_\CG(F)\leq\eta_\CG(G)\quad(F,G\in\Imag\Lambda_\CG);$
%\end{description}  
%\end{Definition}

In order to state the decomposition result for strongly consistent CRMs, we first recall the main result from \cite{Hoffmann2016} adapted to the framework of this paper in \thref{T1} for which we need the following definition.

\begin{Definition}\th\label{def:techreal}
We say that a function $\rho_\CG: \Linf{\CF}{d}\to \Linf{\CG}{}$ has a continuous realization $\rho_\CG(\cdot,\cdot)$, if for all $X\in\Linf{\CF}{d}$ there exists a representative $\rho_\CG(X,\cdot)$ of the equivalence class $\rho_\CG(X)$ such that $\Wrho_\CG:\R^d\times \WR\to\R; (x,\om)\mapsto\rho_\CG(\eqx,\om)$ is continuous in its first argument $\PW$-a.s.
\end{Definition}

%In the following $X_\omega$ denotes a random vector which equals $X(\omega)$ $\PW$-a.s., i.e.\ $$\PW(\widetilde \omega\in \Omega: X_\omega(\widetilde \omega)=X(\omega))=1.$$  

%$\rho_\CG(\eqX,\om)\geq\rho_\CG(\eqY,\om)$
\begin{Proposition}\th\label{T1}
Let $\rho_\CG: \Linf{\CF}{d}\to\Linf{\CG}{}$ be a CRM and suppose that there exists a continuous realization $\rho_\CG(\cdot,\cdot)$ which satisfies \emph{risk-antitonicity}: $$\mbox{  
$\Wrho_\CG(\eqXom,\om)\geq\Wrho_\CG(\eqYom,\om)$ $\PW$-a.s., implies $\rho_\CG(\eqX)\geq\rho_\CG(\eqY)$.}$$  
Then there exists a $\CG$-conditional aggregation function $\Lambda_\CG:\Linf{\CF}{d}\to\Linf{\CF}{}$ and a univariate CRM $\eta_\CG: \Imag\Lambda_\CG\to \Linf{\CG}{}$ such that
$$\rho_\CG \left(\eqX\right)=\eta_\CG\left(\Lambda_\CG(\eqX)\right)\quad \text{for all $\eqX\in \Linf{\CF}{d}$}$$
and
\begin{equation}\label{eq:constonAF}
\eta_\CG\left(\Lambda_\CG(X)\right)=-\Lambda_\CG(X)\quad \mbox{for all $X\in\Linf{\CG}{d}$.}
\end{equation}
This decomposition is unique.
\end{Proposition}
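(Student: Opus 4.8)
The plan is to build the decomposition directly from the given continuous realization $\Wrho_\CG$. Define $\LG:\Linf{\CF}{d}\to\Linf{\CF}{}$ by
$$\LG(X)(\om):=-\Wrho_\CG\big(X(\om),\om\big),\qquad X\in\Linf{\CF}{d}.$$
This is bounded, because by monotonicity $\Wrho_\CG(X(\om),\om)$ lies between the bounded functions $\rho_\CG(\|X\|_{d,\infty}\vecone)$ and $\rho_\CG(-\|X\|_{d,\infty}\vecone)$, and it is $\CF$-measurable---indeed measurable with respect to $\CG$ and $X$---by approximating $X$ through a countable grid and using continuity of $\Wrho_\CG(\cdot,\om)$ together with $\CG$-measurability of each $\Wrho_\CG(x,\cdot)$; in particular $\LG(\Linf{\CJ}{d})\subseteq\Linf{\CJ}{}$ whenever $\CG\subseteq\CJ\subseteq\CF$. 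On $\Imag\LG$ set $\eta_\CG(\LG(X)):=\rho_\CG(X)$; this is well defined because $\LG(X)=\LG(Y)$ forces $\Wrho_\CG(X(\om),\om)=\Wrho_\CG(Y(\om),\om)$ $\PW$-a.s., and applying risk-antitonicity with both orderings yields $\rho_\CG(X)=\rho_\CG(Y)$. A fact I would use repeatedly is that $\rho_\CG(X)(\om)=\Wrho_\CG(X(\om),\om)$ $\PW$-a.s.\ for $X\in\Linf{\CG}{d}$: this is clear for $\CG$-simple $X$ by $\CG$-locality and extends to all $\CG$-measurable $X$ via the Lebesgue property of $\rho_\CG$ and continuity of $\Wrho_\CG(\cdot,\om)$. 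Hence $\LG(X)=-\rho_\CG(X)$ for $X\in\Linf{\CG}{d}$, which gives both $\rho_\CG=\eta_\CG\circ\LG$ and $\eta_\CG(\LG(X))=\rho_\CG(X)=-\LG(X)$ on $\Linf{\CG}{d}$, i.e.\ \eqref{eq:constonAF}.

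Next I would verify that $\LG$ is a $\CG$-conditional aggregation function in the sense of \thref{def:CAF}. The $\CF$-locality and the Lebesgue property are immediate from the pointwise formula and the continuity of $\Wrho_\CG(\cdot,\om)$. For strict isotonicity, note first that $\Wrho_\CG(\cdot,\om)$ is antitone for $\PW$-a.e.\ $\om$: compare $\Wrho_\CG(p,\om)$ and $\Wrho_\CG(q,\om)$ over the countably many pairs $p\geq q$ in $\Q^d$ using antitonicity of $\rho_\CG$, then pass to all of $\R^d$ by continuity. Thus $X\geq Y$ implies $\LG(X)\geq\LG(Y)$; and if in addition $\PW(X>Y)>0$ but $\LG(X)=\LG(Y)$, then $\rho_\CG(X)=\rho_\CG(Y)$ by risk-antitonicity, contradicting the strict antitonicity of $\rho_\CG$. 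Hence $\PW(\LG(X)>\LG(Y))>0$.

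Then I would check that $\eta_\CG:\Imag\LG\to\Linf{\CG}{}$ is a univariate CRM. That $0\in\Imag\eta_\CG$ follows from $0\in\Imag\rho_\CG$; the $\CG$-locality of $\eta_\CG$ follows from that of $\rho_\CG$ together with the $\CF$-locality of $\LG$; and antitonicity of $\eta_\CG$ is again just risk-antitonicity. The genuinely delicate point---which I expect to be the main obstacle---is the \emph{strict} antitonicity of $\eta_\CG$, since risk-antitonicity alone only delivers non-strict comparisons. Here I would proceed as follows: if $\LG(X)\leq\LG(Y)$ with $B:=\{\LG(X)<\LG(Y)\}$ of positive probability, then $\Wrho_\CG(X(\cdot),\cdot)\geq\Wrho_\CG(Y(\cdot),\cdot)$ $\PW$-a.s.\ with strict inequality on $B$; by continuity of $t\mapsto\Wrho_\CG(X(\om)+t\vecone,\om)$ and antitonicity in $t$, the admissible values $\{t\in[0,1]:\Wrho_\CG(X(\om)+t\vecone,\om)\geq\Wrho_\CG(Y(\om),\om)\}$ form, for $\om\in B$, a non-degenerate closed interval whose right endpoint is a measurable function of $\om$, so there is a bounded, strictly positive, measurable $\delta$ on $B$ with $\Wrho_\CG(X(\om)+\delta(\om)\vecone,\om)\geq\Wrho_\CG(Y(\om),\om)$ there. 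Then $\WX:=X+\delta\ind_B\vecone\geq X$ satisfies $\Wrho_\CG(\WX(\cdot),\cdot)\geq\Wrho_\CG(Y(\cdot),\cdot)$ $\PW$-a.s., so $\rho_\CG(\WX)\geq\rho_\CG(Y)$ by risk-antitonicity, while strict antitonicity of $\rho_\CG$ (using $\PW(\WX>X)=\PW(B)>0$) gives $\rho_\CG(\WX)\leq\rho_\CG(X)$ with $\PW(\rho_\CG(\WX)<\rho_\CG(X))>0$; combining, $\PW(\rho_\CG(X)>\rho_\CG(Y))\geq\PW(\rho_\CG(\WX)<\rho_\CG(X))>0$, which is the strict antitonicity of $\eta_\CG$. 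For the Lebesgue property of $\eta_\CG$ I would use the representation $\eta_\CG(Z)=\rho_\CG\big(\widetilde{f}_{\rho_\CG}^{-1}(-Z(\cdot),\cdot)\,\vecone\big)$ on $\Imag\LG$, where $\widetilde{f}_{\rho_\CG}(t,\om):=\Wrho_\CG(t\vecone,\om)$ and $\widetilde{f}_{\rho_\CG}^{-1}(\cdot,\om)$ is its continuous, strictly decreasing pointwise inverse; this is valid because for $Z=\LG(X)$ one has $\Wrho_\CG\big(\widetilde{f}_{\rho_\CG}^{-1}(-Z(\om),\om)\vecone,\om\big)=-Z(\om)=\Wrho_\CG(X(\om),\om)$ $\PW$-a.s., so the two sides agree by risk-antitonicity. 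Since $\widetilde{f}_{\rho_\CG}^{-1}(\cdot,\om)$ is continuous and monotone, a uniformly bounded $\PW$-a.s.-convergent sequence in $\Imag\LG$ is carried to a uniformly bounded $\PW$-a.s.-convergent sequence of diagonal vectors, and the Lebesgue property of $\rho_\CG$ then transfers to $\eta_\CG$.

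Finally, for uniqueness, suppose $\rho_\CG=\eta'_\CG\circ\Lambda'_\CG$ is another decomposition with the stated properties. Applying \eqref{eq:constonAF} to constant vectors gives $\Lambda'_\CG(x)=-\rho_\CG(x)=-\Wrho_\CG(x,\cdot)=\LG(x)$ for all $x\in\R^d$; by $\CF$-locality the two aggregation functions then agree on $\CF$-simple vectors, and by the Lebesgue property on all of $\Linf{\CF}{d}$, so $\Lambda'_\CG=\LG$; consequently $\eta'_\CG(\LG(X))=\rho_\CG(X)=\eta_\CG(\LG(X))$ for every $X$, i.e.\ $\eta'_\CG=\eta_\CG$ on $\Imag\LG$. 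The only step that is not routine bookkeeping is the strict antitonicity of $\eta_\CG$ above, where risk-antitonicity must be combined with a perturbation of $X$ along the diagonal and with the strict antitonicity of $\rho_\CG$.
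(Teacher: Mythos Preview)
Your construction is essentially the same as the paper's: both define $\LG(X)(\om)=-\Wrho_\CG(X(\om),\om)$, set $\eta_\CG(\LG(X))=\rho_\CG(X)$, verify \eqref{eq:constonAF} via $\CG$-simple approximation, and prove uniqueness by showing any admissible $\Lambda'_\CG$ agrees with $\LG$ on constants, hence on simple vectors, hence everywhere. The one place where the paper is more economical is in checking that $\eta_\CG$ is a CRM: it observes (via \thref{r:fLambdaprop}) that $\eta_\CG(F)=\rho_\CG\big(f_{\LG}^{-1}(F)\vecone\big)$ for all $F\in\Imag\LG$, and then reads off strict antitonicity, $\CG$-locality, and the Lebesgue property directly from the corresponding properties of $\rho_\CG$ and $f_{\LG}^{-1}$. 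You arrive at the same representation (your $\widetilde f_{\rho_\CG}^{-1}(-\cdot,\cdot)$ is exactly $f_{\LG}^{-1}$) but use it only for the Lebesgue property, handling strict antitonicity separately with the perturbation argument; that argument is correct but unnecessary once you have the composition formula.
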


\begin{proof}
Since $\rho_\CG$ is antitone, $\R^d\ni x\mapsto \rho_\CG(x)$ is antitone. It has been shown in  \cite{Hoffmann2016}~Theorem~2.10 that this property in conjunction with the fact that $\rho_\CG$ has a continuous realization which fulfills risk-antitonicity is sufficient for the existence and uniqueness of a function $\Lambda_\CG:\Linf{\CF}{d}\to\Linf{\CF}{}$ which is isotone, $\CF$-local and fulfills the Lebesgue property and a function $\eta_\CG: \Imag \Lambda_\CG\to \Linf{\CG}{}$ which is antitone such that 
\begin{equation}\label{eq:help1}
\rho_\CG=\eta_\CG\circ\Lambda_\CG\quad \mbox{and}\quad  \eta_\CG\big(\Lambda_\CG(x)\big)=-\Lambda_\CG(x)\;\mbox{for all $x\in\R^d$.}\end{equation} Note that in the proof of Theorem~2.10 in \cite{Hoffmann2016} $\Lambda_\CG$ is basically constructed by setting $\Lambda_\CG(X)(\omega)=-\Wrho_\CG(\eqXom,\om)$, which implies that $\Lambda_\CG$ is necessarily $\CF$-local even though this is not directly mentioned in the paper. Indeed in \cite{Hoffmann2016} we do not require or mention locality at all. 

It remains to be shown that $\Lambda_\CG$ is a $\CG$-conditional aggregation function, $\eta_\CG$ is a univariate CRM on $\Imag\Lambda_\CG$, and that \eqref{eq:constonAF} holds. 
% i.e.\ we still need to show that $\Lambda_\CG$ is strictly isotone, that $\eta_\CG$ is strictly antitone, $\CG$-local and fulfills the Lebesgue property and that \eqref{eq:constonAF} holds.\\
First of all, we show that $\CF$-locality and \eqref{eq:help1} imply \eqref{eq:constonAF}. To this end denote by $\CS$ the set of $\CF$-measurable simple random vectors, i.e.\ $X\in\CS$ if $X$ is of the form $X=\sum_{i=1}^k x_i\ind_{A_i}$, where $k\in\N$, $x_i\in\R^d$ and $A_i\in\CF$, $i=1,...,k$, are disjoint sets such that $\PW(A_i)>0$ and $\PW(\bigcup_{i=1}^k A_i)=1$. Now let $X\in \Linf{\CG}{d}$. Pick a uniformly bounded sequence  $(X_n)_{n\in\N}=\left(\sum_{i=1}^{k_n} x^n_i\ind_{A^n_i}\right)_{n\in\N}\subset\CS$ such that $A^n_i\in \CG$ for all $i=1,\ldots, k_n$, $n\in \N$, and $X_n\to X$ $\PW$-a.s. Then by \eqref{eq:help1}, $\CF$-locality and the Lebesgue property of $\Lambda_\CG$ and $\rho_\CG$ we infer that 
\begin{eqnarray*}-\Lambda_\CG(X)&=&-\lim_{n\to \infty}\Lambda_\CG(X_n)\;=\;\lim_{n\to \infty}\sum_{i=1}^{k_n}-\Lambda_\CG(x_i^n)\ind_{A_i^n} \\ &=&\lim_{n\to \infty}\sum_{i=1}^{k_n}\rho_\CG(x_i^n)\ind_{A_i^n}\; =\; \lim_{n\to \infty} \rho_\CG(X_n) \; =\; \rho_\CG(X), 
\end{eqnarray*}
which proves \eqref{eq:constonAF}.
%Let $x,y\in\R^d$ with $x>y$. Then $\rho_\CG(x)\leq\rho_\CG(y)$. 
%and $\PW(\rho_\CG(x)<\rho_\CG(y))>0$ by strict isotonicity of $\rho_\CG$. 
%Consider $B:=\{\rho_\CG(x)=\rho_\CG(y)\}\in\CG$. Then $x\ind_B\geq y\ind_B$ and thus
%$$\rho_\CG(y)\ind_B=\rho_\CG(x)\ind_B=\rho_\CG(x\ind_B)\leq \rho_\CG(y\ind_B)=\rho_\CG(y)\ind_B.$$
%However, if $B$ has positive probability, then the above inequality is strict with positive probability by strict sensitivity of $\rho_\CG$ which is a contradiction. Hence \begin{equation}\label{eq:strictonconst}
%\rho_\CG(x)<\rho_\CG(y)\;\text{$\PW$-a.s.\ for all }x,y\in\R^d\text{ with }x>y.
%\end{equation}
Next we show that $\Lambda_\CG$ is a $\CG$-conditional aggregation function. The yet missing  properties which need to be verified are strict antitonicity and that $\Lambda_\CG$ is $\CG$-conditional.  The latter follows from \cite{Hoffmann2016}~Lemma~3.1. As for strict antitonicity let $X,Y\in\Linf{\CF}{d}$ with $X\geq Y$ such that $\PW(X>Y)>0$. Then by isotonicity of $\Lambda_\CG$ we have that $\Lambda_\CG(X)\geq\Lambda_\CG(Y)$. Suppose that $\Lambda_\CG(X)=\Lambda_\CG(Y)$ $\PW$-a.s., then $$\rho_\CG(X)=\eta_\CG(\Lambda_\CG(X))= \eta_\CG(\Lambda_\CG(Y))=\rho_\CG(Y)$$ which contradicts strict antitonicity of $\rho_\CG$. 
Thus $\Lambda_\CG$ fulfills all properties of a $\CG$-conditional aggregation function.\\
As for $\eta_\CG$, note that by \thref{r:fLambdaprop} for all $F\in\Imag\Lambda_\CG$ we have that \begin{equation}\label{eq:help2}\eta_\CG(F)=\eta_\CG\big(\Lambda_\CG\big(f^{-1}_{\Lambda_\CG}(F)\vecone\big)\big)=\rho_\CG\big(f_{\Lambda_\CG}^{-1}(F)\vecone\big),\end{equation} where $f_{\Lambda_\CG}^{-1}$ was defined in \thref{def:fLambda}. Since $\rho_\CG$ and $f^{-1}_{\Lambda_\CG}$ are strictly monotone, $\CG$-local, and fulfill the Lebesgue property, so does  $\eta_\CG$, i.e.\ $\eta_\CG$ is a univariate CRM on $\Imag\Lambda_\CG$.
%
%Finally suppose that there exists another decomposition $\rho_\CG=\Heta\circ\HLambda$, where $\HLambda:\Linf{\CF}{d}\to\Linf{\CF}{}$ is a $\CG$-conditional aggregation function and $\Heta: \Imag\HLambda\to \Linf{\CG}{}$ is a univariate CRM such that \eqref{eq:constonAF} is fulfilled. Then 
%$$\HLambda(x)=-\rho_\CG(x)=\Lambda_\CG(x),\quad\text{for all }x\in\R^d.$$ For any $X\in \Linf{\CF}{d}$ we pick a uniformly bounded sequence $$(X_n)_{n\in\N}=\left(\sum_{i=1}^{k_n} x^n_i\ind_{A^n_i}\right)_{n\in\N}\subset\CS$$ such that $X_n\to X$ $\PW$-a.s. Then by locality and the Lebesgue property 
%$$\HLambda(X)=\lim_{n\to \infty}\sum_{i=1}^{k_n} \HLambda(x^n_i)\ind_{A^n_i}=\lim_{n\to \infty}\sum_{i=1}^{k_n} \Lambda_\CG(x^n_i)\ind_{A^n_i}=\Lambda_\CG(X).$$
%This also implies that $\eta_\CG=\Heta$ by \eqref{eq:help2}.
\end{proof}

\begin{Theorem}\th\label{l:riskanti2}
Let $\rho_\CG:\Linf{\CF}{d}\to \Linf{\CG}{}$ and  $\rho_\CF:\Linf{\CF}{d}\to \Linf{\CF}{}$ be CRMs such that $\{\rho_\CG,\rho_\CF\}$ is strongly consistent. Moreover, suppose that 
\begin{equation}\label{eq:const}f_{\rho_\CF}^{-1}\circ\rho_\CF(x)\in\R\quad \mbox{for all $x\in\R^d$}.\end{equation}
If $\rho_\CG$ has a continuous realization $\rho_\CG(\cdot,\cdot)$, then there exists a $\CG$-conditional aggregation function $\Lambda_\CG:\Linf{\CF}{d}\to\Linf{\CF}{}$ and a univariate CRM $\eta_\CG: \Imag\Lambda_\CG\to \Linf{\CG}{}$ such that 
\begin{equation}\label{eq:help5}\rho_\CG(X)=\eta_\CG\big(\Lambda_\CG(X)\big)\quad \text{for all }X\in\Linf{\CF}{d}\end{equation} and
$$\eta_\CG\big(\Lambda_\CG(X)\big)=-\Lambda_\CG(X)\quad \text{for all }X\in\Linf{\CG}{d}.$$
Let $\Lambda_\CF:=-\rho_\CF$ and $\eta_\CF:=-\id$ so that $\rho_\CF=\eta_\CF\circ\Lambda_\CF$ for the $\CF$-conditional aggregation function $\Lambda_\CF$ and the univariate CRM $\eta_\CF$. Then
\begin{equation}\label{eq:consoflam}
\Lambda_\CF(X)\leq\Lambda_\CF(Y)\quad\Longrightarrow\quad\Lambda_\CG(X)\leq\Lambda_\CG(Y)\quad(X,Y\in\Linf{\CF}{d}),
\end{equation}
i.e.\ $\Lambda_\CG$ and $\Lambda_\CF$ are strongly consistent. 

Conversely, suppose that the CRM $\rho_\CG:\Linf{\CF}{d}\to \Linf{\CG}{}$ satisfies \eqref{eq:help5}, then $\{\rho_\CG,\rho_\CF\}$ is strongly consistent where $\rho_\CF:=-\Lambda_\CG$ is a CRM.
\end{Theorem}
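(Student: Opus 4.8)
The plan is to reduce the forward implication to the decomposition theorem~\ref{T1} by showing that strong consistency together with \eqref{eq:const} forces $\rho_\CG$ to satisfy risk-antitonicity; the converse is then a short direct check. \emph{Step 1 (a deterministic aggregator).} By \ref{l:1} applied to $\{\rho_\CG,\rho_\CF\}$, for every $X\in\Linf{\CF}{d}$ one has $\rho_\CG(X)=\rho_\CG\big(\psi(X)\vecone\big)$ with $\psi:=f_{\rho_\CF}^{-1}\circ\rho_\CF$, and—using $\Imag\rho_\CF=\Imag f_{\rho_\CF}$ from \ref{l:0}—also $\rho_\CF(X)=f_{\rho_\CF}(\psi(X))$. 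By \eqref{eq:const} the map $g(x):=\psi(x)$ takes values in $\R$ on $\R^d$, so $g:\R^d\to\R$; it is continuous because $x_n\to x$ (with common bound $M$) gives $\rho_\CF(x_n)\to\rho_\CF(x)$ $\PW$-a.s.\ by the Lebesgue property of $\rho_\CF$, these terms lie between $\rho_\CF(\pm M\vecone)\in\Imag f_{\rho_\CF}$, and the (adapted) Lebesgue property of $f_{\rho_\CF}^{-1}$ from \ref{l:-1} yields $g(x_n)\to g(x)$. Since $\rho_\CF$ and $f_{\rho_\CF}^{-1}$ are $\CF$-local, so is $\psi$, whence for a simple $X=\sum_i x_i\ind_{A_i}$ ($A_i\in\CF$) we get $\psi(X)=\sum_i g(x_i)\ind_{A_i}$, i.e.\ $\psi(X)(\om)=g(X(\om))$ $\PW$-a.s.; approximating a general $X$ by a uniformly bounded sequence of such simple vectors and using the Lebesgue property of $\psi$ and continuity of $g$, this identity extends to all $X\in\Linf{\CF}{d}$. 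Hence $\rho_\CG(X)=\rho_\CG\big(g(X)\vecone\big)$ and $\rho_\CF(X)=f_{\rho_\CF}\big(g(X)\big)$, where $g(X)$ denotes the random variable $\om\mapsto g(X(\om))$.

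\emph{Step 2 (the continuous realization).} Put $h(c,\om):=\Wrho_\CG(c\vecone,\om)$, which is continuous in $c$ for $\PW$-a.e.\ $\om$. Since $\rho_\CG(x)=\rho_\CG(g(x)\vecone)$ in $\Linf{\CG}{}$, their chosen continuous representatives coincide, so $\Wrho_\CG(x,\cdot)=h(g(x),\cdot)$ $\PW$-a.s.\ for each fixed $x$; intersecting the associated null sets over $x\in\Q^d$ and using that $x\mapsto\Wrho_\CG(x,\om)$ and $x\mapsto h(g(x),\om)$ are a.s.\ continuous, one obtains that for $\PW$-a.e.\ $\om$ the equality $\Wrho_\CG(x,\om)=h(g(x),\om)$ holds for all $x\in\R^d$ at once, whence $\Wrho_\CG(X(\om),\om)=h(g(X(\om)),\om)$ $\PW$-a.s.\ for every $X\in\Linf{\CF}{d}$. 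Next I would show that $h(\cdot,\om)$ is \emph{strictly} decreasing for $\PW$-a.e.\ $\om$: for rationals $q_1>q_2$ set $B:=\{h(q_1,\cdot)=h(q_2,\cdot)\}\in\CG$; if $\PW(B)>0$, then $\rho_\CG(q_1\vecone)=\rho_\CG(q_2\vecone)$ a.s.\ on $B$, so $\CG$-locality gives $\rho_\CG(q_1\vecone\ind_B+q_2\vecone\ind_{B^C})=\rho_\CG(q_2\vecone)$, while $Z:=q_1\vecone\ind_B+q_2\vecone\ind_{B^C}$ satisfies $Z\geq q_2\vecone$ with $\PW(Z>q_2\vecone)=\PW(B)>0$, contradicting strict antitonicity of $\rho_\CG$; hence $h(q_1,\cdot)<h(q_2,\cdot)$ $\PW$-a.s., and intersecting over all rational pairs together with continuity in $c$ gives strict monotonicity.

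\emph{Step 3 (forward conclusion).} If $\Wrho_\CG(X(\om),\om)\geq\Wrho_\CG(Y(\om),\om)$ $\PW$-a.s., then by Step 2 $h(g(X(\om)),\om)\geq h(g(Y(\om)),\om)$ $\PW$-a.s., so strict monotonicity of $h(\cdot,\om)$ forces $g(X)\leq g(Y)$, and antitonicity of $\rho_\CG$ yields $\rho_\CG(X)=\rho_\CG(g(X)\vecone)\geq\rho_\CG(g(Y)\vecone)=\rho_\CG(Y)$. Thus $\rho_\CG$ has a continuous realization satisfying risk-antitonicity, and \ref{T1} provides $\Lambda_\CG$ and $\eta_\CG$ with $\rho_\CG=\eta_\CG\circ\Lambda_\CG$, $\eta_\CG(\Lambda_\CG(X))=-\Lambda_\CG(X)$ on $\Linf{\CG}{d}$, and—from the proof of \ref{T1}—$\Lambda_\CG(X)(\om)=-\Wrho_\CG(X(\om),\om)=-h(g(X(\om)),\om)$. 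With $\Lambda_\CF:=-\rho_\CF$ (strictly isotone, $\CF$-local, Lebesgue, hence an $\CF$-conditional aggregation function) and $\eta_\CF:=-\id$ one has $\rho_\CF=\eta_\CF\circ\Lambda_\CF$. Finally, strict monotonicity of $h(\cdot,\om)$ gives $\Lambda_\CG(X)\leq\Lambda_\CG(Y)\iff g(X)\leq g(Y)$, while $\rho_\CF(X)=f_{\rho_\CF}(g(X))$ and the antitonicity of $f_{\rho_\CF}$ and $f_{\rho_\CF}^{-1}$ give $\Lambda_\CF(X)\leq\Lambda_\CF(Y)\iff f_{\rho_\CF}(g(X))\geq f_{\rho_\CF}(g(Y))\iff g(X)\leq g(Y)$; hence $\Lambda_\CF(X)\leq\Lambda_\CF(Y)\Rightarrow\Lambda_\CG(X)\leq\Lambda_\CG(Y)$, i.e.\ \eqref{eq:consoflam} (in fact an equivalence).

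\emph{Converse and main obstacle.} For the converse, if $\rho_\CG$ satisfies \eqref{eq:help5}, i.e.\ $\rho_\CG=\eta_\CG\circ\Lambda_\CG$ with $\Lambda_\CG$ a $\CG$-conditional aggregation function and $\eta_\CG$ a univariate CRM, then $\rho_\CF:=-\Lambda_\CG$ inherits strict antitonicity, $\CF$-locality and the Lebesgue property from $\Lambda_\CG$, and $0\in\Imag\Lambda_\CG$ by \ref{l:0} together with the normalization $\eta_\CG(\Lambda_\CG(X))=-\Lambda_\CG(X)$ on $\Linf{\CG}{d}$, so $\rho_\CF$ is a CRM; moreover $\rho_\CF(X)\leq\rho_\CF(Y)\iff\Lambda_\CG(X)\geq\Lambda_\CG(Y)\Rightarrow\eta_\CG(\Lambda_\CG(X))\leq\eta_\CG(\Lambda_\CG(Y))$ by antitonicity of $\eta_\CG$, i.e.\ $\rho_\CG(X)\leq\rho_\CG(Y)$, so $\{\rho_\CG,\rho_\CF\}$ is strongly consistent. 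I expect the main work to be Step~2: upgrading the pointwise-in-$x$ identity $\Wrho_\CG(x,\cdot)=h(g(x),\cdot)$ to one holding simultaneously in $x$ (handled by density of $\Q^d$ and continuity), and—most delicately—deducing \emph{statewise} strict monotonicity of $c\mapsto h(c,\om)$ from the strict antitonicity of $\rho_\CG$, which a priori only asserts strictness on a set of positive probability; the $\CG$-locality argument in Step~2 is precisely what bridges this gap.
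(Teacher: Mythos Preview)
Your argument is correct and reaches the same conclusion as the paper, but by a genuinely different route. The paper establishes risk-antitonicity of $\rho_\CG$ by first treating simple random vectors: it uses strong consistency and \eqref{eq:const} to deduce that $f_{\rho_\CG}^{-1}(\rho_\CG(x))=f_{\rho_\CF}^{-1}(\rho_\CF(x))\in\R$ for every $x\in\R^d$ (eq.~\eqref{eq:const21}), compares these real numbers on the atoms of the common partition of $X$ and $Y$, and then passes to general $X,Y$ by monotone approximation; the consistency \eqref{eq:consoflam} is obtained by a second, parallel simple-function argument. You instead extract the deterministic continuous aggregator $g:=f_{\rho_\CF}^{-1}\circ\rho_\CF|_{\R^d}$ once and for all, show that $\psi(X)(\om)=g(X(\om))$ for every $X$, and factor the continuous realization as $\Wrho_\CG(x,\om)=h(g(x),\om)$ with $h(\cdot,\om)$ \emph{strictly} decreasing for $\PW$-a.e.\ $\om$; risk-antitonicity and \eqref{eq:consoflam} then follow in one line each. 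Your approach is more structural---it makes explicit that the real content of \eqref{eq:const} is that the normalized $\rho_\CF$ acts pointwise via a deterministic continuous function---and it delivers \eqref{eq:consoflam} as an equivalence rather than an implication (which the paper obtains separately in \thref{r:321}). The price is the extra statewise-strict-monotonicity lemma in Step~2, which the paper's atom-by-atom comparison sidesteps; your $\CG$-locality argument there is exactly the right device to upgrade strict antitonicity of $\rho_\CG$ to a.s.\ strict monotonicity of $c\mapsto h(c,\om)$.
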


We remark that in \thref{l:riskanti2} we require consistency of the pair $\{\rho_\CG,\rho_\CF\}$ where $\rho_\CF$ is a CRM given the full information $\CF$. Note that $\rho_\CF$ is (apart from the sign) simply a conditional aggregation function as defined in \thref{def:CAF}, so $\rho_\CG$ is required to be consistent with some aggregation function under full information. This also explains $\Lambda_\CF$. For $d=1$ this consistency is automatically satisfied by monotonicity (and the aggregation is simply the identity function), and clearly the assertion is trivial anyway. For higher dimensions, \thref{l:riskanti2} states that if there exists an aggregation function which is consistent with $\rho_\CG$, then $\rho_\CG$ is  automatically of type \eqref{eq:help5}. Clearly, if we already know that \eqref{eq:help5} holds true, then $\rho_\CG$ is consistent with $\rho_\CF=-\Lambda_\CG$. Consistency with an aggregation under full information is a very natural requirement, because even under full information, so 
without 
risk, typically the losses still need to be aggregated in some way, and therefore any CRM under less information $\CG$ should respect this aggregation. 

Note also that the condition \eqref{eq:const} is a slight strengthening of being normalized on constants, the latter being automatically satisfied by the very definition of the normalization $f_{\rho_\CF}^{-1}\circ\rho_\CF$; see above. 

The following proof of \thref{l:riskanti2} is based on two observations: $\rho_\CF$ is necessarily risk-antitone as defined in \thref{T1}. Strong consistency in turn implies that risk-antitonicity of $\rho_\CF$ is passed on (backwards) to $\rho_\CG$, and hence \thref{T1} applies.

\begin{proof}[Proof of \thref{l:riskanti2}:] In case we already know that \eqref{eq:help5} holds, then by antitonicity of $\eta_\CG$ it follows that $\{\rho_\CG, -\Lambda_\CG\}$ is strongly consistent, and clearly $-\Lambda_\CG:\Linf{\CF}{d}\to \Linf{\CF}{}$ is also a CRM. Thus the last assertion of \thref{l:riskanti2} is proved.

In order to show the first part of \thref{l:riskanti2}, we recall that the only property which remains to be shown in order to apply \thref{T1} is risk-antitonicity of $\rho_\CG$:
For this purpose we first consider simple random vectors $X,Y\in \CS$ where $\CS$ was defined in the proof of \thref{T1}. Note that there is no loss of generality by assuming that $X=\sum_{i=1}^n x_i\ind_{A_i}\in\CS$ and $Y=\sum_{i=1}^n y_i\ind_{A_i}\in\CS$, i.e.\ the partition $(A_i)_{i=1,\ldots,n}$ of $\Omega$ is the same for $X$ and $Y$.  Suppose that $\Wrho_\CG(\eqXom,\om)\geq\Wrho_\CG(\eqYom,\om)$ $\PW$-a.s. It follows that
$$\Wrho_\CG(x_i,\om)\geq\Wrho_\CG(y_i,\om)\quad\text{for all }\om\in A_i\backslash N,i=1,...,n,$$ where $N$ is a $\PW$-nullset. %, and therefore - on the level of equivalence classes - also
%$$\rho_\CG(x_i)\geq\rho_\CG(y_i)\quad\text{on } A_i,\;i=1,...,n.$$
We claim that this implies \begin{equation}\label{eq:help3}f_{\rho_\CG}^{-1}\big(\rho_\CG(x_i)\big)\leq f_{\rho_\CG}^{-1}\big(\rho_\CG(y_i)\big)\quad \mbox{for all $i=1,...,n$.}\end{equation} In order to verify this, we first notice that as $\rho_\CG$ and $\rho_\CF$ are strongly consistent and by \eqref{eq:const} we have for all $x\in\R^d$ that
\begin{equation}\label{eq:const21}
f_{\rho_\CG}^{-1}\big(\rho_\CG(x)\big)=f_{\rho_\CG}^{-1}\big(\rho_\CG\big(f_{\rho_\CF}^{-1}\big(\rho_\CF(x)\big)\vecone\big)\big)=f_{\rho_\CF}^{-1}\big(\rho_\CF(x)\big)\in\R. 
\end{equation} Here we also used that the normalization $-f_{\rho_\CG}^{-1}\circ \rho_\CG$ is normalized on constants. In other words $f_{\rho_\CG}^{-1}\big(\rho_\CG(x_i)\big)$
 and $f_{\rho_\CG}^{-1}\big(\rho_\CG(y_i)\big)$ are real numbers. Next we define 
$B_i:=\{\omega\in \Omega\mid \Wrho_\CG(x_i,\omega)\geq\Wrho_\CG(y_i,\omega)\}\in\CG$. Then $(A_i\setminus N)\subseteq B_i$ and hence $\PW(B_i)>0$ for all $i=1,...,n$. Using antitonicity and $\CG$-locality of $f_{\rho_\CG}^{-1}$ we obtain 
\begin{align*}
f^{-1}_{\rho_\CG}\big(\rho_\CG(x_i)\big)\ind_{B_i} &=f^{-1}_{\rho_\CG}\big(\rho_\CG(x_i)\ind_{B_i}\big)\ind_{B_i}
\leq f^{-1}_{\rho_\CG}\big(\rho_\CG(y_i)\ind_{B_i}\big)\ind_{B_i}=f^{-1}_{\rho_\CG}\big(\rho_\CG(y_i)\big)\ind_{B_i}.
\end{align*}
As $f_{\rho_\CG}^{-1}\big(\rho_\CG(y_i)\big)$ are indeed real numbers, \eqref{eq:help3} follows.

Now by strong consistency of $\{\rho_\CG,\rho_\CF\}$, $\CF$-locality of $\rho_\CF$ and $f^{-1}_{\rho_\CF}$, and by \eqref{eq:const21} as well as antitonicity of $\rho_\CG$ we obtain 
\begin{eqnarray*}
\rho_\CG(X)&=&\rho_\CG\left(f_{\rho_\CF}^{-1}\big(\rho_\CF(X)\big)\vecone\right) \quad =\quad  \rho_\CG\left( \sum_{i=1}^n f_{\rho_\CF}^{-1}\big(\rho_\CF(x_i)\big)\ind_{A_i}\vecone\right) \\
&=& \rho_\CG\left(\sum_{i=1}^n f_{\rho_\CG}^{-1}\big(\rho_\CG(x_i)\big)\ind_{A_i}\vecone\right)  \quad\geq\quad \rho_\CG\left( \sum_{i=1}^n f_{\rho_\CG}^{-1}\big(\rho_\CG(y_i)\big)\ind_{A_i}\vecone\right) \\
&=& \rho_\CG(Y),
\end{eqnarray*} which proves risk-antitonicity for simple random vectors $X,Y\in \CS$.
For general $X,Y\in\Linf{\CF}{d}$ with $\Wrho_\CG(\eqXom,\om)\geq\Wrho_\CG(\eqYom,\om)$ for $\PW$-a.e.\ $\om\in\WR$ we can find uniformly bounded sequences $(X_n)_{n\in\N},(Y_n)_{n\in\N}\subset\CS$ such that $X_n\nearrow X$ and $Y_n\searrow Y$ $\PW$-a.s.\ for $n\to\infty$. %Moreover, it has been shown in \cite{Hoffmann2016} that we can always assume that the representative has decreasing paths. 
Then by antitonicity
$$\Wrho_\CG(X_n(\om),\om)\geq\Wrho_\CG(\eqXom,\om)\geq\Wrho_\CG(\eqYom,\om)\geq\Wrho_\CG(Y_n(\om),\om)\text{ for }\PW\text{-a.s.}$$ 
Therefore, $\rho_\CG(X_n)\geq \rho_\CG(Y_n)$  and the Lebegue property of $\rho_\CG$ yield
$$\rho_\CG(X)=\lim_{n\to\infty}\rho_\CG(X_n)\geq\lim_{n\to\infty}\rho_\CG(Y_n)=\rho_\CG(Y).$$
Thus $\rho_\CG$ is risk-antitone and we apply \thref{T1}. Hence, there is  a $\CG$-conditional aggregation function $\Lambda_\CG:\Linf{\CF}{d}\to\Linf{\CF}{}$ and a univariate CRM $\eta_\CG: \Imag \Lambda_\CG\to \Linf{\CG}{}$ such that $\rho_\CG=\eta_\CG\circ\Lambda_\CG$ and $\eta_\CG\big(\Lambda_\CG(X)\big)=-\Lambda_\CG(X)$ for all $X\in\Linf{\CG}{d}$.\\
Let $X,Y\in\Linf{\CF}{d}$ such that 
$$\Lambda_\CF(X)=-\rho_\CF(X)\leq-\rho_\CF(Y)=\Lambda_\CF(Y)$$ and let $(X_n)_{n\in\N}\subset\CS$ and $(Y_n)_{n\in\N}\subset\CS$ be uniformly bounded sequences such that $X_n\nearrow X$ and $Y_n\searrow Y$ $\PW$-a.s.\ for $n\to\infty$. Again there is no loss in assuming that both $X_n$ and $Y_n$ for given $n\in \N$ are defined over the same partition, i.e.\ $X_n=\sum_{i=1}^{k_n}x_i^n\ind_{A_i^n}$ and $Y_n=\sum_{i=1}^{k_n}y_i^n\ind_{A_i^n}$.  By the $\CF$-locality and antitonicity of $\rho_\CF$ it follows that for all $n\in\N$
$$\sum_{i=1}^{k_n}\rho_\CF(x_i^n)\ind_{A_i^n}\geq\rho_\CF(X)\geq\rho_\CF(Y)\geq\sum_{i=1}^{k_n}\rho_\CF(y_i^n)\ind_{A_i^n}.$$ As $f^{-1}_{\rho_\CF}\circ \rho_\CF(x_i^n)$ and $f^{-1}_{\rho_\CF}\circ \rho_\CF(y_i^n)$ are real numbers according to assumption \eqref{eq:const} and as the above computation shows that $f^{-1}_{\rho_\CF}\circ \rho_\CF(x_i^n)\leq f^{-1}_{\rho_\CF}\circ \rho_\CF(y_i^n)$ on $A_i^n$, we obtain, as above that indeed $f^{-1}_{\rho_\CF}\circ \rho_\CF(x_i^n)\leq f^{-1}_{\rho_\CF}\circ \rho_\CF(y_i^n)$, $i=1,\ldots, k_n$. Now strong consistency and \eqref{eq:constonAF} imply that
\begin{align*}
-\Lambda_\CG(x_i^n)&=\rho_\CG(x_i^n)=\rho_\CG(f^{-1}_{\rho_\CF}\circ \rho_\CF(x_i^n))\\
	&\geq \rho_\CG(f^{-1}_{\rho_\CF}\circ \rho_\CF(y_i^n)) = \rho_\CG(y_i^n)\\
	&= -\Lambda_\CG(y_i^n)
\end{align*} 
and hence by $\CG$-locality of $\Lambda_\CG$ 
$$\Lambda_\CG(X_n)=\sum_{i=1}^{k_n}\Lambda_\CG(x_i^n)\ind_{A_i^n}\leq\sum_{i=1}^{k_n}\Lambda_\CG(y_i^n)\ind_{A_i^n}=\Lambda_\CG(Y_n).$$
Finally we conclude with the Lebesgue property that 
$$\Lambda_\CG(X)=\lim_{n\to\infty}\Lambda_\CG(X_n)\leq\lim_{n\to\infty}\Lambda_\CG(Y_n)=\Lambda_\CG(Y).$$
\end{proof}

\begin{Remark}\th\label{r:321}
We know from \thref{r:fLambdaprop} that the inverse function $f_{\LG}^{-1}$ of $f_{\LG}$ is isotone and that $\LG(X)=\LG\big(f_{\LG}^{-1}(\LG(X))\vecone\big)$ for all $X\in\Linf{\CF}{d}$.
Therefore it can be shown as in \thref{l:1}, that \eqref{eq:consoflam} is equivalent to 
$$f_{\Lambda_\CG}^{-1}\big(\Lambda_\CG(X)\big)=f_{\Lambda_\CF}^{-1}\big(\Lambda_\CF(X)\big),\text{ for all }X\in\Linf{\CF}{d}.$$
Note that we cannot write the recursive form of the strong consistency of two CRMs $\rho_\CG$ and $\rho_\CF$ as above, since $f_{\rho_\CG}$ is only defined on $\Linf{\CG}{}$ and not on $\Linf{\CF}{}$ in constrast to $f_{\Lambda_\CG}$.
\end{Remark}

In the following Theorem we summarize our findings from \thref{T1} and \thref{l:riskanti2} on CRMs which extend the results in \cite{Hoffmann2016} for strong consistency:

\begin{Theorem}\th\label{t:triangle}
If $\rho_\CG:\Linf{\CF}{d}\to\Linf{\CG}{}$ is a CRM with a continuous realization $\rho_\CG(\cdot,\cdot)$ and satisfies $f_{\rho_\CG}^{-1}\circ \rho_\CG(x)\in\R$ for all $x\in\R^d$, then the following three statements are equivalent
\begin{enumerate}
\item\label{it:1} $\rho_\CG(\cdot,\cdot)$ is risk-antitone;
\item\label{it:2} $\rho_\CG$ is decomposable as in \eqref{eq:help5};
\item\label{it:3} $\rho_\CG$ is strongly consistent with some aggregation function $\Lambda:\Linf{\CF}{d}\to\Linf{\CF}{}$, i.e.\ $\{\rho_\CG,-\Lambda\}$ is strongly consistent.% such that $f_{\Lambda}^{-1}\circ\Lambda(x)\in\R$ for all $x\in\R^d$.
\end{enumerate}
\end{Theorem}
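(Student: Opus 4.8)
The plan is to establish the three equivalences by a cyclic argument, relying on the two previous results in the section: \thref{T1} handles the passage from risk-antitonicity to the decomposition, and \thref{l:riskanti2} handles both the link between the decomposition and strong consistency with an aggregation function and, in its first part, the reverse direction. Concretely, I would prove \ref{it:1}$\Rightarrow$\ref{it:2}$\Rightarrow$\ref{it:3}$\Rightarrow$\ref{it:1}.

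For \ref{it:1}$\Rightarrow$\ref{it:2}: this is essentially immediate from \thref{T1}. If $\rho_\CG$ has a continuous realization which is risk-antitone, then the hypotheses of \thref{T1} are met (antitonicity of $x\mapsto\rho_\CG(x)$ on $\R^d$ follows from the strict antitonicity in \thref{def:rm}), so there exist a $\CG$-conditional aggregation function $\Lambda_\CG$ and a univariate CRM $\eta_\CG$ on $\Imag\Lambda_\CG$ with $\rho_\CG=\eta_\CG\circ\Lambda_\CG$ and $\eta_\CG(\Lambda_\CG(X))=-\Lambda_\CG(X)$ for $X\in\Linf{\CG}{d}$, which is exactly \eqref{eq:help5}.

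For \ref{it:2}$\Rightarrow$\ref{it:3}: this is the last assertion of \thref{l:riskanti2}. If \eqref{eq:help5} holds, set $\Lambda:=\Lambda_\CG$; then $-\Lambda_\CG:\Linf{\CF}{d}\to\Linf{\CF}{}$ is a CRM (it is strictly isotone as an aggregation function, so $-\Lambda_\CG$ is strictly antitone, and it inherits $\CF$-locality — hence $\CG$-locality is not even needed here — and the Lebesgue property), and by antitonicity of $\eta_\CG$ the pair $\{\rho_\CG,-\Lambda_\CG\}$ is strongly consistent: if $-\Lambda_\CG(X)\le-\Lambda_\CG(Y)$ then $\Lambda_\CG(X)\ge\Lambda_\CG(Y)$, so $\rho_\CG(X)=\eta_\CG(\Lambda_\CG(X))\le\eta_\CG(\Lambda_\CG(Y))=\rho_\CG(Y)$. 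So $\rho_\CG$ is strongly consistent with the aggregation function $\Lambda_\CG$.

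For \ref{it:3}$\Rightarrow$\ref{it:1}: here I would invoke the first part of \thref{l:riskanti2} with $\CF$ replaced by the (common) full $\sigma$-algebra and $\rho_\CF:=-\Lambda$. One checks that $\rho_\CF=-\Lambda$ is a CRM satisfying the normalization condition \eqref{eq:const}: since $\Lambda$ is an aggregation function, $f_{\rho_\CF}(F)=-f_\Lambda(F)$ is strictly isotone and the normalization $f_{\rho_\CF}^{-1}\circ\rho_\CF$ coincides with $f_\Lambda^{-1}\circ\Lambda$ (up to sign), which maps $\R^d$ into $\R$ because $\Lambda(x)\in\Linf{\CF}{}$ is deterministic for $x\in\R^d$ — indeed more is true, $f_\Lambda^{-1}(\Lambda(x))$ is the scalar whose constant vector aggregates to the same value, and this is real-valued by strict isotonicity and locality applied on $\R^d$. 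Given $\{\rho_\CG,\rho_\CF\}$ strongly consistent and $\rho_\CG$ with a continuous realization, the proof of \thref{l:riskanti2} shows precisely that $\rho_\CG(\cdot,\cdot)$ is risk-antitone (this is the content of the passage in that proof culminating in the displayed chain of inequalities for simple random vectors, extended to general $X,Y$ via the Lebesgue property), which is \ref{it:1}.

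The main obstacle I anticipate is purely bookkeeping: making sure that the hypotheses of \thref{l:riskanti2} are genuinely available in the ``$\CF$ is the ambient $\sigma$-algebra'' situation of Theorem~\ref{t:triangle}, in particular that $-\Lambda$ really is a CRM (it has $0$ in its image only if one normalizes, but since any aggregation function can be shifted, or since the argument only uses strict antitonicity, locality and the Lebesgue property, this causes no trouble) and that condition \eqref{eq:const} holds for $\rho_\CF=-\Lambda$. Once those verifications are in place, all three implications are direct citations of \thref{T1} and \thref{l:riskanti2} with no further computation required.
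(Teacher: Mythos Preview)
Your cyclic scheme is precisely the paper's own proof: it cites \thref{T1} for \ref{it:1}$\Rightarrow$\ref{it:2}, \thref{l:riskanti2} for \ref{it:2}$\Leftrightarrow$\ref{it:3}, and then states that the proof of \thref{l:riskanti2} yields \ref{it:3}$\Rightarrow$\ref{it:1}.

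One correction to your bookkeeping in the last step: your assertion that $f_\Lambda^{-1}(\Lambda(x))\in\R$ for $x\in\R^d$ does \emph{not} follow from strict isotonicity and $\CF$-locality---nothing in \thref{def:CAF} forces a general conditional aggregation function to send deterministic vectors to constants---so your proposed verification of \eqref{eq:const} for $\rho_\CF=-\Lambda$ does not go through as written. When rerunning the argument of \thref{l:riskanti2} you should instead feed in the theorem's explicit hypothesis $f_{\rho_\CG}^{-1}\circ\rho_\CG(x)\in\R$ directly, letting it play the role that \eqref{eq:const21} (the consequence of \eqref{eq:const}) plays there; the paper's proof leaves this substitution implicit by simply invoking ``the proof of \thref{l:riskanti2}''.
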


\begin{proof}
The equivalence of \eqref{it:2} and \eqref{it:3} has been shown in \thref{l:riskanti2} and that \eqref{it:1} implies \eqref{it:2} follows from \thref{T1}. Finally, the proof of \thref{l:riskanti2} shows that \eqref{it:3} implies \eqref{it:1}.
	
	%it remains to show that a CRM $\rho_\CG$ which is decomposable as in \eqref{eq:help1} is risk-antitone.
	
%	\noindent Let $\rho_\CG$ be decomposable as in \eqref{eq:help1} and $X,Y\in\Linf{\CF}{d}$ be such that $\rho_\CG(X_\om,\om)\geq\rho_\CG(Y_\om,\om)$ $\PW$-a.s. Then there exist uniformly bounded sequences $(X_n)_{n\in\N},(Y_n)_{n\in\N}\subset \CS$ with $X_n=\sum_{i=1}^{k_n}x_i^n\ind_{A_i^n}\uparrow X$ and $Y_n=\sum_{i=1}^{k_n}y_i^n\ind_{A_i^n}\downarrow Y$ $\PW$-a.s. We have seen in the proof of Theorem 2.10 in \cite{Hoffmann2016} that a continuous realization in conjunction with antitonicity implies the existence of a realization which has continuous and antitone paths and we denote this realization by $\rho_\CG(\cdot,\cdot)$. We have for all $n\in\N$
%	$$\rho_\CG((X_n)_\om,\om)\geq\rho(X_\om,\om)\geq\rho_\CG(Y_\om,\om)\geq\rho_\CG((Y_n)_\om,\om)\quad\PW\text{-a.s.},$$
%	which implies that for all $i\in\{1,...,k_n\}$
%	$$\rho_\CG(x_i^n)\ind_{A_i^n}\geq\rho_\CG(y_i^n)\ind_{A_i^n}$$
%	and hence
%	$$\sum_{i=1}^{k_n}\Lambda_\CG(x_i^n)\ind_{A_i^n}\leq\sum_{i=1}^{k_n}\Lambda_\CG(y_i^n)\ind_{A_i^n}.$$
%	Since $\Lambda_\CG$ is $\CF$-local, as opposed to $\CG$-locality of $\rho_\CG$, we get that $\Lambda_\CG(X_n)\leq\Lambda_\CG(Y_n)$ and thus by antitonicity of $\eta_\CG$ and the Lebesgue property of $\rho_\CG$
%	$$\rho_\CG(X)=\lim_{n\to\infty}\rho_\CG(X_n)=\lim_{n\to\infty}\eta_\CG\big(\Lambda_\CG (X_n)\big)\geq\lim_{n\to\infty}\eta_\CG\big(\Lambda_\CG (Y_n)\big)=\rho_\CG(Y).$$
\end{proof}
\section{Conditional law-invariance and strong consistency}

As in the previous section, if not otherwise stated, throughout this section we let $\CG$ and $\CH$ be two sub-$\sigma$-algebras of $\CF$ such that $\CG\subseteq\CH$, and let $\rho_\CG:\Linf{\CF}{d}\to \Linf{\CG}{}$ and $\rho_\CH:\Linf{\CF}{d}\to \Linf{\CH}{}$ be the corresponding CRMs.

\begin{Definition}
A CRM $\rho_\CG$ is conditional law-invariant if $\rho_\CG(X)=\rho_\CG(Y)$ whenever the $\CG$-conditional distributions $\mu_X(\cdot|\CG)$ and $\mu_Y(\cdot|\CG)$ of $X,Y\in\Linf{\CF}{d}$ are equal, i.e.\ if $\PW(X\in A\;|\;\CG)=\PW(Y\in A\;|\;\CG)$ for all Borel sets $A\in\CB(\R^d)$. In case $\CG=\{\emptyset,\Omega\}$ is trivial, conditional law-invariance of $\rho_\CG$ is also referred to as law-invariance.
\end{Definition}

%We will see that under conditional law-invariance \eqref{eq:const} is automatically satisfied, so strong sensitivity is sufficient to imply the existence of a decomposition \eqref{5}.    Moreover, in analogy to the results in \cite{Follmer2014} we give an explicit representation of the components in \eqref{eq:help5} in terms of certainty equivalents. 

In the law-invariant case we will often have to require a little more regularity of the underlying probability space $(\WR,\CF,\PW)$:

\begin{Definition}
We say that $(\WR,\CF,\PW)$ is
\begin{description}
\item[atomless,] if $(\WR,\CF,\PW)$ supports a random variable with continuous distribution;
\item[conditionally atomless given $\bf \CH\subset\CF$,] if $(\WR,\CF,\PW)$ supports a random variable with continuous distribution which is independent of $\CH$.
\end{description}
\end{Definition}

%For the rest of this section we will restrict the multivariate CRMs to be conditionally law-invariant. 

The next lemma shows that conditional law-invariance is passed from $\rho_\CG$ (forward) to $\rho_\CH$ by  strong consistency. The proof is based on \cite{Follmer2014}.

\begin{Lemma}\th\label{l:veerblawinv}
If $\{\rho_\CG,\rho_\CH\}$ is strongly consistent and $\rho_\CG$ is conditionally law-invariant, then $\rho_\CH$ is also conditionally law-invariant.   
\end{Lemma}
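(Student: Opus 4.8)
The plan is to exploit the recursive characterization of strong consistency from \thref{l:1}, namely that $\rho_\CG(X)=\rho_\CG\big(f_{\rho_\CH}^{-1}(\rho_\CH(X))\vecone\big)$ for all $X\in\Linf{\CF}{d}$. The key point is that this identity shows $\rho_\CG$ \emph{factors through} the one-dimensional $\CH$-measurable random variable $\rho_\CH(X)$ (equivalently through its normalization $\rhoo_\CH(X)=-f_{\rho_\CH}^{-1}(\rho_\CH(X))$): two risks $X,Y$ with $\rho_\CH(X)=\rho_\CH(Y)$ automatically satisfy $\rho_\CG(X)=\rho_\CG(Y)$, and the strict antitonicity of $\rho_\CG$ together with \thref{l:2} (uniqueness of the normalized $\rho_\CH$ determined by $\rho_\CG$) gives the converse — so $\rho_\CH(X)=\rho_\CH(Y)$ \emph{if and only if} $\rho_\CG(X)=\rho_\CG(Y)$. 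Hence it suffices to show: if $X,Y\in\Linf{\CF}{d}$ have the same $\CH$-conditional distribution, then $\rho_\CG(X)=\rho_\CG(Y)$.

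The main step is therefore to reduce equality of $\CH$-conditional laws to equality of $\CG$-conditional laws, so that conditional law-invariance of $\rho_\CG$ can be applied. First I would treat the case $X=x$, $Y=y$ \emph{constants} in $\R^d$ with the same... — this is degenerate, so instead the right first case is: suppose $X,Y$ are such that $\mu_X(\cdot\mid\CH)=\mu_Y(\cdot\mid\CH)$ on an $\CH$-atom or, more usefully, argue by $\CH$-locality. The cleanest route, following \cite{Follmer2014}, is: since $\CG\subseteq\CH$, equality of $\CH$-conditional distributions implies equality of $\CG$-conditional distributions (conditioning is a tower), i.e.\ $\mu_X(\cdot\mid\CH)=\mu_Y(\cdot\mid\CH)$ $\Rightarrow$ $\mu_X(\cdot\mid\CG)=\mathds{E}[\mu_X(\cdot\mid\CH)\mid\CG]=\mathds{E}[\mu_Y(\cdot\mid\CH)\mid\CG]=\mu_Y(\cdot\mid\CG)$. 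By conditional law-invariance of $\rho_\CG$ this already gives $\rho_\CG(X)=\rho_\CG(Y)$. Combined with the ``if and only if'' above, we conclude $\rho_\CH(X)=\rho_\CH(Y)$, which is exactly conditional law-invariance of $\rho_\CH$.

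I expect the subtle point to be the rigorous justification that strong consistency yields the full equivalence $\rho_\CH(X)=\rho_\CH(Y)\iff\rho_\CG(X)=\rho_\CG(Y)$, rather than just one implication. The forward direction is immediate from \thref{l:1}; for the reverse one uses that $\{\rho_\CG,\rho_\CH\}$ strongly consistent implies $\rho_\CH(X)\le\rho_\CH(Y)$ or $\rho_\CH(Y)\le\rho_\CH(X)$ fails to be forced, so one argues locally: on the $\CH$-set $A:=\{f_{\rho_\CH}^{-1}(\rho_\CH(X))>f_{\rho_\CH}^{-1}(\rho_\CH(Y))\}$, applying $\CG$-locality and strict antitonicity of $\rho_\CG$ to $X\ind_A,Y\ind_A$ as in the proof of \thref{l:2} forces $\PW(A)=0$, and symmetrically $\PW(\{f_{\rho_\CH}^{-1}(\rho_\CH(X))<f_{\rho_\CH}^{-1}(\rho_\CH(Y))\})=0$; since $f_{\rho_\CH}^{-1}$ is injective (\thref{r:1}), $\rho_\CH(X)=\rho_\CH(Y)$. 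The other place requiring care is the measure-theoretic claim that $\CH$-conditional distributions agreeing implies $\CG$-conditional distributions agree; this is the tower property applied to the (regular) conditional laws, valued in probability measures on $\R^d$, and holds without any atomlessness assumption — which is consistent with the fact that the lemma as stated imposes none.
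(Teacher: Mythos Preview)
Your proposal contains a genuine gap: the claimed equivalence $\rho_\CH(X)=\rho_\CH(Y)\iff\rho_\CG(X)=\rho_\CG(Y)$ does \emph{not} follow from strong consistency alone. The forward implication is fine (\thref{l:1}), but the reverse is simply false in general. For a counterexample take $d=1$, $\rho_\CG(X)=-\EW{\PW}{X}$ and $\rho_\CH(X)=-\BEW{\PW}{X}{\CH}$; these are strongly consistent, yet $\EW{\PW}{X}=\EW{\PW}{Y}$ certainly does not force $\BEW{\PW}{X}{\CH}=\BEW{\PW}{Y}{\CH}$. Your appeal to \thref{l:2} does not help: that lemma compares two \emph{different} CRMs $\rho_\CH^1,\rho_\CH^2$, both consistent with $\rho_\CG$, evaluated at the \emph{same} argument --- a different situation. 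In your localized argument you would need $\rho_\CG(X\ind_A)=\rho_\CG(Y\ind_A)$ for $A\in\CH$, but $\rho_\CG$ is only $\CG$-local, so this cannot be extracted from $\rho_\CG(X)=\rho_\CG(Y)$ alone.

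The fix --- and this is exactly what the paper does --- is to intertwine your two steps rather than separate them. Define $A:=\{\rho_\CH(X)>\rho_\CH(Y)\}\in\CH$ directly. Because $A\in\CH$ and $\mu_X(\cdot\mid\CH)=\mu_Y(\cdot\mid\CH)$, the vectors $X\ind_A$ and $Y\ind_A$ still have the same $\CH$-conditional law, hence (by the tower argument you already gave) the same $\CG$-conditional law, so conditional law-invariance of $\rho_\CG$ yields $\rho_\CG(X\ind_A)=\rho_\CG(Y\ind_A)$. Now the recursive formula from \thref{l:1} together with strict antitonicity of $\rho_\CG$ and $f_{\rho_\CH}^{-1}$ forces $\PW(A)=0$, exactly as you sketched. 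The crucial point is that the hypothesis is equality of $\CH$-conditional laws, not merely of $\CG$-conditional laws, and this is precisely what allows you to localize by an $\CH$-measurable set \emph{before} invoking law-invariance of $\rho_\CG$.
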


\begin{proof}
Let $X,Y\in\Linf{\CF}{}$ such that $\mu_X(\cdot|\CH)=\mu_Y(\cdot|\CH)$ and let $A:=\{\rho_\CH(X)>\rho_\CH(Y)\}\in\CH$. Then the random variables $X\ind_A$ and $Y\ind_A$ have the same conditional distribution given $\CG$. 
%, since for all $C\in\CB(\R^d)$ with $0\notin C$
%\begin{align*}
%\PW(X\ind_A\in C|\CG)&=\BEW{\PW}{\ind_{\{X^{-1}(C)\cap A\}}}{\CG}=\BEW{\PW}{\BEW{\PW}{\ind_{\{X^{-1}(C)\}}}{\CH}\ind_A}{\CG}\\
%&=\BEW{\PW}{\mu_X(C|\CH)\ind_A}{\CG}=\BEW{\PW}{\mu_Y(C|\CH)\ind_A}{\CG}\\
%&=\PW(Y\ind_A\in C|\CG)
%\end{align*}
%and similarly for $0\in C$.
As $\rho_\CG$ is conditionally law-invariant and strongly consistent with $\rho_\CH$ we obtain
\begin{align*}
\rho_\CG\Big(f_{\rho_\CH}^{-1}\big(\rho_\CH(X)\ind_A+\rho_\CH(\veczero)\ind_{A^C}\big)\vecone\Big)&=\rho_\CG(X\ind_A)=\rho_\CG(Y\ind_A)\\
&=\rho_\CG\Big(f_{\rho_\CH}^{-1}\big(\rho_\CH(Y)\ind_A+\rho_\CH(\veczero)\ind_{A^C}\big)\vecone\Big).
\end{align*}
On the other hand, by strict antitonicity of $\rho_\CG$ and $f^{-1}_{\rho_\CH}$
$$\rho_\CG\Big(f_{\rho_\CH}^{-1}\big(\rho_\CH(X)\ind_A+\rho_\CH(\veczero)\ind_{A^C}\big)\vecone\Big)\geq\rho_\CG\Big(f_{\rho_\CH}^{-1}\big(\rho_\CH(Y)\ind_A+\rho_\CH(\veczero)\ind_{A^C}\big)\vecone\Big),$$
and the inequality is strict with positive probability if $\PW(A)>0$. Thus $A$ must be a $\PW$-nullset and interchanging $X$ and $Y$ in the definition of $A$ shows that indeed  $\rho_\CH(X)=\rho_\CH(Y)$.
\end{proof}

%For the proof of the main result in this subsection we will define a preference order on distributions. In order that this preference order is well-defined we will always assume the following: 

While in \thref{l:riskanti2} we had to require that the strongly consistent pair $\{\rho_\CG,\rho_\CH\}$ satisfies $\CH=\CF$, in this section we in some sense require the opposite extreme, namely that $\CG=\{\emptyset,\WR\}$ is trivial while $\CH\subseteq\CF$.

\begin{Assumption}
For the rest of the section we assume that $\CG=\{\emptyset,\WR\}$. For simplicity we will write $\rho:=\rho_\CG=\rho_{\{\emptyset,\WR\}}$.
\end{Assumption}

\begin{Lemma}\th\label{l:4neu}
Let $\{\rho,\rho_\CH\}$ be strongly consistent and suppose that $\rho$ is law-invariant (and thus $\rho_\CH$ is conditionally law-invariant by \thref{l:veerblawinv}).
If $(\WR,\CH,\PW)$ is an atomless probability space and $X\in\Linf{\CF}{d}$ is independent of $\CH$, then 
$$f_{\rho_\CH}^{-1}\big(\rho_\CH(X)\big)=f_{\rho}^{-1}\big(\rho(X)\big).$$ 
\end{Lemma}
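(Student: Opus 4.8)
The plan is to exploit strong consistency in its recursive form (\thref{l:1}), which gives $\rho(X)=\rho\big(f_{\rho_\CH}^{-1}(\rho_\CH(X))\vecone\big)$ for every $X\in\Linf{\CF}{d}$. Applying this to the given $X$ that is independent of $\CH$, and abbreviating $\alpha:=f_{\rho_\CH}^{-1}(\rho_\CH(X))\in\Linf{\CH}{}$, I need to show that $\alpha$ is in fact a constant and that this constant equals $f_\rho^{-1}(\rho(X))$. Once $\alpha$ is known to be deterministic, say $\alpha\equiv c$, then $\rho(X)=\rho(c\vecone)=f_\rho(c)$, hence $c=f_\rho^{-1}(\rho(X))$, and since $\alpha=f_{\rho_\CH}^{-1}(\rho_\CH(X))$ by definition, the claimed identity $f_{\rho_\CH}^{-1}(\rho_\CH(X))=f_\rho^{-1}(\rho(X))$ follows. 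So the whole content is: \emph{$f_{\rho_\CH}^{-1}(\rho_\CH(X))$ is $\PW$-a.s.\ constant when $X$ is independent of $\CH$.}

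To prove that $\alpha$ is constant I would use the atomlessness of $(\WR,\CH,\PW)$ together with conditional law-invariance of $\rho_\CH$ and the law-invariance of $\rho$. Here is the mechanism I have in mind. Suppose $\alpha$ is not a.s.\ constant; then there exist a real number $t$ and a set $A:=\{\alpha>t\}\in\CH$ with $0<\PW(A)<1$. Since $(\WR,\CH,\PW)$ is atomless I can split $A$ and $A^C$ into $\CH$-pieces of prescribed probabilities; more to the point, because $X$ is independent of $\CH$, the conditional law of $X$ given $\CH$ is just the unconditional law, and by conditional law-invariance of $\rho_\CH$ (which holds by \thref{l:veerblawinv}) together with its $\CH$-locality I can compare $\rho_\CH(X)$ on $A$ with $\rho_\CH(X\ind_{A}+X'\ind_{A^C})$ where $X'$ is a suitable independent copy-type modification, engineering a contradiction with the strict antitonicity of $f_{\rho_\CH}^{-1}$ and $\rho$. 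The cleanest route is probably: build $Y$ with the same (unconditional) law as $X$ such that $Y\ind_A$ and $X\ind_A$ have equal $\CH$-conditional distributions, so $\rho_\CH(Y)\ind_A=\rho_\CH(X)\ind_A$ by conditional law-invariance and $\CH$-locality, yet such that the ``global'' comparison via $\rho$ forces $\alpha$ to agree with the constant $f_\rho^{-1}(\rho(X))$ on $A$; since $t$ was arbitrary this pins down $\alpha$.

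An alternative and possibly smoother argument: let $c:=f_\rho^{-1}(\rho(X))$, a real number (well-defined since $\rho(X)\in\Imag f_\rho$ by \thref{l:0}). I claim $\alpha=c$ a.s. Suppose $\PW(\alpha>c)>0$; put $B:=\{\alpha>c\}\in\CH$. Using $\CH$-locality of $\rho_\CH$ and the fact that $X\ind_B$ has, given $\CG=\{\emptyset,\WR\}$, the same law as a mixture, one shows via strong consistency that $\rho(X\ind_B)=\rho\big(f_{\rho_\CH}^{-1}(\rho_\CH(X\ind_B))\vecone\big)=\rho\big(\alpha\ind_B\vecone + f_{\rho_\CH}^{-1}(\rho_\CH(\veczero))\ind_{B^C}\vecone\big)$. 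On the other hand, because $X$ is independent of $\CH$ and $(\WR,\CH,\PW)$ is atomless, $X\ind_B$ has the same distribution as $X\ind_{B'}$ for any $B'\in\CH$ independent of $X$ with $\PW(B')=\PW(B)$, and one can choose such $B'$ on which $\alpha<c$ fails to dominate — leading by law-invariance of $\rho$ and strict antitonicity to $\alpha\ind_B=c\ind_B$, a contradiction. Symmetrically $\PW(\alpha<c)=0$, so $\alpha=c$ and we are done.

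The main obstacle I anticipate is the bookkeeping in the independence/atomlessness step: carefully constructing the auxiliary random vector with the prescribed $\CH$-conditional distribution on the relevant set while controlling its behaviour on the complement, and making sure the resulting (in)equalities from conditional law-invariance of $\rho_\CH$, law-invariance of $\rho$, $\CH$-locality, and strict antitonicity of $\rho$ and $f_{\rho_\CH}^{-1}$ line up to yield a strict inequality contradicting an equality. The conceptual steps (recursive reformulation of strong consistency, reduction to showing $\alpha$ is constant, identification of the constant via $f_\rho^{-1}\circ\rho$) are routine given the earlier results; the delicate part is the probabilistic surgery enabled by atomlessness of $(\WR,\CH,\PW)$ and independence of $X$ from $\CH$, essentially a conditional version of the classical fact that on an atomless space law-invariant functionals separate distributions.
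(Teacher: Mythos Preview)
Your overall plan is right and close to the paper's argument: set $\alpha:=f_{\rho_\CH}^{-1}(\rho_\CH(X))$ and $c:=f_\rho^{-1}(\rho(X))\in\R$, use the recursive form of strong consistency to get $\rho(X)=\rho(\alpha\vecone)$, and then show $\alpha=c$ a.s. The paper does exactly this, but via a clean three-case split that your sketch does not fully cover.

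The gap in your proposal is the \emph{one-sided} situation. Your ``alternative'' route assumes $\PW(\alpha>c)>0$, sets $B=\{\alpha>c\}$, and then looks for another $B'\in\CH$ with $\PW(B')=\PW(B)$ on which the inequality goes the other way, in order to exploit law-invariance of $\rho$ together with independence of $X$ from $\CH$. But if $\alpha\geq c$ $\PW$-a.s.\ (with strict inequality on a set of positive probability), no such $B'$ exists: on every $\CH$-set $\alpha\geq c$, so the comparison never reverses and the atomlessness/independence trick yields nothing. The same problem arises symmetrically if $\alpha\leq c$ everywhere. Your first approach (show $\alpha$ is constant directly) has the analogous difficulty.

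The paper handles this by separating the cases. If $\alpha\leq c$ a.s.\ with $\PW(\alpha<c)>0$ (or the reverse inequality), one needs neither atomlessness nor independence: strict antitonicity of $\rho$ gives
\[
c=f_\rho^{-1}(\rho(X))=f_\rho^{-1}\big(\rho(\alpha\vecone)\big)<f_\rho^{-1}\big(\rho(c\vecone)\big)=c,
\]
an immediate contradiction. Only in the genuinely two-sided case, where both $\{\alpha>c\}$ and $\{\alpha<c\}$ have positive probability, does one invoke atomlessness of $(\WR,\CH,\PW)$ to pick $A\subseteq\{\alpha>c\}$ and $B\subseteq\{\alpha<c\}$ in $\CH$ with $\PW(A)=\PW(B)>0$, and then independence of $X$ from $\CH$ to conclude that $X\ind_A+m\ind_{A^C}$ and $X\ind_B+m\ind_{B^C}$ (for $m=a\vecone$, $a\in\R$) have the same law; law-invariance of $\rho$ then collides with the strict inequalities forced by $\alpha>c$ on $A$ and $\alpha<c$ on $B$. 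Once you add the easy one-sided step, your argument is essentially the paper's.
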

The proof of \thref{l:4neu} is adapted from \cite{Kupper2009}.
\begin{proof}
%Recall that it suffices to show that $f_{\rho_{\CH}}^{-1}\big(\rho_{\CH}(X)\big)\in \R$, since we can then infer from strong consistency that
%$$f_{\rho}^{-1}\big(\rho(X)\big)=f_{\rho}^{-1}\left(\rho\left(f_{\rho_{\CH}}^{-1}\big(\rho_{\CH}(X)\big)\vecone\right)\right)=f_{\rho_{\CH}}^{-1}\big(\rho_{\CH}(X)\big).$$
%Therefore assume that $f_{\rho_{\CH}}^{-1}\big(\rho_{\CH}(X)\big)$ is not deterministic. 
We distinguish three cases:
\begin{itemize}
\item Suppose that $f_{\rho_{\CH}}^{-1}\big(\rho_{\CH}(X)\big)\leq f_{\rho}^{-1}\big(\rho(X)\big)$ and strictly smaller with positive probability. Then by strong consistency
\begin{align*}
f_{\rho}^{-1}\big(\rho(X)\big)&=f_{\rho}^{-1}\left(\rho\left(f_{\rho_{\CH}}^{-1}\big(\rho_{\CH}(X)\big)\vecone\right)\right)\\
&<f_{\rho}^{-1}\left(\rho\left(f_{\rho}^{-1}\big(\rho(X)\big)\vecone\right)\right)=f_{\rho}^{-1}\big(\rho(X)\big),
\end{align*} by strict antitonicity of $\rho$ which is a contradiction.
\item Analogously it follows that it is not possible that $f_{\rho_{\CH}}^{-1}\big(\rho_{\CH}(X)\big)\geq f_{\rho}^{-1}\big(\rho(X)\big)$ and $\PW(f_{\rho_{\CH}}^{-1}\big(\rho_{\CH}(X)\big)> f_{\rho}^{-1}\big(\rho(X)\big))>0$. 

\item There exist $A,B\in\CH$ such that $\PW(A)=\PW(B)>0$ and  
$$f_{\rho_{\CH}}^{-1}\big(\rho_{\CH}(X)\big)>f_{\rho}^{-1}\big(\rho(X)\big)\text{ on }A\text{ and }f_{\rho_{\CH}}^{-1}\big(\rho_{\CH}(X)\big)<f_{\rho}^{-1}\big(\rho(X)\big)\text{ on }B.$$ %(Here we clearly need that $(\WR,\CH,\PW)$ is atomless.)
Then we have for an arbitrary $m=a\vecone$ where $a\in\R$ that
\begin{align}
\rho(X\ind_A+m\ind_{A^C})&=\rho\left(f_{\rho_{\CH}}^{-1}\big(\rho_{\CH}(X\ind_A+m\ind_{A^C})\big)\vecone\right)\notag\\
&=\rho\left(f_{\rho_{\CH}}^{-1}\big(\rho_{\CH}(X)\big)\ind_A\vecone+m\ind_{A^C}\right)\notag\\
&<\rho\left(f_{\rho}^{-1}\big(\rho(X)\big)\ind_A\vecone+m\ind_{A^C}\right)\label{eq:431}
\end{align}
and similarly
\begin{equation}\label{eq:432}\rho(X\ind_B+m\ind_{B^C})>\rho\left(f_{\rho}^{-1}\big(\rho(X)\big)\ind_B\vecone+m\ind_{B^C}\right).
\end{equation}
However, as $X$ is independent of $\CH$ the random vector
$X\ind_A+m\ind_{A^C}$ has the same distribution under $\PW$ as $X\ind_B+m\ind_{B^C}$.
Note that also 
$f_{\rho}^{-1}\big(\rho(X)\big)\ind_A+a\ind_{A^C}$ and $ f_{\rho}^{-1}\big(\rho(X)\big)\ind_B+a\ind_{B^C}$ share the same distribution under $\PW$. 
Hence, as $\rho$ is law-invariant, \eqref{eq:431} and \eqref{eq:432} yield a contradiction.
\end{itemize}
\end{proof}

Now we are able to extend the representation result of \cite{Follmer2014} to multivariate CRMs.

\begin{Theorem}\th\label{t:1} Let $(\WR,\CH,\PW)$ be atomless and let  $(\WR,\CF,\PW)$ be  conditionally atomless given $\CH$. Suppose that $\rho$ is law-invariant. Then, $\{\rho, \rho_\CH\}$ is strongly consistent if and only if $\rho$ and $\rho_\CH$ are of the form
\begin{equation}\label{eq:ace2}
\rho(X)=g\left(f_{u}^{-1}\big(\EW{\PW}{u(X)}\big)\right)\quad \mbox{for all $X\in\Linf{\CF}{d}$}
\end{equation}
and
\begin{equation}\label{eq:ace}
\rho_\CH(X)=g_{\CH}\left(f_{u}^{-1}\big(\BEW{\PW}{u(X)}{\CH}\big)\right) \quad \mbox{for all $X\in\Linf{\CF}{d}$}
\end{equation}
where $u:\R^d\to\R$ is strictly increasing and continuous, $f_{u}^{-1}:\Imag f_{u}\to\R$ is the inverse function of $$f_{u}:\R\to\R;\quad x\mapsto u(x\vecone)$$ and $g:\R\to\R$ and $g_\CH:\Linf{\CH}{}\to\Linf{\CH}{}$ are strictly antitone, fulfill the Lebesgue property, $0\in\Imag g\cap\Imag g_\CH$, and $g_\CH$ is $\CH$-local.

In particular, for any CRM of type \eqref{eq:ace2} (or \eqref{eq:ace}) we have that $g=f_\rho$ ($g_{\CH}=f_{\rho_\CH}$), where $f_\rho$ and $f_{\rho_\CH}$ are defined in \thref{def:f}.
\end{Theorem}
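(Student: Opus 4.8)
The plan is to establish the equivalence by proving the two implications separately: ``$\Leftarrow$'' is a direct verification, while for ``$\Rightarrow$'' I would follow the strategy of \cite{Follmer2014} and reduce strong consistency together with law-invariance to a functional equation that forces a quasi-arithmetic-mean structure on the certainty equivalent associated with $\rho$.

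\emph{Direction ``$\Leftarrow$''.} Suppose $\rho$ and $\rho_\CH$ are of the form \eqref{eq:ace2}, \eqref{eq:ace}. Because every $X\in\Linf{\CF}{d}$ satisfies $-\|X\|_{d,\infty}\vecone\le X\le\|X\|_{d,\infty}\vecone$ and $u$ is continuous and increasing, $\BEW{\PW}{u(X)}{\CH}$ and $\EW{\PW}{u(X)}$ take values $\PW$-a.s.\ in $[\fu(-\|X\|_{d,\infty}),\fu(\|X\|_{d,\infty})]\subseteq\Imag\fu$, so the formulas are meaningful. Inserting $\alpha\vecone$ with $\alpha\in\Linf{\CH}{}$ into \eqref{eq:ace} gives $f_{\rho_\CH}(\alpha)=\rho_\CH(\alpha\vecone)=g_\CH(\fuinv(\fu(\alpha)))=g_\CH(\alpha)$, hence $g_\CH=f_{\rho_\CH}$; likewise $g=f_{\rho}$ from \eqref{eq:ace2}, which is the ``in particular'' assertion. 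Then $f_{\rho_\CH}^{-1}(\rho_\CH(X))=\fuinv(\BEW{\PW}{u(X)}{\CH})$, and since $u(\fuinv(\BEW{\PW}{u(X)}{\CH})\vecone)=\BEW{\PW}{u(X)}{\CH}$ the tower property gives $\rho(f_{\rho_\CH}^{-1}(\rho_\CH(X))\vecone)=g(\fuinv(\EW{\PW}{u(X)}))=\rho(X)$, so $\{\rho,\rho_\CH\}$ is strongly consistent by \thref{l:1}.

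\emph{Direction ``$\Rightarrow$'': reduction to a functional equation.} Assume $\{\rho,\rho_\CH\}$ is strongly consistent with $\rho$ law-invariant. First I would pass to the normalized CRMs $\rhoo:=-f_{\rho}^{-1}\circ\rho$ and $\rhoo_\CH:=-f_{\rho_\CH}^{-1}\circ\rho_\CH$: by \thref{r:34} they are strongly consistent and normalized on constants, by \thref{l:veerblawinv} $\rhoo$ is law-invariant and $\rhoo_\CH$ conditionally law-invariant, and $\rho=f_{\rho}\circ(-\rhoo)$, $\rho_\CH=f_{\rho_\CH}\circ(-\rhoo_\CH)$. Since $\rhoo$ is law-invariant with values in the constants, $c(\mu):=-\rhoo(X)$ (for any $X\in\Linf{\CF}{d}$ with law $\mu$) defines a functional on the compactly supported Borel probabilities $\mu$ on $\R^d$ (every such $\mu$ is realised because $(\WR,\CF,\PW)$ is atomless); $c$ is isotone for first-order stochastic dominance, continuous along weakly convergent uniformly compactly supported sequences (Lebesgue property, realising them via a.s.\ convergent uniformly bounded random vectors), and $c(\delta_{t\vecone})=t$ for $t\in\R$. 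By \thref{l:4neu}, $\rhoo_\CH(Z)=\rhoo(Z)=-c(\mu_Z)$ for every $Z\in\Linf{\CF}{d}$ independent of $\CH$. Now fix $B\in\CH$ with $0<\PW(B)<1$ and compactly supported laws $\mu_1,\mu_2$; using that $(\WR,\CF,\PW)$ is conditionally atomless given $\CH$, choose $Z_1,Z_2\in\Linf{\CF}{d}$ independent of $\CH$ with $\mu_{Z_i}=\mu_i$ and set $X:=Z_1\ind_B+Z_2\ind_{B^C}$, so that $\mu_X=\PW(B)\mu_1+(1-\PW(B))\mu_2$. By $\CH$-locality and the previous step, $\rhoo_\CH(X)=\rhoo_\CH(Z_1)\ind_B+\rhoo_\CH(Z_2)\ind_{B^C}=-c(\mu_1)\ind_B-c(\mu_2)\ind_{B^C}$. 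Feeding this into the recursion $\rhoo(X)=\rhoo(-\rhoo_\CH(X)\vecone)$ (\thref{r:34}), whose argument on the right has law $\PW(B)\delta_{c(\mu_1)\vecone}+(1-\PW(B))\delta_{c(\mu_2)\vecone}$, and using law-invariance of $\rhoo$, gives
$$c(p\mu_1+(1-p)\mu_2)=c\big(p\,\delta_{c(\mu_1)\vecone}+(1-p)\,\delta_{c(\mu_2)\vecone}\big),\qquad p:=\PW(B).$$
As $(\WR,\CH,\PW)$ is atomless, $p$ runs through all of $(0,1)$, and iterating over binary splittings one obtains $c(\sum_j p_j\mu_j)=c(\sum_j p_j\,\delta_{c(\mu_j)\vecone})$ for every finite convex combination of compactly supported laws.

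\emph{Solving the functional equation and concluding.} Restricting the last identity to laws supported on the diagonal $\{t\vecone:t\in\R\}$ --- a class stable under the operations involved --- shows that the induced functional on compactly supported laws on $\R$ is consistent in the Kolmogorov--Nagumo sense, isotone, continuous and reflexive, hence equals $\nu\mapsto v^{-1}(\int v\,d\nu)$ for a continuous strictly increasing $v:\R\to\R$, by the characterisation of quasi-arithmetic means used in \cite{Follmer2014}. Putting $u(x):=v(c(\delta_x))$ for $x\in\R^d$, and noting $x\mapsto c(\delta_x)=-\rhoo(x)$ is continuous (Lebesgue) and strictly increasing (strict antitonicity of $\rhoo$), $u$ is continuous and strictly increasing, and $\fu(x)=u(x\vecone)=v(c(\delta_{x\vecone}))=v(x)$, so $\fu=v$. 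Applying the functional equation with $\mu_j=\delta_{x_j}$, $x_j\in\R^d$, yields $c(\sum_j p_j\delta_{x_j})=v^{-1}(\sum_j p_j u(x_j))=\fuinv(\int u\,d\mu)$, and a weak-approximation argument (finitely supported $\mu_n\to\mu$, continuity of $c$ and of $\fuinv$, $\int u\,d\mu_n\to\int u\,d\mu$) extends this to $c(\mu)=\fuinv(\int u\,d\mu)$ for all compactly supported $\mu$. Hence $\rhoo(X)=-\fuinv(\EW{\PW}{u(X)})$, so $\rho=f_{\rho}\circ(-\rhoo)$ has the form \eqref{eq:ace2} with $g:=f_{\rho}$. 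For \eqref{eq:ace} I would introduce the candidate $\rhoo_\CH'(X):=-\fuinv(\BEW{\PW}{u(X)}{\CH})$; it is a CRM normalized on constants (strict antitonicity, $\CH$-locality, the Lebesgue property and $0\in\Imag\rhoo_\CH'$ follow from the corresponding properties of $X\mapsto u(X)$, of conditional expectation, and of $\fuinv$), and $\{\rhoo,\rhoo_\CH'\}$ is strongly consistent (by \thref{l:1} and the tower property exactly as in ``$\Leftarrow$''); therefore $\rhoo_\CH=\rhoo_\CH'$ by the uniqueness statement \thref{l:2}, and $\rho_\CH=f_{\rho_\CH}\circ(-\rhoo_\CH)$ has the form \eqref{eq:ace} with $g_\CH:=f_{\rho_\CH}$. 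Finally, $g=f_{\rho}$ and $g_\CH=f_{\rho_\CH}$ are strictly antitone, satisfy the Lebesgue property, and ($g_\CH$) are $\CH$-local by \thref{l:-1}, and $0\in\Imag g\cap\Imag g_\CH$ by CRM axiom (i) together with $\Imag f_{\rho_\CG}=\Imag\rho_\CG$ (\thref{l:0}). I expect the genuine difficulty to be the reduction to the functional equation and the appeal to the classification of quasi-arithmetic means: this is where the two atomlessness hypotheses must be used carefully to construct random vectors with prescribed (conditional) laws; once $\rhoo$ is pinned down, formula \eqref{eq:ace} follows essentially for free from \thref{l:2}.
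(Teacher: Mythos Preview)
Your proof is correct and shares the overall scaffolding of the paper's argument: pass to the normalized CRMs, use \thref{l:4neu} together with the conditional--atomless hypotheses to manufacture the key mixture identity, derive the certainty--equivalent form of $\rhoo$, and then invoke \thref{l:2} to pin down $\rhoo_\CH$. The ``$\Leftarrow$'' direction and the ``in particular'' assertion are handled exactly as in the paper.

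The one genuine difference is in how you identify the multivariate utility $u$. The paper works directly on $\R^d$: it defines a preference order on compactly supported laws via $\rho$, uses \thref{l:4neu} to verify the \emph{independence axiom} for that preference order, and then invokes the von~Neumann--Morgenstern theorem (F\"ollmer--Schied, Corollary~2.28) to obtain a continuous $u:\R^d\to\R$ in one stroke; the intermediate value theorem then gives $c(\mu)=f_u^{-1}(\int u\,d\mu)$. Your route instead encodes the independence axiom as the replacement identity $c\big(\sum_j p_j\mu_j\big)=c\big(\sum_j p_j\delta_{c(\mu_j)\vecone}\big)$, restricts to laws supported on the diagonal $\{t\vecone\}$ to reduce to a one-dimensional problem, appeals to the Kolmogorov--Nagumo characterisation of quasi-arithmetic means to obtain $v=f_u$, and finally defines $u(x):=v(c(\delta_x))$ and extends by weak approximation. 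Both arguments are valid; the paper's is shorter because the vNM theorem already packages the multivariate step, whereas your approach trades one classical black box (vNM) for another (Kolmogorov--Nagumo) plus an extension argument. A minor point: the reference to ``the characterisation of quasi-arithmetic means used in \cite{Follmer2014}'' is slightly off---F\"ollmer's paper (like the present one) goes through vNM, not Kolmogorov--Nagumo---so you should cite the latter result independently.
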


The common function $u:\R^d\to\R$ appearing in \eqref{eq:ace2} and \eqref{eq:ace} can be seen as a multivariate utility where  $u$ being strictly increasing means that $x,y\in \R^d$ with $x\geq y$ and $x\neq y$ implies $u(x)>u(y)$. So $f_{u}^{-1}\big(\EW{\PW}{u(\cdot)}\big)$ and $f_{u}^{-1}\big(\BEW{\PW}{u(\cdot)}{\CH}\big)$ are (conditional) certainty equivalents -- in the univariate case ($d=1$) we clearly have $f_u^{-1}=u^{-1}$. Thus if $\rho$ and/or $\rho_\CH$ in \thref{t:1} are normalized on constants (and hence $f_\rho\equiv-\operatorname{id}$ or $f_{\rho_\CH}\equiv-\operatorname{id}$), then $\rho$ and/or $\rho_\CH$ equal (minus) certainty equivalents. But \eqref{eq:ace2} and \eqref{eq:ace} also comprise other prominent classes of risk measures. For instance if $f_\rho=-f_u$ or $f_{\rho_\CH}=-f_u$, then
$\rho_\CH(X)=-\EW{\PW}{u(X)}$ is an multivariate expected utility whereas $\rho_\CH(X)=-\BEW{\PW}{u(X)}{\CH}$ is a multivariate conditional expected utility.

\begin{proof} For the last assertion of the theorem note that since $u$ is a deterministic function, we have for $\alpha\in\Linf{\CH}{}$ that
\begin{align*}
f_{\rho_\CH}(\alpha)&=\rho_\CH(\alpha\vecone)=g_{\CH}\left(f_{u}^{-1}\big(\BEW{\PW}{u(\alpha\vecone)}{\CH}\big)\right)\\
&=g_{\CH}\left(f_{u}^{-1}\big(f_{u}(\alpha)\big)\right)=g_{\CH}(\alpha)
\end{align*}
and analogously we obtain $f_\rho\equiv g$.

Next we prove sufficiency in the first statement of the theorem: Let $\rho_\CH$ and $\rho$ be as in \eqref{eq:ace} and \eqref{eq:ace2}.
It is easily verified that $\rho_\CH$ and $\rho$ are (conditionally) law-invariant CRMs. Furthermore, since $f_{u}^{-1}$ is strictly increasing and $g_{\CH}$ is strictly antitone and $\CH$-local, we have for each $X,Y\in\Linf{\CF}{d}$ with $\rho_\CH(X)\geq \rho_\CH(Y)$ that 
$$\BEW{\PW}{u(X)}{\CH}\leq \BEW{\PW}{u(Y)}{\CH}.$$
But this implies that also $\EW{\PW}{u(X)}\leq \EW{\PW}{u(Y)}$ and thus that $\rho(X)\geq\rho(Y)$, i.e.\ $\{\rho,\rho_\CH\}$ is strongly consistent. 

Now we prove necessity in the first statement of the theorem: We assume in the following that $\rho$ and $\rho_\CH$ are normalized on constants and follow the approach of \cite{Follmer2014}~Theorem 3.4. The idea is to introduce a preference order $\prec$ on multivariate distributions $\mu,\nu$ on $(\R^d,{\cal B}(\R^d))$ with bounded support given by
$$\mu\prec\nu\quad\Longleftrightarrow\quad \rho(X)>\rho(Y),\quad \mbox{with $X\sim\mu$ and $Y\sim\nu$}.$$ Here ${\cal B}(\R^d)$ denotes the Borel-$\sigma$-algebra on $\R^d$ and $X\sim\mu$ means that the distribution of $X\in \Linf{\CF}{d}$ under $\PW$ is $\mu$.
It is well-known that if this preference order fulfills a set of conditions, then there exists a von Neumann-Morgenstern representation, that is 
\begin{equation}\label{eq:vonNeumann}
\mu\prec\nu\quad\Longleftrightarrow\quad \int u(x)\,\mu(dx)<\int u(x)\,\nu(dx),
\end{equation}
where $u:\R^d\to\R$ is a continuous function. Sufficient conditions to guarantee \eqref{eq:vonNeumann} are that $\prec$ is continuous and fulfills the independence axiom; cf.\  \cite{Follmer2011}~Corollary~2.28. We refer to \cite{Follmer2011} for a definition and comprehensive discussion of preference orders and the mentioned properties. 
Suppose for the moment that we have already proved \eqref{eq:vonNeumann}. Note that strict antitonicity of $\rho$ implies that $\delta_x\succ\delta_y$ whenever $x,y\in \R^d$ satisfy $x\geq y$ and $x\neq y$. Hence $u(x)=\int u(s)\,\delta_x(ds)>\int u(s)\,\delta_y(ds)=u(y)$, and we conclude that $u$ is necessarily strictly increasing as claimed.  

Now we prove \eqref{eq:vonNeumann}: The proof of continuity of $\prec$ is completely analogous to the corresponding proof in \cite{Follmer2014}~Theorem~3.4, so we omit it here. 
The crucial property is the independence axiom, which states that for any three distributions $\mu,\nu,\vartheta$ such that $\mu\preceq\nu$ and for all $\lambda\in(0,1]$, we have 
$$\lambda\mu+(1-\lambda)\vartheta\preceq\lambda\nu+(1-\lambda)\vartheta.$$
Since $(\WR,\CF,\PW)$ is conditionally atomless given $\CH$, we can find $X,Y,Z\in \Linf{\CF}{d}$ which are independent of $\CH$ such that $X\sim\mu,Y\sim\nu$ and $Z\sim\vartheta$. Furthermore, since $(\WR,\CH,\PW)$ is atomless, we can find an $A\in\CH$ with $\PW(A)=\lambda$. It can be easily seen that $X\ind_A+Z\ind_{A^C}\sim \lambda\mu+(1-\lambda)\vartheta$ and $Y\ind_A+Z\ind_{A^C}\sim \lambda\nu+(1-\lambda)\vartheta$. Moreover, since $\mu\preceq\nu$, we have that $\rho(X)\geq\rho(Y)$. As $\{\rho,\rho_\CH\}$ is strongly consistent and as $\rho$ is  law-invariant, we know from \thref{l:veerblawinv} that $\rho_\CH$ is conditionally law-invariant. This ensures that we can apply \thref{l:4neu} to the random vectors $X$ and $Y$ which are independent of $\CH$. Therefore, by $\CH$-locality of $\rho_\CH$ and recalling \thref{r:34}
\begin{align*}
\rho\left(X\ind_A+Z\ind_{A^C}\right)&=\rho\left(-\rho_\CH\left(X\ind_A+Z\ind_{A^C}\right)\vecone\right) \\ &= \rho\left(-\rho_\CH(X)\ind_A\vecone-\rho_\CH(Z)\ind_{A^C}\vecone\right)\\
																				&=\rho\left(-\rho(X)\ind_A\vecone-\rho_\CH(Z)\ind_{A^C}\vecone\right)\\
																				&\geq\rho\left(-\rho(Y)\ind_A\vecone-\rho_\CH(Z)\ind_{A^C}\vecone\right)\;=\;\rho\left(Y\ind_A+Z\ind_{A^C}\right),
\end{align*}
which is equivalent to $\lambda\mu+(1-\lambda)\vartheta\preceq\lambda\nu+(1-\lambda)\vartheta$. Thus there exists a von Neumann-Morgenstern representation \eqref{eq:vonNeumann} with a continuous and strictly increasing utility function $u:\R^d\to \R$.\\
In the next step we define $f_{u}:\R\to\R;x\mapsto u(x\vecone)$. Then $f_{u}$ is strictly increasing and continuous and thus $f_{u}^{-1}$ exists. Let $\mu$ be an arbitrary distribution on $(\R^d,{\cal B}(\R^d))$ with bounded support and $X\sim\mu$. Then
$$\rho\big(\|X\|_{d,\infty}\vecone\big)\leq \rho(X)\leq \rho\big(-\|X\|_{d,\infty}\vecone\big)$$
and hence 
\begin{align*}
f_{u}(-\|X\|_{d,\infty})&=\int u(x)\;\delta_{-\|X\|_{d,\infty}\vecone}(dx) \leq\int u(x)\;\mu(dx)\\
	&\leq\int u(x)\;\delta_{\|X\|_{d,\infty}\vecone}(dx)= f_{u}(\|X\|_{d,\infty}).
\end{align*} 
The intermediate value theorem now implies the existence of a constant $c(\mu)\in\R$ such that
$$ f_{u}\big(c(\mu)\big)=\int u(x)\;\mu(dx)\quad\Longleftrightarrow\quad c(\mu)= f_{u}^{-1}\left(\int u(x)\;\mu(dx)\right).$$
Finally, since $\delta_{c(\mu)\vecone}\approx \mu$, we have 
\begin{align*}
\rho(X)&=\rho\big(c(\mu)\vecone\big)=-c(\mu)=- f_{u}^{-1}\left(\int u(x)\;\mu(dx)\right)=- f_{u}^{-1}\big(\EW{\PW}{u(X)}\big).
\end{align*}
Hence, we have proved \eqref{eq:ace2} (with $g\equiv -\operatorname{id}$). Define  $$\psi_{\CH}(X):=- f_{u}^{-1}\big(\BEW{\PW}{u(X)}{\CH}\big), \quad X\in \Linf{\CF}{d},$$ then we have seen in the first part of the proof that $\psi_{\CH}$ is a CRM which is strongly consistent with $\rho$. Moreover, $\psi_\CH$ is normalized on constants. Thus it follows by \thref{l:2} that $\rho_\CH=\psi_{\CH}$. If $\rho$ and/or $\rho_\CH$ are not normalized on constants, then considering the normalized CRMs $-f_{\rho}^{-1}\circ\rho$ and $-f_{\rho_\CH}^{-1}\circ\rho_\CH$ as introduced after \thref{r:3}, the result follows from $\rho=f_{\rho}\circ\big(-(-f_{\rho}^{-1}\circ\rho)\big)$ and $\rho_\CH=f_{\rho_\CH}\circ\big(-(-f_{\rho_\CH}^{-1}\circ\rho_\CH)\big)$, i.e.\ $g=f_\rho$ and $g_\CH=f_{\rho_\CH}$.\\
\end{proof}

%\begin{Remark}
%We have seen in \thref{t:1} that under some mild technical assumptions on the probability space %every pair of strongly consistent law-invariant CRMs $\{\rho,\rho_\CH\}$ is of the form %\eqref{eq:ace} (or \eqref{eq:ace2}). This class comprises two important subclasses. Firstly, %if $f_{\rho_\CH}=-\id$, then the CRM is a multivariate conditional certainty equivalent with %deterministic utility function $u$, i.e.
%$$\rho_\CH(X)=-f_u^{-1}\left(\BEW{\PW}{u(X)}{\CH}\right),\quad\forall X\in\Linf{\CF}{d}.$$ %Secondly, if $f_{\rho_\CH}=-f_u$, then
%$$\rho_\CH(X)=-\BEW{\PW}{u(X)}{\CH},\quad\forall X\in\Linf{\CF}{d},$$
%is a multivariate conditional expected utility. %Analogous results also hold for $f_\rho=-\id$ %or $f_\rho=-f_u$.
%\end{Remark}

Recall \thref{l:riskanti2} where we proved that if a multivariate CRM $\rho_\CH$ is strongly consistent in a forward looking way with an aggregation $\rho_\CF$ under full information $\CF$ (and $\rho_\CF$ fulfills \eqref{eq:const}), then the multivariate CRM can be decomposed as in \eqref{eq:help5}. The following \thref{c:decomp} shows that we also obtain such a decomposition \eqref{eq:help5} under law-invariance by requiring strong consistency of $\rho_\CH$ in a backward looking way with $\rho$ given trivial information $\{\emptyset,\Omega\}$.

When stating \thref{c:decomp} we will need an extension of $f_{\rho_\CH}$ to $\Linf{\CF}{}$: Suppose that the process $\R\ni a\mapsto f_{\rho_\CH}(a)$ allows for a continuous realization. Due to the fact that $\rho_\CH$ is strictly antitone and $\CH$-local, we can find a possibly different realization $f_{\rho_\CH}(\cdot,\cdot)$ such that $\widetilde{f}_{\rho_\CH}:\R\times\WR\to\R: x\mapsto f_{\rho_\CH}(x,\om)$ is continuous and strictly decreasing in the first argument for all $\om\in\WR$. Note that there exists a well-defined inverse $\widetilde{f}_{\rho_\CH}^{-1}(\cdot,\om)$ of $\widetilde{f}_{\rho_\CH}(\cdot,\om)$ for all $\om\in\WR$. Now define the functions 
\begin{equation}\label{eq:help6}\bar{f}_{\rho_\CH}:\Linf{\CF}{}\to\Linf{\CF}{};\quad F\mapsto \widetilde{f}_{\rho_\CH}(F(\om),\om)\end{equation}
and $$\bar{f}_{\rho_\CH}^{-1}:\Imag\bar{f}_{\rho_\CH}\to \Linf{\CF}{}; \quad F\mapsto \widetilde{f}_{\rho_\CH}^{-1}(F(\om),\om ),$$  
where we with the standard abuse of notation identify the random variable $\widetilde{f}_{\rho_\CH}(F(\om),\om)$ or $\widetilde{f}_{\rho_\CH}^{-1}(F(\om),\om )$ with the equivalence classes they generate in $\Linf{\CF}{}$.

%As $f_{\rho_\CH}(x,\cdot)$, $x\in \R$, is measurable with respect to the $\sigma$-algebra generated by the process  $\R\ni a\mapsto f_{\rho_\CH}(a)$, i.e.\ $\sigma\left(f_{\rho_\CH}(a): a\in\R\right)$, we have that 
\noindent By construction of $\bar f_{\rho_\CH}$ we have that
$$\bar f_{\rho_\CH}(\Linf{\CJ}{})\subseteq\Linf{\CJ}{}$$ for all $\sigma$-algebras $\CJ$ such that  $\sigma\left(f_{\rho_\CH}(a,\cdot),a\in\R\right)\subseteq\CJ\subseteq\CF$, c.f.\ \cite{Hoffmann2016}~Lemma~3.1. By definition $\bar f_{\rho_\CH}$ is also $\CF$-local and has the Lebesgue property due to continuity of $\R\ni a\mapsto \widetilde f_{\rho_\CH}(a,\omega)$. Moreover, $\CH$-locality and continuity also imply that indeed $\bar f_{\rho_\CH}(X)=f_{\rho_\CH}(X)$ for all $X\in\CH$ (approximation by simple random variables), so $\bar f_{\rho_\CH}$ is indeed an extension of $f_{\rho_\CH}$ to  $\Linf{\CF}{}$.

\begin{Theorem}\th\label{c:decomp}
Under the same conditions as in \thref{t:1} let $\{\rho, \rho_\CH\}$ be strongly consistent.  Then $\rho$ can be decomposed as
$$\rho=\eta\circ \Lambda,$$
where $$\Lambda:\Linf{\CF}{d}\to\Linf{\CF}{};\quad  X\mapsto -f_\rho\left(f^{-1}_u\left(u(X)\right)\right)$$ is a $\{\emptyset,\WR\}$-conditional aggregation function,
$$\eta:\Imag \Lambda\to\R;\quad F\mapsto -U^{-1}\left(\EW{\PW}{U(F)}\right)$$ is a law-invariant univariate certainty equivalent given by the (deterministic) utility
$$U:\Imag\rho\to\R;\quad  a\mapsto f_u\left(f_\rho^{-1}(-a)\right)$$ 
%$$U:\Imag\Lambda\to\Linf{\CF}{};\quad  F\mapsto f_u\left(f_\rho^{-1}(-F)\right)$$ 
which is strictly increasing and continuous. Here $u:\R^d\to \R$ is the multivariate utility function from \thref{t:1}.  %The conditional aggregation function $\Lambda$ is given by $$\Lambda(X)=-f_\rho\left(f^{-1}_u\left(u(X)\right)\right).$$
\\If the function $\R\ni a\mapsto f_{\rho_\CH}(a)$ has a continuous realization, then $\rho_\CH$ can be decomposed as 
$$\rho_\CH=\eta_\CH\circ\Lambda_\CH,$$
with
$$\eta_\CH\left(\Lambda_\CH(X)\right)=-\Lambda_\CH(X),\quad\text{for all }X\in\Linf{\CH}{d},$$
where
\begin{itemize}
\item $\Lambda_\CH:\Linf{\CF}{d}\to\Linf{\CF}{};\quad X\mapsto -\bar f_{\rho_\CH}\left(f_u^{-1}\left(u(X)\right)\right)$ is a $\sigma\left(f_{\rho_\CH}(a,\cdot):a\in\R\right)$-conditional aggregation function ($f_{\rho_\CH}(a,\cdot)$ denotes a continuous realization with strictly increasing paths);
\item $\eta_\CH:\Imag \Lambda_\CH\to\Linf{\CH}{};\quad  F\mapsto -U_\CH^{-1}\left(\BEW{\PW}{U_\CH(F)}{\CH}\right)$ is a univariate conditional certainty equivalent;
\item the stochastic utility $U_\CH:\Imag\Lambda_\CH\to\Linf{\CF}{};\quad F\mapsto f_u\left(\bar f_{\rho_\CH}^{-1}(-F)\right)$ is strictly isotone, $\CF$-local, fulfills the Lebesgue property and $U_\CH^{-1}(\Imag U_\CH\cap\Linf{\CH}{})\subseteq\Linf{\CH}{}$;
\item $\bar f_{\rho_\CH}$ is given in \eqref{eq:help6}.
\end{itemize}
Moreover, it holds that 
\begin{equation}\label{eq:help7}U_\CH\circ \Lambda_\CH=u=U\circ\Lambda\end{equation} 
are deterministic and independent of the chosen information $\CH$ or $\{\WR,\emptyset\}$.\\
Finally we also have that $f_{\Lambda_\CH}^{-1}\circ \Lambda_\CH=f_u^{-1}\circ u=f_{\Lambda}^{-1}\circ\Lambda$, i.e.\ $\{\Lambda,\Lambda_\CH\}$ is strongly consistent as defined in \eqref{eq:consoflam}.
\end{Theorem}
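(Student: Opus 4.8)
The plan is to reduce both identities to elementary computations with the auxiliary maps $f_\Lambda$ and $f_{\Lambda_\CH}$ of \thref{def:fLambda}, using the explicit forms of $\Lambda$ and $\Lambda_\CH$ already established in the theorem, and then to read off strong consistency from the recursive characterisation proved in \thref{l:1} (spelled out for aggregation functions in \thref{r:321}). Throughout, $f_\rho,f_u,f_\rho^{-1},f_u^{-1}$ and the maps $\bar f_{\rho_\CH},\bar f_{\rho_\CH}^{-1}$ from \eqref{eq:help6} are understood as pointwise operators on $\Linf{\CF}{}$, exactly as they appear in the definitions of $\Lambda$ and $\Lambda_\CH$; here the continuous realization of $a\mapsto f_{\rho_\CH}(a)$, under which the $\rho_\CH$-part of the theorem is stated, is what makes $\bar f_{\rho_\CH}$ available and a pointwise bijection of $\Linf{\CF}{}$ onto its image.

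First I would compute $f_\Lambda$ and $f_{\Lambda_\CH}$. For $F\in\Linf{\CF}{}$ one has $u(F\vecone)(\om)=u(F(\om)\vecone)=f_u(F(\om))$, i.e.\ $u(F\vecone)=f_u(F)$, and since $f_u$ is a continuous strictly increasing bijection onto $\Imag f_u$ this gives $f_u^{-1}(u(F\vecone))=f_u^{-1}(f_u(F))=F$. Substituting this into the formulas for $\Lambda$ and $\Lambda_\CH$ yields
\[
 f_\Lambda(F)=\Lambda(F\vecone)=-f_\rho(F),\qquad f_{\Lambda_\CH}(F)=\Lambda_\CH(F\vecone)=-\bar f_{\rho_\CH}(F),
\]
hence $f_\Lambda^{-1}(G)=f_\rho^{-1}(-G)$ on $\Imag f_\Lambda$ and $f_{\Lambda_\CH}^{-1}(G)=\bar f_{\rho_\CH}^{-1}(-G)$ on $\Imag f_{\Lambda_\CH}$, the well-definedness of these inverses being part of \thref{r:fLambdaprop}. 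Next I would record that for every $X\in\Linf{\CF}{d}$ the random variable $Z:=f_u^{-1}(u(X))$ is a well-defined element of $\Linf{\CF}{}$: with $M:=\|X\|_{d,\infty}$ we have $f_u(-M)\le u(X)\le f_u(M)$, so $u(X(\om))\in\Imag f_u$ for $\PW$-a.e.\ $\om$ by the intermediate value theorem and $|Z|\le M$ — the same kind of step as in the proof of \thref{t:1}. Since by construction $\Lambda(X)=-f_\rho(Z)$ and $\Lambda_\CH(X)=-\bar f_{\rho_\CH}(Z)$, the displays above give
\[
 f_\Lambda^{-1}(\Lambda(X))=f_\rho^{-1}(-\Lambda(X))=f_\rho^{-1}(f_\rho(Z))=Z=f_u^{-1}(u(X)),
\]
\[
 f_{\Lambda_\CH}^{-1}(\Lambda_\CH(X))=\bar f_{\rho_\CH}^{-1}(-\Lambda_\CH(X))=\bar f_{\rho_\CH}^{-1}(\bar f_{\rho_\CH}(Z))=Z=f_u^{-1}(u(X)),
\]
where we used $f_\rho^{-1}\circ f_\rho=\id$ pointwise on $\R$ (valid since $f_\rho$ is strictly antitone, hence injective) and that $\bar f_{\rho_\CH}^{-1}$ is the pointwise inverse of $\bar f_{\rho_\CH}$ on $\Linf{\CF}{}$. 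This is precisely the asserted chain $f_{\Lambda_\CH}^{-1}\circ\Lambda_\CH=f_u^{-1}\circ u=f_\Lambda^{-1}\circ\Lambda$.

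It then remains to conclude \eqref{eq:consoflam} for $\{\Lambda,\Lambda_\CH\}$. Since $\Lambda$ and $\Lambda_\CH$ are conditional aggregation functions, \thref{r:fLambdaprop} gives that $f_\Lambda,f_{\Lambda_\CH}$ and their inverses are (strictly) isotone and that $\Lambda(X)=f_\Lambda(f_\Lambda^{-1}(\Lambda(X)))$, and similarly for $\Lambda_\CH$. Hence, if $\Lambda_\CH(X)\le\Lambda_\CH(Y)$, isotonicity of $f_{\Lambda_\CH}^{-1}$ gives $f_{\Lambda_\CH}^{-1}(\Lambda_\CH(X))\le f_{\Lambda_\CH}^{-1}(\Lambda_\CH(Y))$; by the chain of equalities this reads $f_\Lambda^{-1}(\Lambda(X))\le f_\Lambda^{-1}(\Lambda(Y))$; applying the isotone map $f_\Lambda$ and the recovery property yields $\Lambda(X)\le\Lambda(Y)$, which is \eqref{eq:consoflam}. (This is the aggregation-function version of the equivalence in \thref{l:1}/\thref{r:321} and could alternatively just be invoked directly.) I do not expect a genuine obstacle here once the decompositions and the explicit forms of $\Lambda,\Lambda_\CH$ are in hand; the only points that need real care are the well-definedness of $Z=f_u^{-1}(u(X))$ in $\Linf{\CF}{}$ and keeping the deterministic functions $f_\rho,f_u,\bar f_{\rho_\CH}$ notationally separate from the pointwise operators they induce, so that the cancellations $f_\rho^{-1}\circ f_\rho=\id$ and $\bar f_{\rho_\CH}^{-1}\circ\bar f_{\rho_\CH}=\id$ are legitimate on the relevant domains.
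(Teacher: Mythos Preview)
Your argument is correct for what it covers, but it addresses only the very last sentence of \thref{c:decomp}: the chain $f_{\Lambda_\CH}^{-1}\circ\Lambda_\CH=f_u^{-1}\circ u=f_\Lambda^{-1}\circ\Lambda$ and the resulting strong consistency of $\{\Lambda,\Lambda_\CH\}$. You treat the explicit forms of $\Lambda,\Lambda_\CH,\eta,\eta_\CH,U,U_\CH$ as ``already established in the theorem'', but these formulas are the \emph{conclusion}, not the hypothesis. The bulk of the theorem is the assertion that with these definitions one actually has $\rho=\eta\circ\Lambda$ and $\rho_\CH=\eta_\CH\circ\Lambda_\CH$, that $\Lambda$ and $\Lambda_\CH$ are conditional aggregation functions of the claimed type, that $U_\CH$ has the listed properties, that $\eta_\CH(\Lambda_\CH(X))=-\Lambda_\CH(X)$ on $\Linf{\CH}{d}$, and that \eqref{eq:help7} holds. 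None of this appears in your proposal.

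The paper supplies exactly this missing part: starting from the representation $\rho_\CH(X)=f_{\rho_\CH}\big(f_u^{-1}(\BEW{\PW}{u(X)}{\CH})\big)$ obtained in \thref{t:1}, it inserts the identities $\bar f_{\rho_\CH}^{-1}\circ\bar f_{\rho_\CH}=\id$ and $f_u\circ f_u^{-1}=\id$ inside the conditional expectation to rewrite $\rho_\CH$ as $-U_\CH^{-1}\big(\BEW{\PW}{U_\CH(\Lambda_\CH(X))}{\CH}\big)=\eta_\CH(\Lambda_\CH(X))$, and then checks the stated properties of $U_\CH$ and $\Lambda_\CH$ (in particular the $\sigma(f_{\rho_\CH}(a,\cdot):a\in\R)$-conditionality via the structure of $\bar f_{\rho_\CH}$) and the normalization $\eta_\CH(\Lambda_\CH(X))=-\Lambda_\CH(X)$ for $X\in\Linf{\CH}{d}$. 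The case of $\rho$ is handled analogously. Once that is in place, the computation you give for $f_\Lambda,f_{\Lambda_\CH}$ and the final chain is the natural (and correct) way to finish; your verification that $Z=f_u^{-1}(u(X))\in\Linf{\CF}{}$ and the cancellation steps are fine, and the same cancellation also yields \eqref{eq:help7} immediately, which you could add in one line.
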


\begin{proof}
By \thref{t:1} we have that
\begin{align*}\rho_\CH(X) &=f_{\rho_\CH}\left(f_{u}^{-1}\left(\BEW{\PW}{u(X)}{\CH}\right)\right)\\
&=\bar f_{\rho_\CH}\left(f_{u}^{-1}\left(\BEW{\PW}{f_{u}\left(\bar f_{\rho_\CH}^{-1}\left(\bar f_{\rho_\CH}\left(f_{u}^{-1}\left(u(X)\right)\right)\right)\right)}{\CH}\right)\right),
\end{align*}
where $u$ and $f_u$ are given in \thref{t:1}. Hence, recalling the definitions of $U_\CH$, $\eta_\CH$, and $\Lambda_\CH$, we have $\rho_\CH=\eta_\CH\circ\Lambda_{\CH}$.
% Let 
%$$ U_\CH:\Imag \Lambda_\CH\to\Linf{\CF}{};\quad F\mapsto f_u\left(\bar f_{\rho_\CH}^{-1}(-F)\right),$$
%$$\eta_\CH: \Imag\Lambda_{\CH}\to\Linf{\CH}{};F\mapsto - U_{\CH}^{-1}
%\left(\BEW{\PW}{U_{\CH}(F)}{\CH}\right)$$
%and 
%$$\Lambda_{\CH}:\Linf{\CF}{d}\to\Linf{\CF}{};X\mapsto U_{\CH}^{-1}\big(u(X)\big),$$
%then $\rho_\CH=\eta_\CH\circ\Lambda_{\CH}$. 
It can be readily seen that $U_\CH$ as well as $U_\CH^{-1}$, and thus also $\Lambda_\CH$, are $\CF$-local, strictly isotone, and fulfill the Lebesgue property. As  $\bar f_{\rho_\CH}(\Linf{\CJ}{})\subseteq\Linf{\CJ}{}$ for all $\sigma$-algebras $\CJ$ such that  $\sigma\left(f_{\rho_\CH}(a,\om):a\in\R\right)\subseteq\CJ\subseteq\CF$, the same also applies to $\Lambda_\CH=-\bar f_{\rho_\CH}\circ f_u^{-1}\circ u$ and we conclude that $\Lambda_\CH$ is a $\sigma\left(f_{\rho_\CH}(a,\om):a\in\R\right)$-conditional aggregation function.
%For $\alpha\in\Linf{\CH}{}$ we have that $ (U^\CH)^{-1}(\alpha)=-\Wh^\CH\left(f_u^{-1}(\alpha)\right)=-f_{\rho_\CH}\left(f_u^{-1}(\alpha)\right)\in\Linf{\CH}{}$. In particular $(U^\CH)^{-1}(a)$ is $\WCH$-measurable for all $a\in\R\cap\Imag U^\CH$.
Moreover, for $X\in\Linf{\CH}{d}$
$$\eta_\CH\left(\Lambda_\CH(X)\right)=\bar f_{\rho_\CH}\left(f_u^{-1}(u(X))\right)=- U_\CH^{-1}\big(u(X)\big)=-\Lambda_{\CH}(X).$$
The result for $\rho$ follows similarly to the proof above without requiring a continuous realization and by using the canonical extension of $f_\rho$ from $\R$ to $\Linf{\CF}{d}$, i.e.\ $\bar f_\rho (F)(\om)=f_\rho(F(\om))$ for all $\om\in\WR$ and $F\in\Linf{\CF}{}$. 
%Now let $X\in\Linf{\CJ}{d}$ for some $\CJ\subseteq\CF$, then 
%$$U^\CI\left(\Lambda^\CI(X)\right)=U^\CI\left((U^\CI)^{-1}(u(X))\right)=u(X)\in\Linf{\CJ}{},$$
%which also implies that $(U^\CH)^{-1}\circ \Lambda_\CH=(U)^{-1}\circ\Lambda$. Moreover for all $F\in\Linf{\CF}{}$ we have that $f_{\Lambda^\CI}(F)=(U^\CI)^{-1}\left(f_u(F)\right)=-\Wh^{\CI}\left(f_u^{-1}\big(f_u(F)\big)\right)=-\Wh^{\CI}(F)$ and thus for all $X\in\Linf{\CJ}{d},\CJ\subseteq\CF$
%$$f^{-1}_{\Lambda^\CI}(\Lambda^\CI(X))=(-\Wh^\CI)^{-1}\left(-\Wh^{\CI}\big(f_u^{-1}(u(X))\big)\right)=f_u^{-1}(u(X))\in\Linf{\CJ}{d}.$$
\end{proof}

We remark that \eqref{eq:help7} is the crucial fact which ensures that $\rho$ and $\rho_\CH$ are strongly consistent and (conditionally) law-invariant.

In \th\ref{c:decomp} we have seen that basically every CRM which is strongly consistent with a law-invariant CRM under trivial information can be decomposed into a conditional aggregation function and a univariate conditional certainty equivalent. For the rest of this section we study the effect of additional properties of the CRMs on this decomposition. For instance, we want to identify conditions under which the univariate conditional certainty equivalent is generated by a deterministic (instead of a stochastic) utility function; see \thref{c:new1}. 
Also we study what happens if the univariate CRMs $\eta$ and $\eta_\CH$ from \th\ref{c:decomp} are required to be strongly consistent; see \thref{c:etastrongcons}. 
%We will see that under certain circumstances these aggregation functions can only be affine transformations of one another; see \thref{c:etastrongcons}.some implications of strong consistency of the (conditional) aggregation functions
%Now if $\rho$ or $\rho_\CH$ happen to be normalized on constants (and thus $f_\rho\equiv-\operatorname{id}$ or $f_{\rho_\CH}\equiv-\operatorname{id}$) it immediately follows that:

%Due to the special structure of the multivariate CRMs in \thref{t:1}, it turns out that  univariate CRMs $\eta$ and $\eta_\CH$ in the decompositions stated in \thref{c:decomp} are indeed univariate conditional certainty equivalents, but the utility function $U_\CH$ may not be deterministic and hence $\eta_\CH$ may not be conditionally law-invariant. However, if $\rho$ and $\rho_\CH$ happen to be normalized on constants it immediately follows that: 

\begin{Corollary}\th\label{c:new1}
In the situation of \thref{c:decomp}, if $\rho$ is normalized on constants, then  
$$\Lambda(X)=f^{-1}_u(u(X)),\quad X\in \Linf{\CF}{d}, $$ and $$\eta(F)=\rho(F\vecone) =-f^{-1}_u(\EW{\PW}{f_u(F)}),\quad F\in \Linf{\CF}{}.$$ If $\rho_\CH$ is normalized on constants, then similarly  $$\Lambda_\CH(X)=f^{-1}_u(u(X)),\quad X\in \Linf{\CF}{d}, $$ and 
$$\eta_\CH(F)=\rho_\CH(F\vecone) =-f^{-1}_u(\BEW{\PW}{f_u(F)}{\CH}),\quad F\in \Linf{\CF}{}.$$ In particular the univariate conditional certainty equivalent $\eta_\CH$ is now given by the deterministic univariate utility function $f_u$, and thus $\eta_\CH$ is conditionally law-invariant. \\
If both $\rho$ and $\rho_\CH$ are normalized on constants, then $\Lambda=\Lambda_\CH$.   
\end{Corollary}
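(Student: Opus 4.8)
The plan is to substitute the normalization hypotheses directly into the explicit formulas of \thref{c:decomp} (and \thref{t:1}); beyond bookkeeping there is essentially nothing to prove. The first step is the translation: ``$\rho$ normalized on constants'' means $f_\rho=-\id$, hence also $f_\rho^{-1}=-\id$, and ``$\rho_\CH$ normalized on constants'' means $f_{\rho_\CH}=-\id$, hence $f_{\rho_\CH}^{-1}=-\id$.

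For the assertions about $\rho$: inserting $f_\rho^{-1}=-\id$ into $\Lambda(X)=-f_\rho(f_u^{-1}(u(X)))$ gives $\Lambda(X)=f_u^{-1}(u(X))$, and into $U(a)=f_u(f_\rho^{-1}(-a))$ gives $U=f_u$, whence $\eta(F)=-U^{-1}(\EW{\PW}{U(F)})=-f_u^{-1}(\EW{\PW}{f_u(F)})$. To identify $\eta(F)$ with $\rho(F\vecone)$ for arbitrary $F\in\Linf{\CF}{}$ I would observe that $u(F\vecone)=f_u(F)$ by the definition of $f_u$, so $\Lambda(F\vecone)=f_u^{-1}(u(F\vecone))=f_u^{-1}(f_u(F))=F$; thus $F\in\Imag\Lambda$ and, since $\rho=\eta\circ\Lambda$, $\rho(F\vecone)=\eta(\Lambda(F\vecone))=\eta(F)$. (The same formula also drops out of $\rho(X)=f_\rho(f_u^{-1}(\EW{\PW}{u(X)}))$ from \thref{t:1} applied to $X=F\vecone$ with $f_\rho=-\id$.)

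For the assertions about $\rho_\CH$ the one point needing care is that $\Lambda_\CH$, $U_\CH$, $\eta_\CH$ in \thref{c:decomp} are built from the continuous realization $\bar f_{\rho_\CH}$ of \eqref{eq:help6}, not from $f_{\rho_\CH}$ directly. But when $f_{\rho_\CH}=-\id$ we have $f_{\rho_\CH}(a)=-a$ in $\Linf{\CH}{}$ for each $a\in\R$, so every continuous realization coincides with $(x,\om)\mapsto -x$ for all rational $x$ outside one common $\PW$-null set and hence, by continuity, for all real $x$ there; we may therefore take $\widetilde f_{\rho_\CH}(x,\om)=-x$, which makes $\bar f_{\rho_\CH}=-\id$ and $\bar f_{\rho_\CH}^{-1}=-\id$. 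Substituting as before yields $\Lambda_\CH(X)=f_u^{-1}(u(X))$, $U_\CH=f_u$ (deterministic), $\eta_\CH(F)=-f_u^{-1}(\BEW{\PW}{f_u(F)}{\CH})$, and $\eta_\CH(F)=\rho_\CH(F\vecone)$ using $\Lambda_\CH(F\vecone)=F$. Since $U_\CH=f_u$ is a deterministic, strictly increasing, continuous utility, the value $-f_u^{-1}(\BEW{\PW}{f_u(F)}{\CH})$ depends on $F$ only through its $\CH$-conditional distribution (conditional expectation together with the deterministic maps $f_u,f_u^{-1}$ respects conditional laws), so $\eta_\CH$ is conditionally law-invariant. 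Finally, if both $\rho$ and $\rho_\CH$ are normalized on constants, the formulas for $\Lambda$ and $\Lambda_\CH$ just obtained are identical, so $\Lambda=\Lambda_\CH$. I expect the only mildly delicate step to be the identification of $\bar f_{\rho_\CH}$ with $-\id$; everything else is routine substitution into \thref{c:decomp}.
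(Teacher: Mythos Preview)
Your proposal is correct and is exactly the intended argument: the paper states \thref{c:new1} without proof, as it follows by direct substitution of $f_\rho=-\id$ (resp.\ $f_{\rho_\CH}=-\id$) into the explicit formulas for $\Lambda$, $U$, $\eta$ (resp.\ $\Lambda_\CH$, $U_\CH$, $\eta_\CH$) in \thref{c:decomp}. Your extra care in identifying $\bar f_{\rho_\CH}$ with $-\id$ via a continuous realization is appropriate and fills in the only point the paper leaves implicit.
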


\begin{Remark}\th\label{r:cashconvex}
Suppose that $\rho$ and $\rho_\CH$ from \thref{c:decomp} are normalized on constants and   that for all $F,G\in\Linf{\CF}{}$, $m,\lambda\in\R$ with $\lambda\in(0,1)$
\begin{equation}\label{eq:cashadd}
\rho(F\vecone+m\vecone)=\rho(F\vecone)-m
\end{equation}
as well as
\begin{equation}\label{eq:convexity}
\rho\big(\lambda F\vecone+(1-\lambda)G\vecone\big)\leq \lambda\rho(F\vecone)+(1-\lambda)\rho(G\vecone).
\end{equation}
Recalling \thref{c:new1} it follows that $\eta(F)=\rho(F\vecone)$ is cash-additive \eqref{eq:cashadd} and convex \eqref{eq:convexity}. Since $f_u$ is a deterministic function it can be easily checked that $\eta$ and $\eta_\CH$ are strongly consistent (conditionally) law-invariant univariate CRMs. Therefore we are in the framework of \cite{Follmer2014}. There it is shown that the univariate CRMs must be either linear or of entropic type, i.e. 
$$f_u(x)=ax+b\quad\text{or}\quad f_u(x)=-a e^{-\beta x}+b,\quad x\in \R, $$ for constants $a,b,\beta\in\R$ with $a,\beta>0$, which implies that
$$\eta_\CH(F)=\BEW{\PW}{-F}{\CH}\quad\text{or}\quad \eta_\CH(F)=\frac1\beta\log\left(\BEW{\PW}{e^{-\beta F}}{\CH}\right)$$
and similarly for $\eta$. Clearly, this also has consequences for the aggregation function $\Lambda=\Lambda_\CH=f_u^{-1}\circ u$ since $x\mapsto u(x\vecone)=f_u(x)$ is either of linear or exponential form. For instance, a possible aggregation would be given by $u(x_1,\ldots, x_d)=a\sum_{i=1}^d w_i x_i+b,$
where $w_i\in (0,1)$ for $i=1,...,d$ such that $\sum_{i=1}^dw_i=1$, because $f_u(x)=ax+b$. In this case the aggregation function is simply 
$\Lambda(x)=\sum_{i=1}^dw_ix_i$.
%{\color{blue}
%$\Lambda$ muss von folgender From sein $\Lambda(x)=w(x)^\top x,w(x)^\top\vecone=1$ und $w_i(x)\in(0,1)$ ansonsten existiert ein $x\in\R^d:\Lambda(x)\notin \big(\min(x),\max(x)\big)$. Sei $\xo=\max(x)$, dann gilt aber $$\xo=f_u^{-1}\left(f_u(\xo)\right)=\Lambda(\xo\vecone)\geq\Lambda(x)>\xo$$
%und analog f\"ur $\underline{x}=\min(x)$.
%Ferner gilt f\"ur $a_i(x):=(a,...,a,x,a,...a)^\top$ ($i$-ter Eintrag ist $x$) und $\varepsilon>0$, dass
%$$w(a_i(x+\varepsilon))^\top(a_i(x+\varepsilon))=\Lambda(a_i(x+\varepsilon))>\Lambda(a_i(x))=w(a_i(x))^\top a_i(x)$$
%und somit
%$$\underbrace{\big[w_i(a_i(x+\varepsilon))-w_i(a_i(x))\big]}_{\in(-1,1)}(x-a)+\underbrace{w_i(a_i(x+\varepsilon))\varepsilon}_{\in(0,\varepsilon)}>0.$$
%Hieraus ergibt sich
%$$\begin{cases}
%w_i(a_i(x+\varepsilon))-w_i(a_i(x))\geq 0	&\text{, f\"ur }x>>a\\
%w_i(a_i(x+\varepsilon))-w_i(a_i(x))\leq 0	&\text{, f\"ur }x<<a
%\end{cases}.$$
%Also m\"ussen die einzelnen Gewichtsfunktionen $w_i$ auch von den anderen Komponenten abh\"angen, z.B.\ je weiter $x$ von $a$ auseinanderliegen desto gr\"o\ss er das Gewicht $w_i$.
%}
\end{Remark}

\begin{Corollary}\th\label{c:etastrongcons}
In the situation of \thref{c:decomp}, suppose that $\eta$ and $\eta_\CH$ are defined on all of  $\Linf{\CF}{}$. Then $\{\eta,\eta_\CH\}$ are strongly consistent if and only if $$\eta=-\Wu^{-1}\left(\EW{\PW}{\Wu(F)}\right)\quad\text{and}\quad\eta_\CH=-\Wu^{-1}\left(\BEW{\PW}{\Wu(F)}{\CH}\right)$$
for a continuous and strictly increasing utility function $\Wu:\R\to\R$. %In that case $U$ and $U_\CH$ are deterministic. 
Moreover, the corresponding (conditional) aggregation functions are given by
$$\Lambda=-f_\rho\circ f^{-1}_u\circ u\quad\text{and}\quad\Lambda_\CH=-f_\rho\circ f_u^{-1}\circ a_\CH\circ u,$$
where $a_\CH(F)=\alpha F+\beta$, $F\in \Linf{\CF}{}$, is a positive affine transformation given by $\alpha,\beta\in \Linf{\CH}{}$ with $\PW(\alpha>0)=1$.
\end{Corollary}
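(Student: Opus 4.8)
The plan is to obtain the representations of $\eta$ and $\eta_\CH$ by specializing \thref{t:1} to the one-dimensional pair $\{\eta,\eta_\CH\}$, and then to extract the two aggregation functions from the identity $U\circ\Lambda=u=U_\CH\circ\Lambda_\CH$ in \eqref{eq:help7}. Sufficiency is the easy direction: if $\eta$ and $\eta_\CH$ are certainty equivalents for one common continuous strictly increasing $\Wu$, then $\eta_\CH(X)\le\eta_\CH(Y)$ forces $\BEW{\PW}{\Wu(X)}{\CH}\ge\BEW{\PW}{\Wu(Y)}{\CH}$, hence $\EW{\PW}{\Wu(X)}\ge\EW{\PW}{\Wu(Y)}$, hence $\eta(X)\le\eta(Y)$, exactly as in the sufficiency part of \thref{t:1}.

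For necessity I would first record that $\eta$ and $\eta_\CH$ are univariate CRMs on all of $\Linf{\CF}{}$ (the latter by the standing assumption of the corollary, the former by \thref{c:decomp}), that $\eta$ is law-invariant, and that the probability-space hypotheses of \thref{t:1} are in force. Applying \thref{t:1} with $d=1$ to $\{\eta,\eta_\CH\}$ (note $f_{\Wu}=\Wu$ in dimension one) then yields a continuous strictly increasing $\Wu\colon\R\to\R$ with $\eta(F)=f_\eta\big(\Wu^{-1}(\EW{\PW}{\Wu(F)})\big)$ and $\eta_\CH(F)=f_{\eta_\CH}\big(\Wu^{-1}(\BEW{\PW}{\Wu(F)}{\CH})\big)$. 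It then remains only to identify $f_\eta$ and $f_{\eta_\CH}$: one has $\eta(a)=-U^{-1}(U(a))=-a$ for $a\in\R$, and, since $U_\CH(a)=f_u(\bar f_{\rho_\CH}^{-1}(-a))\in\Linf{\CH}{}$ for $a\in\Linf{\CH}{}$, also $\eta_\CH(a)=-U_\CH^{-1}(U_\CH(a))=-a$, so both CRMs are normalized on constants and $f_\eta=f_{\eta_\CH}=-\operatorname{id}$. This is the claimed form of $\eta$ and $\eta_\CH$.

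For the aggregation functions, \eqref{eq:help7} gives $U\circ\Lambda=u$, and since $U^{-1}=-f_\rho\circ f_u^{-1}$ (read off from $U(a)=f_u(f_\rho^{-1}(-a))$) one obtains $\Lambda=-f_\rho\circ f_u^{-1}\circ u$ immediately. For $\Lambda_\CH$ the crux is that $U$ (deterministic) and $U_\CH$ (stochastic) represent the same conditional certainty equivalent: as $U$ and $\Wu$ represent the law-invariant $\eta$, they coincide up to a positive affine transformation (uniqueness of von Neumann-Morgenstern representations), whence $\eta_\CH(F)=-U^{-1}(\BEW{\PW}{U(F)}{\CH})=-U_\CH^{-1}(\BEW{\PW}{U_\CH(F)}{\CH})$ for all $F\in\Linf{\CF}{}$. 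Setting $W:=U_\CH\circ U^{-1}$ (acting pathwise, with $\CH$-measurable paths by construction of $\bar f_{\rho_\CH}$), this rearranges to $W(\BEW{\PW}{G}{\CH})=\BEW{\PW}{W(G)}{\CH}$ for all bounded $G$ with values in $\Imag U$. Testing this against $G=x_1\ind_A+x_2\ind_{A^C}$ with $A\in\CF$ independent of $\CH$ of arbitrary probability $p\in[0,1]$ (such $A$ exist because $(\WR,\CF,\PW)$ is conditionally atomless given $\CH$, and $W(x_1),W(x_2)\in\Linf{\CH}{}$) yields $W(px_1+(1-p)x_2)=pW(x_1)+(1-p)W(x_2)$ for all admissible $x_1,x_2,p$; by continuity $W$ is conditionally affine, $W(y)=\gamma y+\delta$ with $\gamma,\delta\in\Linf{\CH}{}$ and $\PW(\gamma>0)=1$. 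Hence $U_\CH=\gamma\,U+\delta$, so with the positive affine transformation $a_\CH(F):=\gamma^{-1}F-\gamma^{-1}\delta$ one has $U_\CH^{-1}=U^{-1}\circ a_\CH$, and therefore $\Lambda_\CH=U_\CH^{-1}\circ u=-f_\rho\circ f_u^{-1}\circ a_\CH\circ u$, as claimed, with $\alpha:=\gamma^{-1}$, $\beta:=-\gamma^{-1}\delta$ in $\Linf{\CH}{}$ and $\PW(\alpha>0)=1$.

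The step I expect to be the main obstacle is this last one, i.e.\ the uniqueness of conditional certainty-equivalent representations: one has to verify carefully that the change-of-utility map $W$ genuinely has $\CH$-measurable paths, so that pulling $W$ through the conditional expectation is legitimate and the coefficients $\gamma,\delta$ land in $\Linf{\CH}{}$, and the conditionally-atomless assumption is needed precisely to produce, for every $p\in[0,1]$, an event independent of $\CH$ of probability $p$, without which the Jensen-type equality would not force affinity of $W$. Everything else is routine bookkeeping built on \thref{t:1}, \thref{c:decomp} and \eqref{eq:help7}.
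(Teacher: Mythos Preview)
Your proposal is correct and follows the same overall strategy as the paper: apply \thref{t:1} in dimension one to obtain the certainty-equivalent form of $\eta,\eta_\CH$ (after checking $f_\eta=f_{\eta_\CH}=-\id$), and then deduce the shape of $\Lambda_\CH$ from an affine-uniqueness argument linking $U_\CH$ to the deterministic utility $U$.

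The only substantive difference is in how the affine relation $U_\CH=\gamma U+\delta$ is established. The paper packages this step into a separate lemma (\thref{l:43}) which compares two \emph{stochastic} utilities generating the same conditional certainty equivalent and proves they differ by an $\CH$-measurable positive affine transformation; the lemma is then applied to the pairs $(U,\Wu)$ and $(U_\CH,\Wu)$ and the results combined. You instead exploit directly that $U$ is deterministic: after showing $\eta_\CH=-U^{-1}(\BEW{\PW}{U(\cdot)}{\CH})$ via classical von Neumann--Morgenstern uniqueness for $U$ and $\Wu$, you set $W=U_\CH\circ U^{-1}$, derive $W(\BEW{\PW}{G}{\CH})=\BEW{\PW}{W(G)}{\CH}$, and test with $G=x_1\ind_A+x_2\ind_{A^C}$ for $A$ independent of $\CH$ to force affinity of $W$ on real arguments. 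Your route is a genuine shortcut here, since one of the two utilities is deterministic; the paper's \thref{l:43} is more general but also longer. The only point to tighten in your write-up is the phrase ``by continuity $W$ is conditionally affine'': what you actually use to pass from $W(y)=\gamma y+\delta$ for $y\in\R$ to the same identity for $y\in\Linf{\CF}{}$ is $\CF$-locality of $U_\CH$ together with the Lebesgue property (approximation by simple functions), exactly as in the paper's proof of \thref{l:43}; this should be said explicitly rather than attributed to continuity alone.
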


\begin{proof}
As $\eta$ is law-invariant, it follows from \thref{l:veerblawinv} that $\eta_\CH$ is conditionally law-invariant. 
Moreover, $f_\eta\equiv f_{\eta_\CH}\equiv -\id$, i.e.\ $\eta$ and $\eta_\CH$ are normalized on constants. Thus by \thref{t:1} we obtain that 
$$\eta=-\Wu^{-1}\left(\EW{\PW}{\Wu(F)}\right)\quad\text{and}\quad\eta_\CH=-\Wu^{-1}\left(\BEW{\PW}{\Wu(F)}{\CH}\right)$$
for a continuous and strictly increasing function $\Wu:\R\to \R$. 
It follows from \th\ref{l:43} below that $U$ as well as $U_\CH$ are affine transformations of $\Wu$. This in turn implies that $U_\CH=\tilde a_\CH\circ U$,
where $\tilde a_\CH(F)=\tilde\alpha F+\tilde\beta$ for $\tilde\alpha,\tilde\beta\in\Linf{\CH}{}$ with $\PW(\tilde\alpha>0)=1$. Finally we obtain that the $\sigma\left(f_{\rho_\CH}(a,\om),a\in\R\right)$-conditional aggregation function $\Lambda_\CH$ is given by
$$\Lambda_\CH=U_\CH^{-1}\circ u=U^{-1}\circ \tilde a_\CH^{-1}\circ u=-f_\rho\circ f_u^{-1}\circ \tilde a_\CH^{-1}\circ u.$$
Since the inverse $a_\CH:=\tilde a_\CH^{-1}$ of an affine function is affine the result follows.
\end{proof}

%Before we can state our last corollary of this section, we need to generalize the well-known result that Neumann-von-Morgenstern representations are unique up to positive affine transformations to stochastic utilities.
\begin{Lemma}\th\label{l:43}
Let $U_\CH$ be the stochastic utility from \th\ref{c:decomp} and let $\WU_\CH:\Imag \Lambda_\CH\to\Linf{\CF}{}$ be another function which is strictly isotone, $\CF$-local, fulfills the Lebesgue property and $\WU_\CH(\Imag \Lambda_\CH\cap\Linf{\CH}{})\subseteq\Linf{\CH}{}$, such that
\begin{equation}\label{eq:1}\WU^{-1}_\CH\left(\BEW{\PW}{\WU_\CH(F)}{\CH}\right)=U^{-1}_\CH\left(\BEW{\PW}{U_\CH(F)}{\CH}\right),\quad\text{for all }F\in\Imag\Lambda_\CH.\end{equation}
Then $\WU_\CH$ is an $\CH$-measurable positive affine transformation of $U_\CH$, i.e.\ there exist $\alpha,\beta\in\Linf{\CH}{}$ with $\PW(\alpha>0)=1$ such that $\WU_\CH(F)=\alpha U_\CH(F)+\beta$ for all $F\in\Imag\Lambda_\CH$. 
\end{Lemma}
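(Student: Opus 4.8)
\emph{Sketch of the intended proof.} The plan is to reduce everything to the composed map $\psi:=\WU_\CH\circ U_\CH^{-1}$, which is defined on $\Imag U_\CH$ and satisfies $\WU_\CH=\psi\circ U_\CH$ on $\Imag\Lambda_\CH$; I want to show that $\psi$ is nothing but an $\CH$-measurable positive affine transformation acting on the real constants in its domain. By the same arguments that prove \thref{l:-1} and \thref{l:0}, adapted to the isotone case (cf.\ \thref{r:fLambdaprop}), $\psi$ is strictly isotone, $\CF$-local and has the Lebesgue property, and the sets $\Imag U_\CH$ and $\Imag\WU_\CH$ are stable under the formation of $\CH$-conditional expectations and convex combinations of their elements; this is exactly what makes $\WU_\CH\circ\WU_\CH^{-1}$ and $U_\CH\circ U_\CH^{-1}$ act as the identity on the expressions below. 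Write $I:=\inti\Imag f_u$, a nonempty open interval since $f_u$ is continuous and strictly increasing. By \thref{c:decomp}, for each constant $s\in I$ the element $a_s:=\Lambda_\CH\big(f_u^{-1}(s)\vecone\big)=-\bar f_{\rho_\CH}\big(f_u^{-1}(s)\big)=-f_{\rho_\CH}\big(f_u^{-1}(s)\big)$ lies in $\Imag\Lambda_\CH\cap\Linf{\CH}{}$ and satisfies $U_\CH(a_s)=s$; hence $\psi(s)=\WU_\CH(a_s)\in\Linf{\CH}{}$ by the hypothesis $\WU_\CH(\Imag\Lambda_\CH\cap\Linf{\CH}{})\subseteq\Linf{\CH}{}$.

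Next I would establish a conditional affinity identity for $\psi$. Fix constants $s,t\in I$ and $\mu\in(0,1)$. Since $(\WR,\CF,\PW)$ is conditionally atomless given $\CH$, there is a random variable independent of $\CH$ with continuous distribution; passing to its distribution function we obtain $V\sim\mathrm{Unif}[0,1]$ independent of $\CH$, and $A:=\{V\leq\mu\}\in\CF$ then satisfies $\PW(A\mid\CH)=\mu$. Put $Z:=a_s\ind_A+a_t\ind_{A^C}$, which belongs to $\Imag\Lambda_\CH$ by $\CF$-locality of $\Lambda_\CH$. Using $\CF$-locality of $U_\CH$ and $\WU_\CH$, the identities $U_\CH(a_s)=s$, $U_\CH(a_t)=t$, $\WU_\CH(a_s)=\psi(s)$, $\WU_\CH(a_t)=\psi(t)$, and pulling the $\CH$-measurable factors out of the conditional expectation, one gets $\BEW{\PW}{U_\CH(Z)}{\CH}=\mu s+(1-\mu)t$ and $\BEW{\PW}{\WU_\CH(Z)}{\CH}=\mu\,\psi(s)+(1-\mu)\,\psi(t)$. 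Substituting these into \eqref{eq:1} with $F=Z$ and applying $\WU_\CH$ to both sides (permissible by the stability noted above) yields
\begin{equation}\label{eq:psiaff}
\psi\big(\mu s+(1-\mu)t\big)=\mu\,\psi(s)+(1-\mu)\,\psi(t)\qquad\text{in }\Linf{\CH}{}.
\end{equation}

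Finally, I would fix $s_0<t_0$ in $I$ and set $\alpha:=\big(\psi(t_0)-\psi(s_0)\big)/(t_0-s_0)$ and $\beta:=\psi(s_0)-\alpha s_0$, both in $\Linf{\CH}{}$. Writing any constant $c\in I$ as a convex combination of two points of $I$ that bracket it, \eqref{eq:psiaff} forces $\psi(c)=\alpha c+\beta$, and by continuity (Lebesgue property along constant sequences) this extends to $c\in\overline I=\Imag f_u$. Strict isotonicity of $\psi$ applied to $s_0<t_0$ gives $\alpha\geq0$ a.s., and, applied to $t_0$ versus $s_0\ind_{E}+t_0\ind_{E^C}$ with $E:=\{\alpha=0\}=\{\psi(s_0)=\psi(t_0)\}\in\CH$ — which have the same $\psi$-image on $E$ by $\CF$-locality — it rules out $\PW(E)>0$, so $\PW(\alpha>0)=1$. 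For general $F\in\Imag\Lambda_\CH$, approximate $G:=U_\CH(F)$ from within $\Imag U_\CH$ by uniformly bounded simple functions $G_n=\sum_i c_i^n\ind_{A_i^n}$ with $c_i^n\in\Imag f_u$ and $G_n\to G$ a.s.; then $\CF$-locality gives $\psi(G_n)=\alpha G_n+\beta$, and the Lebesgue property of $\psi$ yields $\WU_\CH(F)=\psi(G)=\alpha\,U_\CH(F)+\beta$, which is the assertion. The main obstacle I anticipate is the domain/image bookkeeping — verifying, as in \thref{l:0}, that $\Imag U_\CH$ and $\Imag\WU_\CH$ really are closed under the relevant conditional expectations and convex combinations, so that the inverses compose to the identity in every step — together with the upgrade of the strict-isotonicity conclusion from $\PW(\alpha>0)>0$ to $\PW(\alpha>0)=1$.
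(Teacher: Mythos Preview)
Your proposal is correct and follows essentially the same route as the paper: you introduce $\psi=\WU_\CH\circ U_\CH^{-1}$ (the paper calls it $V_\CH$), use the conditionally-atomless assumption to mix two $\CH$-measurable elements via an event $A$ independent of $\CH$, plug the mixture into \eqref{eq:1}, and read off the affinity relation; the $\CH$-measurability of $\alpha,\beta$ and $\PW(\alpha>0)=1$ are then obtained exactly as you indicate. The only organizational difference is that the paper argues by contradiction (assuming $V_\CH$ fails affinity somewhere, localizing to a deterministic failure, and then producing the same mixing contradiction), whereas you compute the identity \eqref{eq:psiaff} directly on real constants and extend by $\CF$-locality and the Lebesgue property---a slightly cleaner path to the same conclusion. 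One cosmetic point: since $f_u:\R\to\R$ is continuous and strictly increasing, $\Imag f_u$ is already open, so $I=\Imag f_u$ and the ``extension to $\overline I$'' is vacuous.
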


\begin{proof}
We have seen in \th\ref{c:decomp} that $U_\CH\circ\Lambda_\CH =u$, where $u$ is strictly increasing and continuous. Thus 
$$\CX:=\Imag U_\CH=u(\Linf{\CF}{d})\subseteq\Linf{\CF}{}$$
and it follows that for all $F\in\CX$ there exists a sequence of $\CF$-simple random variables $(F_n)_{n\in\N}\subseteq\CX$ such that $F_n\to F$ $\PW$-a.s. 
Moreover, by the intermediate value theorem we can find for each $X,Y\in\Linf{\CF}{d}$ and $\lambda\in\Linf{\CF}{}$ with $0\leq\lambda\leq 1$ a random variable $Z$ such that $\min\{-\|X\|_{d,\infty},-\|Y\|_{d,\infty}\}\leq Z\leq \max\{\|X\|_{d,\infty},\|Y\|_{d,\infty}\}$ and for all $\PW$-almost all $\om\in\WR$ 
$$\lambda(\om) u\big(X(\om)\big)+(1-\lambda) u\big(Y(\om)\big)=u\big(Z(\om)\vecone\big)$$
where $X(\cdot),Y(\cdot)$ and $\lambda(\cdot)$ are arbitrary representatives of $X,Y$ and $\lambda$.
Indeed, it can be shown by a measurable selection argument that $Z$ can be chosen to be $\CF$-measurable and hence $\CX$ is $\CF$-conditionally convex in the sense that $\lambda F+(1-\lambda)G\in \CX$ for all $F,G\in \CX$ and $\lambda\in \Linf{\CF}{}$ with $0\leq \lambda\leq 1$.

%Note that we can always assume that $0\in\CX$ by considering the normalized utility $\Wu(x)=u(x)-u(\veczero)$ for all $x\in\R^d$ in \th\ref{t:1}.
%Moreover, there exists an $\CF$-local inverse function $U_\CH^{-1}:\CX\to\Linf{\CF}{}$ which is also a stochastic utility. The only crucial property to show here is $U_\CH^{-1}(\CX\cap\Linf{\CH}{})\subseteq\Linf{\CH}{}$, but since $U_\CH^{-1}=f_u\circ\bar{f}^{-1}_{\rho_\CH}\circ-\id$ this can be shown via Lemma 3.1 in \cite{Hoffmann2016}. (Alternatively by using \eqref{eq:1} it can be shown by contradiction that this would not yield a CBRM, here a function which maps to $\Linf{\CH}{}$).
Next define the strictly isotone and $\CF$-local function 
$$ V_\CH:\CX\to\Linf{\CF}{};\;X\mapsto \WU_\CH\left(U_\CH^{-1}(F)\right),$$
that is $\WU_\CH=V_\CH\circ U_\CH$. Moreover, it easily follows that $V_\CH$ fulfills the Lebesgue property and $V_\CH(\CX\cap\Linf{\CH}{})\subseteq\Linf{\CH}{}$.
We show that $V_\CH$ is an affine function, that is $V_\CH(F)=\alpha F+\beta$ for all $F\in\CX$, where $\alpha,\beta\in\Linf{\CF}{}$. Note that affinity can be equivalently expressed via $V_\CH(\lambda F+(1-\lambda)G)=\lambda V_\CH(F)+(1-\lambda) V_\CH(G)$ for all $F,G\in\CX$ and $\lambda\in\Linf{\CF}{}$ with $0\leq\lambda\leq1$. \\
We suppose that $V_\CH$ is not affine, i.e.\ there are $F,G\in\CX$ and $\lambda\in\Linf{\CF}{}$ with $0\leq\lambda\leq1$ such that 
\begin{equation}\label{eq:gneq}\PW\left(V_\CH(\lambda F+(1-\lambda)G)\neq\lambda V_\CH(F)+(1-\lambda) V_\CH(G)\right)>0.\end{equation}

First note that it suffices to assume that \eqref{eq:gneq} holds for deterministic $F,G$ and $\lambda$. To see this suppose that $ V_\CH$ is affine on deterministic values, but not on the whole of $\CX$, i.e.\ \eqref{eq:gneq} holds for some $F,G\in\CX$ and $\lambda\in\Linf{\CF}{}$ with $0\leq\lambda\leq1$. We know that there exist sequences of $\CF$-simple functions $(F_n)_{n\in\N},(G_n)_{n\in\N}\subset\CX\cap\CS$ and $(\lambda_n)_{n\in\N}\subset\Linf{\CF}{}\cap\CS$ with $0\leq\lambda_n\leq1$ for all $n\in\N$ such that $F_n\to F,G_n\to G,\lambda_n\to\lambda$ $\PW$-a.s., where $\CS$ was defined in the proof of \th\ref{T1}. Without loss of generality we might assume that $F_n=\sum_{i=1}^{k_n}F_i^n\ind_{A_i^n},G_n=\sum_{i=1}^{k_n}G_i^n\ind_{A_i^n}$ and $\lambda_n=\sum_{i=1}^{k_n}\lambda_i^n\ind_{A_i^n}$ have the same disjoint $\CF$-partition $(A_i^n)_{i=1,...,k_n}$. By the $\CF$-locality and Lebesgue property and since $F_i^n,G_i^n,\lambda_i^n\in\R$ for all $i=1,...,k_n$ and $n\in\N$ we have
\begin{align*}
 V_\CH(\lambda F+(1-\lambda)G)&=\lim_{n\to\infty}  V_\CH(\lambda_n F_n+(1-\lambda_n)G_n)\\
&=\lim_{n\to\infty}  V_\CH\left(\sum_{i=1}^{k_n}(\lambda_i^n F^n_i+(1-\lambda^n_i)G^n_i)\ind_{A_i^n}\right)\\
&=\lim_{n\to\infty} \sum_{i=1}^{k_n} V_\CH\big(\lambda_i^n F^n_i+(1-\lambda^n_i)G^n_i\big)\ind_{A_i^n}\\
&=\lim_{n\to\infty} \sum_{i=1}^{k_n}\Big(\lambda^n_i V_\CH(F^n_i)+(1-\lambda_i^n) V_\CH(G_i^n)\Big)\ind_{A_i^n}\\
&=\lim_{n\to\infty}\lambda_n V_\CH(F_n)+(1-\lambda_n) V_\CH(G_n)\\
&=\lambda V_\CH(F)+(1-\lambda) V_\CH(G),
\end{align*}
which contradicts \eqref{eq:gneq}.
Moreover we assume that $0<\lambda<1$ since otherwise this would also contradict \eqref{eq:gneq}. Finally, we assume w.l.o.g.\ that 
$$A:=\{V_\CH(\lambda F+(1-\lambda)G)<\lambda V_\CH(F)+(1-\lambda) V_\CH(G)\}\in\CH$$
has positive probability. 
Next define $H_1:=F\ind_A+G\ind_{A^C}$ and $H_2:=G$, then $H_i\in\CX\cap\Linf{\CH}{}, i=1,2$ and by $\CF$-locality of $V_\CH$
$$V_\CH(\lambda H_1+(1-\lambda)H_2)\leq \lambda V_\CH(H_1)+(1-\lambda) V_\CH(H_2)$$
and the inequality is strict with positive probability.

Since $(\WR,\PW,\CF)$ is conditionally atomless given $\CH$ there exists a $B\in\CF$ with $\PW(B)=\lambda$ and which is independent of $\CH$. Since $H_1,H_2\in\CX$ and $\CX$ is $\CF$-conditionally convex 
$$H:=H_1\ind_B+H_2\ind_{B^C}\in\CX.$$
Now by $\CF$-locality of $V_\CH$, $V_\CH(\CX\cap\Linf{\CH}{})\subseteq\Linf{\CH}{}$ and $B\perp\!\!\!\!\perp \CH$ we get
\begin{align*}
\BEW{\PW}{ V_\CH\left(H\right)}{\CH}&=\BEW{\PW}{ V_\CH\left(H_1\ind_B+H_2\ind_{B^C}\right)}{\CH}\\
&=V_\CH(H_1)\BEW{\PW}{\ind_B }{\CH}+V_\CH(H_2)\BEW{\PW}{\ind_{B^C} }{\CH}\\
&= V_\CH(H_1)\EW{\PW}{\ind_B }+V_\CH(H_2)\EW{\PW}{\ind_{B^C} }\\
&=\lambda V_\CH(H_1)+(1-\lambda)V_\CH(H_2)\\
&\geq V_\CH(\lambda H_1+(1-\lambda)H_2)\\
&=V_\CH\left(\BEW{\PW}{H_1\ind_B+H_2\ind_{B^C}}{\CH}\right)\\
&=V_\CH\left(\BEW{\PW}{H}{\CH}\right),\\
\end{align*}
and the inequality is strict with positive probability.
Moreover $\CX=\Imag U_\CH$ implies the existence of a $\WH\in\Imag\Lambda_\CH$ such that $H=U_\CH(\WH).$
Finally we get
\begin{align*}
\WU_\CH^{-1}\left(\BEW{\PW}{\WU_\CH(\WH)}{\CH}\right)&=U_\CH^{-1}\left( V_\CH^{-1}\left(\BEW{\PW}{ V_\CH\left(U_\CH(\WH)\right)}{\CH}\right)\right)\\
&=U_\CH^{-1}\left( V_\CH^{-1}\left(\BEW{\PW}{ V_\CH\left(H\right)}{\CH}\right)\right)\\
&\geq U_\CH^{-1}\left( V_\CH^{-1}\left(V_\CH\left(\BEW{\PW}{ H}{\CH}\right)\right)\right)\\
&=U_\CH^{-1}\left(\BEW{\PW}{H}{\CH}\right)\\
&=U_\CH^{-1}\left(\BEW{\PW}{U_\CH(\WH)}{\CH}\right),
\end{align*}
and the inequality is strict with positive probability, since $\WU^{-1}_\CH$ and $U_\CH^{-1}$ are strictly isotone (c.f.\ \th\ref{l:-1}). Thus we have the desired contradiction of \eqref{eq:1} and hence $V_\CH$ is affine, i.e.\ $V_\CH(F)=\alpha F+\beta$ for all $F\in\CX$, where $\alpha,\beta\in\Linf{\CF}{}$. Moreover, since we know that $V_\CH(x)\in\Linf{\CH}{}$ for all $x\in\R\cap\CX$, we obtain that $\alpha,\beta$ are actually $\CH$-measurable. That $\alpha>0$ follows immediately from the fact that $\WU_\CH,U^{-1}_\CH$ are strictly isotone.
\end{proof}

\begin{Remark}\th\label{r:Kromer1}
Our notion of consistency is defined in terms of the multivariate CRMs. In contrast in \cite{Kromer2014} it is a priori assumed that the multivariate CRMs are of the decomposable form $\rho=\eta\circ\Lambda$ as in \eqref{eq:help5}  and they define "consistency" of $\{\rho_\CG,\rho_\CH\}$ by requiring strong consistency of both pairs $\{\eta_\CG,\eta_\CH\}$ and $\{\Lambda_\CG,\Lambda_\CH\}$.
Note that these definitions of consistency are not equivalent, in particular strong consistency of both $\{\eta_\CG,\eta_\CH\}$ and $\{\Lambda_\CG,\Lambda_\CH\}$ does not imply strong consistency of $\{\rho_\CG,\rho_\CH\}$. \cite{Kromer2014} also study the interplay of the strong consistency of $\{\rho_\CG,\rho_\CH\}$ and of strong consistency of both $\{\eta_\CG,\eta_\CH\}$ and $\{\Lambda_\CG,\Lambda_\CH\}$.
As \th\ref{c:etastrongcons} shows in the law-invariant case this requirement is quite restrictive.

\end{Remark}

\section{Consistency of a family of conditional risk measures}
%\section{Consistency of conditional risk measures indexed by two $\sigma$-algebras}
So far we only considered consistency for two multivariate CRMs. In this section we extend our results on strong consistency to families of multivariate CRMs. We begin with some motivating examples.
\begin{Example}[Dynamic risk measures]\th\label{ex:dynamic}
If one is interested in a dynamic risk measurement under growing information in time up to a terminal time $T>0$, this can be modeled by a family of CRMs $(\rho_t)_{t\in[0,T]}$ and a filtration $(\CF_t)_{t\in[0,T]}$ such that $\rho_t:\Linf{\CF_T}{d}\to \Linf{\CF_t}{}$. 
\end{Example}
In systemic risk measurement conditioning on varying information in space rather than in time is of interest. In that situation, as opposed to Example~\ref{ex:dynamic}, the family of multivariate CRMs is not necessarily indexed by a filtration. To exemplify this we recall a multivariate version of the spatial risk measures which have been introduced by \cite{Follmer2014} in a univariate framework.

\begin{Example}[Multivariate spatial risk measures]\th\label{ex:spatial1}
 Let $I=\{1,...,d\}$ denote a set of financial institutions and let $(S,\CS)$ be a measurable space. Each financial institution $i\in I$ can be in some state $s\in S$, and $\WR=S^I=\{\om=(\om)_{i\in I}:\om_i\in S\}$ denotes all possible states of the system. Then the $\sigma$-algebra $\CF_J$ on $\WR$ which is generated by the canonical projections on the $j$-th coordinate for $j\in J$ describes the observable information within the subsystem of financial institutions $J\subseteq I$. Finally let $\PW$ be a probability measure on $(\WR,\CF)$, where $\CF:=\CF_{I}$. Then the risk evolution under varying spatial information can be modeled by the family of CRMs $(\rho_{J})_{J\subseteq I}$, where each $\rho_{J}:\Linf{\CF}{d}\to\Linf{\CF_{J}}{}$, i.e.\ $\rho_{J}$ is the risk of the system given the information on the state of the financial institutions within the subsystem $J$.  
\end{Example}

From the viewpoint of a regulator, systemic risk measurement contingent on information in space is helpful in identifying systemic relevant structures, i.e.~in analyzing questions like: "How much is the system affected given that a specific institution or subgroup of institutions is in distress?", or "How resilient is a specific institution or subgroup of institutions given that the system is in distress?". In Example~\ref{ex:spatial1} the spatial conditioning is based on a $\sigma$-algebra which is generated by all possible states of the institutions within a given subsystem. To treat questions of the type mentioned before one might alternatively consider conditioning with respect to more granular information in space. For instance, in the spirit of the systemic risk measures CoVaR in \cite{Adrian2011} or Systemic Expected Shortfall in \cite{Acharya2010} one could condition on a single crisis event with respect to a given subsystem, e.g.~that all financial institutions within the subsystem are below their individual value-at-risk levels.

%\begin{Remark}\th\label{r:consforfam}
%So far the strong consistency was only defined for a pair of CRMs. However, this can be easily extended to families of CRMs by saying that a family is strongly consistent if and only if each pair of the family is strongly consistent as defined in \th\ref{def:consistency}. E.g.\ in the framework of \th\ref{ex:spatial1} this yields that for all $J_1\subseteq J_2\subseteq I$ and $X,Y\in\Linf{\CF}{d}$
%$$\rho_{J_1}(X)\geq\rho_{J_1}(Y)\quad\Longrightarrow\quad\rho_{J_2}(X)\geq\rho_{J_2}(Y)$$
%or equivalently in recursive form
%$$\rho_{J_2}(X)=\rho_{J_2}\left(f_{\rho_{J_1}}^{-1}\big(\rho_{J_1}(X)\big)\vecone\right).$$
%\end{Remark}

\ \

In \thref{ex:dynamic} as well as \thref{ex:spatial1} the families of CRMs are indexed by one-dimensional information structure. However, in \cite{Frittelli2011b}, they propose conditional certainty equivalents based on a one-dimensional information structure caused by the fact that utilities of agents may vary over time: %Therefore the conditional certainty equivalent at time $s\leq t$ of claim with maturity $t$ is given by a two step procedure. First determine the agent's utility at time $t$ of the $t$-claim. Secondly find an $s$-claim which has the same utility at time $s$ as the conditional expected utility of the $t$-claim. \\  
%We briefly review the conditional certainty equivalents as introduced in \cite{Frittelli2011b} adapted to our setting.

\begin{Example}[Conditional certainty equivalents]\th\label{ex:CCE}
Let $(\WR,\CF,(\CF_t)_{t\in\R^+},\PW)$ be an atomless filtered probability space and let $u_t:\R\times\WR\to\R$ be a function which is strictly increasing and continuous in the first argument and $\CF_t$-measurable in the second argument for all $t\in\R^+$. Suppose that the range $\CR_t:=\{u_t(x,\om):x\in\R\}$ is independent of $\om\in\WR$, that $\CR_t\subseteq\CR_s$ for all $s\leq t$, and denote the pathwise inverse function of $u$ by $u_t^{-1}(y)\in\Linf{\CF_t}{}$ for all $y\in\CR_t$, where $u_t(x)$ and $u_t^{-1}(y)$ is the shorthand for  $u_t(x,\cdot)$ and $u_t^{-1}(y,\cdot)$, resp.
Then the backward conditional certainty equivalent is given by
$$C_{s,t}:\Linf{\CF_t}{}\to\Linf{\CF_s}{}; F\mapsto C_{s,t}(F)=-u_s^{-1}\big(\BEW{\PW}{u_t(F)}{\CF_s}\big).$$
It has been shown in \cite{Frittelli2011b} Proposition 1.1 that for a fixed $T\in\R^+$, we have that the family $(C_{t,T})_{t\leq T}$ is consistent, i.e.\ for all $s\leq t\leq T$ 
$$C_{t,T}(F)\geq C_{t,T}(G)\Longrightarrow C_{s,T}(F)\geq C_{s,T}(G)\quad(F,G\in\Linf{\CF_T}{}).$$
\end{Example}
Also in the context of conditioning on spatial information a two-dimensional information structure could be of interest, for example to represent risk measurement policies that differ locally in the financial system.

\begin{Example}[Local regulatory policies]\th\label{ex:LRMP}
In the context of \thref{ex:spatial1}, let $I=\{1,...,d\}$ be a network of financial institutions that is of interest for supervisory authorities associated to different levels with possibly different regulatory policies.  For example, think of $I$ as the European financial system. Then regulatory policies of authorities on the European level might differ from policies on the national levels which again might differ from regional policies. To include these different regulatory viewpoints into the framework of spatial risk measures one could consider a family of CRMs $(\rho_{J,K})_{J\subseteq K\subseteq I}$, where each $\rho_{J,K}:\Linf{\CF_{K}}{d}\to\Linf{\CF_{J}}{}$. Here the first index $J$ has the same meaning as in \thref{ex:spatial1}, i.e.~the risk measurement is performed conditioned on the state of the institutions in subsystem $J$. The second index $K$ identifies the type of regulatory policy on the risk management prevailing in subsystem $K$, for example expected shortfall measures at different significance levels according to European ($K=I$), national, or regional standards. Even though regulatory policies may differ depending on the level of authority, it might still be desirable that these policies  behave consistently in some way, i.e.~the family $(\rho_{J,K})_{J\subseteq K\subseteq I}$ should be consistent not only with respect to the contingent information implied by the index $J$ but also with respect to the different policies implied by the index $K$. In the following, this question will be considered.
\end{Example}

Motivated by the examples above, we will consider the following types of families of CRMs in this section: 
Let $\CI_1$ and $\CI_2$ be sets of sub-$\sigma$-algebras of $\CF$ such that $\CI_1$ contains the trivial $\sigma$-algebra and denote by $\CE:=\{(\CH,\CT)\in\CI_1\times\CI_2:\CH\subseteq\CT\}$. In the following we denote by $\rho_{\CH,\CT}$ a multivariate CRM which maps $\Linf{\CT}{d}$ to $\Linf{\CH}{}$ and we consider families of CRMs of type $(\rho_{\CH,\CT})_{(\CH,\CT)\in \CE}$. 
In order to allow for a comparison of the risks of two random risk factors under different information, we assume for the rest of this section that $\rho_{\CH,\CT_1}(\Linf{\CT_1}{d})=\rho_{\CH,\CT_2}(\Linf{\CT_2}{d})$ for all $(\CH,\CT_1),(\CH,\CT_2)\in\CE$.
Sometimes it will also be convenient to consider only a subfamily of $\CE$ where the second $\sigma$-algebra is fixed. In that case we denote the corresponding index set by $\CE(\CT):=\{\CH\in\CI_1: \CH\subseteq\CT\}$ for $\CT\in\CI_2$. 
Note that the structure of the families of CRMs discussed in \thref{ex:dynamic} and \thref{ex:spatial1} is covered by this framework by letting $\CI_2:=\{\CF\}$.

\begin{Definition}\th\label{d:consforfamily}
A family of CRMs $(\rho_{\CH,\CT})_{(\CH,\CT)\in\CE}$ is strongly consistent if for all $\CG\subseteq\CH\subseteq \CT_1\cap\CT_2$
$$\rho_{\CH,\CT_1}(X)\geq\rho_{\CH,\CT_2}(Y)\Longrightarrow \rho_{\CG,\CT_1}(X)\geq\rho_{\CG,\CT_2}(Y),\quad (X\in\Linf{\CT_1}{},Y\in\Linf{\CT_2}{d}).$$
\end{Definition}
It can be easily checked that the conditional certainty equivalents of \cite{Frittelli2011b} (see \th\ref{ex:CCE}) are strongly consistent. % where the second index set is not fixed compared to the consistency introduced in \thref{ex:CCE}. 
Analogously to \thref{l:1} strong consistency is equivalent to the following recursive relation between the CRMs.

\begin{Lemma}
Let $(\rho_{\CH,\CT})_{(\CH,\CT)\in\CE}$ be family of CRMs, then the following statements are equivalent:
\begin{enumerate}
\item $(\rho_{\CH,\CT})_{(\CH,\CT)\in\CE}$ is strongly consistent;
\item For all $\CG\subseteq\CH\subseteq \CT_1\cap\CT_2$ and $X\in\Linf{\CT_1}{d}$ 
$$\rho_{\CG,\CT_1}(X)=\rho_{\CG,\CT_2}\left(f^{-1}_{\rho_{\CH,\CT_2}}\big(\rho_{\CH,\CT_1}(X)\big)\vecone\right).$$
\end{enumerate}
\end{Lemma}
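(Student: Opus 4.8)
The plan is to follow the proof of \thref{l:1} almost verbatim, merely carrying along the second $\sigma$-algebra index; the one genuinely new ingredient is the standing assumption $\rho_{\CH,\CT_1}(\Linf{\CT_1}{d})=\rho_{\CH,\CT_2}(\Linf{\CT_2}{d})$, which is precisely what makes the mixed expression $f^{-1}_{\rho_{\CH,\CT_2}}\big(\rho_{\CH,\CT_1}(X)\big)$ meaningful. So the first step is to record this well-definedness: by \thref{l:0} applied to $\rho_{\CH,\CT_2}$ together with the standing assumption,
\[
\rho_{\CH,\CT_1}(X)\in\rho_{\CH,\CT_2}(\Linf{\CT_2}{d})=f_{\rho_{\CH,\CT_2}}(\Linf{\CH}{})=\Imag f_{\rho_{\CH,\CT_2}}\qquad\text{for every }X\in\Linf{\CT_1}{d},
\]
hence $f^{-1}_{\rho_{\CH,\CT_2}}(\rho_{\CH,\CT_1}(X))\in\Linf{\CH}{}$ is defined, and since $\CH\subseteq\CT_2$ the vector $Y_X:=f^{-1}_{\rho_{\CH,\CT_2}}(\rho_{\CH,\CT_1}(X))\vecone$ lies in $\Linf{\CT_2}{d}$ and satisfies $\rho_{\CH,\CT_2}(Y_X)=f_{\rho_{\CH,\CT_2}}\big(f^{-1}_{\rho_{\CH,\CT_2}}(\rho_{\CH,\CT_1}(X))\big)=\rho_{\CH,\CT_1}(X)$.

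For the implication (i)$\Rightarrow$(ii), fix $\CG\subseteq\CH\subseteq\CT_1\cap\CT_2$ and $X\in\Linf{\CT_1}{d}$. Since $\rho_{\CH,\CT_1}(X)=\rho_{\CH,\CT_2}(Y_X)$, both inequalities $\rho_{\CH,\CT_1}(X)\geq\rho_{\CH,\CT_2}(Y_X)$ and $\rho_{\CH,\CT_2}(Y_X)\geq\rho_{\CH,\CT_1}(X)$ hold; applying \thref{d:consforfamily} to the first gives $\rho_{\CG,\CT_1}(X)\geq\rho_{\CG,\CT_2}(Y_X)$, and applying it to the second (with the two risks, and the two indices $\CT_1,\CT_2$, interchanged) gives $\rho_{\CG,\CT_2}(Y_X)\geq\rho_{\CG,\CT_1}(X)$. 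Hence $\rho_{\CG,\CT_1}(X)=\rho_{\CG,\CT_2}(Y_X)$, which is exactly statement (ii).

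For (ii)$\Rightarrow$(i), let $X\in\Linf{\CT_1}{d}$, $Y\in\Linf{\CT_2}{d}$ with $\rho_{\CH,\CT_1}(X)\geq\rho_{\CH,\CT_2}(Y)$, and fix $\CG\subseteq\CH\subseteq\CT_1\cap\CT_2$. Apply (ii) twice: once as stated, and once applied to $Y\in\Linf{\CT_2}{d}$ with $\CT_2$ playing the role of $\CT_1$ (legitimate, since the side condition of (ii) then reads $\CG\subseteq\CH\subseteq\CT_2$), obtaining
\[
\rho_{\CG,\CT_1}(X)=\rho_{\CG,\CT_2}\Big(f^{-1}_{\rho_{\CH,\CT_2}}\big(\rho_{\CH,\CT_1}(X)\big)\vecone\Big)\quad\text{and}\quad\rho_{\CG,\CT_2}(Y)=\rho_{\CG,\CT_2}\Big(f^{-1}_{\rho_{\CH,\CT_2}}\big(\rho_{\CH,\CT_2}(Y)\big)\vecone\Big).
\]
Since $f^{-1}_{\rho_{\CH,\CT_2}}$ is antitone (\thref{l:-1}), the hypothesis $\rho_{\CH,\CT_1}(X)\geq\rho_{\CH,\CT_2}(Y)$ gives $f^{-1}_{\rho_{\CH,\CT_2}}(\rho_{\CH,\CT_1}(X))\vecone\leq f^{-1}_{\rho_{\CH,\CT_2}}(\rho_{\CH,\CT_2}(Y))\vecone$ in $\Linf{\CT_2}{d}$, and antitonicity of $\rho_{\CG,\CT_2}$ then yields $\rho_{\CG,\CT_1}(X)\geq\rho_{\CG,\CT_2}(Y)$, i.e.\ strong consistency.

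We do not expect any real obstacle: this is a routine adaptation of \thref{l:1}. The only point requiring care --- and the sole place where the structural hypotheses on the family enter --- is the well-definedness step of the first paragraph, where one must combine the standing assumption $\rho_{\CH,\CT_1}(\Linf{\CT_1}{d})=\rho_{\CH,\CT_2}(\Linf{\CT_2}{d})$ with \thref{l:0} to justify evaluating $f^{-1}_{\rho_{\CH,\CT_2}}$ at $\rho_{\CH,\CT_1}(X)$; thereafter both implications are pure bookkeeping with the two indices.
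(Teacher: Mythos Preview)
Your proof is correct and follows exactly the approach indicated by the paper, which omits the proof entirely with the remark that the lemma is ``analogous to \thref{l:1}''. Your explicit treatment of the well-definedness of $f^{-1}_{\rho_{\CH,\CT_2}}\big(\rho_{\CH,\CT_1}(X)\big)$ via the standing assumption together with \thref{l:0} is precisely the one extra ingredient needed beyond \thref{l:1}, and the remaining two implications are the direct two-index analogue of that lemma.
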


%\begin{proof}
%\underline{(i)$\Rightarrow$(ii):}
%We know that for all $X\in\Linf{\CT_1}{d}$
%$$\rho_{\CH,\CT_1}(X)=\rho_{\CH,\CT_2}\left(f_{\rho_{\CH,\CT_2}}^{-1}\big(\rho_{\CH,\CT_1}(X)\big)\vecone\right)$$
%and thus it follows from the strong consistency that
%$$\rho_{\CG,\CT_1}(X)=\rho_{\CG,\CT_2}\left(f^{-1}_{\rho_{\CH,\CT_2}}\big(\rho_{\CH,\CT_1}(X)\big)\vecone\right).$$
%\underline{(ii)$\Rightarrow$(i):}
%Let $X\in\Linf{\CT_1}{d}$ and $Y\in\Linf{\CT_2}{d}$ be such that $\rho_{\CH,\CT_1}(X)\geq \rho_{\CH,\CT_2}(Y)$. Then by the antitonicity of $f_{\rho_{\CH,\CT_2}}^{-1}$ and $\rho_{\CG,\CT_2}$ it follows that
%$$\rho_{\CG,\CT_1}(X)=\rho_{\CG,\CT_2}\left(f^{-1}_{\rho_{\CH,\CT_2}}\big(\rho_{\CH,\CT_1}(X)\big)\vecone\right)\geq\rho_{\CG,\CT_2}\Big(f_{\rho_{\CH,\CT_2}}^{-1}\big(\rho_{\CH,\CT_2}(Y)\big)\vecone\Big)=\rho_{\CG,\CT_2}(Y).$$
%\end{proof}
Clearly, our results from the previous sections carry over to families of CRM. We illustrate this in the following by giving the straightforward extensions of \thref{l:riskanti2} and \thref{t:1} to a family of CRMs.
\begin{Theorem}
Let $(\rho_{\CH,\CT})_{(\CH,\CT)\in\CE}$ be a family of strongly consistent CRMs. Moreover, if there exists a $\CT\in\CI_2$ such that
\begin{equation}
f^{-1}_{\rho_{\CT,\CT}}\circ\rho_{\CT,\CT}(x)\in\R,\quad\forall x\in\R^d,
\end{equation}
then each multivariate CRM $\rho_{\CH,\CT}$ of the subfamily $(\rho_{\CH,\CT})_{\CH\in\CE(\CT)}$ which has a continuous realization $\rho_{\CH,\CT}(\cdot,\cdot)$ can be decomposed into a $\CH$-conditional aggregation function $\Lambda_{\CH,\CT}:\Linf{\CT}{d}\to\Linf{\CT}{}$ and a univariate CRM $\eta_{\CH,\CT}:\Imag\Lambda_{\CH,\CT}\to \Linf{\CH}{}$ such that 
$$\rho_{\CH,\CT}=\eta_{\CH,\CT}\circ\Lambda_{\CH,\CT}$$
 and $\rho_{\CH,\CT}(X)=\eta_{\CH,\CT}\big(\Lambda_{\CH,\CT}(X)\big)=-\Lambda_{\CH,\CT}(X) \text{ for all }X\in\Linf{\CH}{d}.$
Moreover, for those $\rho_{\CH,\CT},\CH\in\CE(\CT)$, for which a decomposition exists the corresponding conditional aggregation functions are strongly consistent.
\end{Theorem}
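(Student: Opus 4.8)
The plan is to fix the second index and reduce everything to the setting of two conditional risk measures treated earlier. So fix $\CT\in\CI_2$ with $f^{-1}_{\rho_{\CT,\CT}}\circ\rho_{\CT,\CT}(x)\in\R$ for all $x\in\R^d$ and work entirely inside the subfamily $(\rho_{\CH,\CT})_{\CH\in\CE(\CT)}$, all of whose members share the domain $\Linf{\CT}{d}$. Specialising strong consistency of the family (\thref{d:consforfamily}) to $\CT_1=\CT_2=\CT$ shows that for every pair $\CG\subseteq\CH$ in $\CE(\CT)$ the pair $\{\rho_{\CG,\CT},\rho_{\CH,\CT}\}$ is strongly consistent in the sense of \thref{def:consistency}, with the ambient $\sigma$-algebra $\CF$ there replaced by $\CT$; in particular, for each $\CH\in\CE(\CT)$ the pair $\{\rho_{\CH,\CT},\rho_{\CT,\CT}\}$ is strongly consistent, with $\rho_{\CT,\CT}$ in the role of the risk measure under full information.

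For the first assertion I would simply invoke \thref{l:riskanti2} for the pair $\{\rho_{\CH,\CT},\rho_{\CT,\CT}\}$, reading $(\CF,\CG,\rho_\CF)$ there as $(\CT,\CH,\rho_{\CT,\CT})$: its three hypotheses are met, namely strong consistency of $\{\rho_{\CH,\CT},\rho_{\CT,\CT}\}$ (just observed), condition \eqref{eq:const} for $\rho_{\CT,\CT}$ (the standing assumption on $\CT$), and existence of a continuous realization of $\rho_{\CH,\CT}$ (the case under consideration). Hence each such $\rho_{\CH,\CT}$ decomposes as $\rho_{\CH,\CT}=\eta_{\CH,\CT}\circ\Lambda_{\CH,\CT}$, with $\Lambda_{\CH,\CT}\colon\Linf{\CT}{d}\to\Linf{\CT}{}$ an $\CH$-conditional aggregation function and $\eta_{\CH,\CT}\colon\Imag\Lambda_{\CH,\CT}\to\Linf{\CH}{}$ a univariate CRM satisfying $\eta_{\CH,\CT}(\Lambda_{\CH,\CT}(X))=-\Lambda_{\CH,\CT}(X)$ for $X\in\Linf{\CH}{d}$; and, crucially, the same application of \thref{l:riskanti2} simultaneously yields the consistency \eqref{eq:consoflam} of $\Lambda_{\CH,\CT}$ with the full-information aggregation $-\rho_{\CT,\CT}$.

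For the ``moreover'' part the idea is to route through this common terminal aggregation. By \thref{r:321}, again read with $\CF$ replaced by $\CT$, consistency \eqref{eq:consoflam} of $\{\Lambda_{\CH,\CT},-\rho_{\CT,\CT}\}$ is equivalent to the identity $f^{-1}_{\Lambda_{\CH,\CT}}\circ\Lambda_{\CH,\CT}=f^{-1}_{-\rho_{\CT,\CT}}\circ(-\rho_{\CT,\CT})$ on $\Linf{\CT}{d}$; since $f_{-\rho_{\CT,\CT}}=-f_{\rho_{\CT,\CT}}$, the right-hand side equals $f^{-1}_{\rho_{\CT,\CT}}\circ\rho_{\CT,\CT}$ and is thus the same for all $\CH$. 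Therefore, for any $\CG\subseteq\CH$ in $\CE(\CT)$ for which the decomposition is available,
$$f^{-1}_{\Lambda_{\CG,\CT}}\circ\Lambda_{\CG,\CT}=f^{-1}_{\rho_{\CT,\CT}}\circ\rho_{\CT,\CT}=f^{-1}_{\Lambda_{\CH,\CT}}\circ\Lambda_{\CH,\CT},$$
and the converse implication in \thref{r:321} turns this equality back into strong consistency \eqref{eq:consoflam} of $\{\Lambda_{\CG,\CT},\Lambda_{\CH,\CT}\}$, which is the claim.

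The step I expect to be the genuine obstacle is this last one: strong consistency is a one-directional implication, and $-\rho_{\CT,\CT}$ sits \emph{above} both $\CG$ and $\CH$ rather than between them, so the two instances of \eqref{eq:consoflam} cannot be chained naively; what rescues the argument is precisely the symmetric reformulation in \thref{r:321}, which exhibits $f^{-1}_{\Lambda}\circ\Lambda$ as an invariant of the strong-consistency class of aggregation functions. Everything else is bookkeeping: checking that \thref{l:riskanti2} and \thref{r:321} are insensitive to relabelling the ``full'' $\sigma$-algebra as $\CT$, and observing, via the uniqueness statement in \thref{T1}, that whenever a decomposition of $\rho_{\CH,\CT}$ with the stated normalization exists it coincides with the one produced by \thref{l:riskanti2}, so that the displayed identity holds for every member of the subfamily that admits a decomposition.
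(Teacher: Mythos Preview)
Your proposal is correct and matches the paper's intended approach: the paper presents this theorem explicitly as a ``straightforward extension'' of \thref{l:riskanti2} to families and gives no separate proof, and your reduction---fixing $\CT$, pairing each $\rho_{\CH,\CT}$ with the terminal $\rho_{\CT,\CT}$, and invoking \thref{l:riskanti2}---is exactly that extension. Your handling of the ``moreover'' clause via \thref{r:321}, routing all $f^{-1}_{\Lambda_{\CH,\CT}}\circ\Lambda_{\CH,\CT}$ through the common invariant $f^{-1}_{\rho_{\CT,\CT}}\circ\rho_{\CT,\CT}$, is the natural way to make precise a step the paper leaves implicit, and it is sound.
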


\begin{Theorem}\th\label{t:3}
Let $(\rho_{\CH,\CT})_{(\CH,\CT)\in\CE}$ be a family of CRMs. 
Furthermore, suppose that there exists an $(\CG,\CT)\in\CE$ such that $(\WR,\CT,\PW)$ is a conditionally atomless probability space given $\CG$, $(\WR,\CG,\PW)$ is atomless and $\rho_{\CT}:=\rho_{\{\emptyset,\WR\},\CT}$ is law-invariant.
Then the subfamily $(\rho_{\CH,\CT})_{\CH\in\CE(\CT)}$ is strongly consistent if and only if for each $\CH\in\CE(\CT)$ the CRM $\rho_{\CH,\CT}$ is of the form
\begin{equation}\label{eq:321}
\rho_{\CH,\CT}(X)=g_{\CH,\CT}\left(f_{u_{\CT}}^{-1}\big(\BEW{\PW}{u_{\CT}(X)}{\CH}\big)\right),\quad\text{for all } X\in\Linf{\CT}{d},
\end{equation}
where $u_{\CT}:\R^d\to\R$ is strictly increasing and continuous, $f_{u_{\CT}}^{-1}:\Imag f_{u_{\CT}}\to\R$ is the unique inverse function of $f_{u_{\CT}}:\R\to\R; x\mapsto u_{\CT}(x\vecone)$ and $g_{\CH,\CT}:\Linf{\CH}{}\to\Linf{\CH}{}$ is strictly antitone, $\CH$-local, fulfills the Lebesgue property and $0\in\Imag g_{\CH,\CT}$.\\
In particular, for any CRM of type \eqref{eq:321} we have that $g_{\CH,\CT}=f_{\rho_{\CH,\CT}}$, where $f_{\rho_{\CH,\CT}}$ is defined in \thref{def:f}.
\end{Theorem}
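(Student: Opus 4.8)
The plan is to derive \thref{t:3} from the two‑$\sigma$‑algebra version \thref{t:1} together with the uniqueness statement \thref{l:2}; the real work is to exhibit a \emph{single} multivariate utility $u_{\CT}$ that serves every $\CH\in\CE(\CT)$. First I would record that strong consistency of the subfamily $(\rho_{\CH,\CT})_{\CH\in\CE(\CT)}$ forces, for every $\CH\in\CE(\CT)$, strong consistency of the pair $\{\rho_{\CT},\rho_{\CH,\CT}\}$ in the sense of \thref{def:consistency}: simply take the lower $\sigma$-algebra in \thref{d:consforfamily} to be the trivial one. In particular $\{\rho_{\CT},\rho_{\CG,\CT}\}$ is strongly consistent, and since $(\WR,\CG,\PW)$ is atomless, $(\WR,\CT,\PW)$ is conditionally atomless given $\CG$, and $\rho_{\CT}$ is law-invariant, \thref{t:1} applies verbatim with the ambient $\sigma$-algebra there replaced by $\CT$ and the conditioning $\sigma$-algebra there replaced by $\CG$. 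This produces a strictly increasing, continuous $u_{\CT}:\R^d\to\R$ with $\rho_{\CT}(X)=f_{\rho_{\CT}}\big(f_{u_{\CT}}^{-1}(\EW{\PW}{u_{\CT}(X)})\big)$ for all $X\in\Linf{\CT}{d}$.

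The decisive step, and the one I expect to be the main obstacle, is that the \emph{same} $u_{\CT}$ must govern every $\rho_{\CH,\CT}$, whereas the atomlessness hypotheses are granted only for the fixed $\CG$ (note that the trivial $\sigma$-algebra always lies in $\CE(\CT)$ and is not atomless), so \thref{t:1} cannot be re-applied for a general $\CH$. I would avoid re-running the von Neumann--Morgenstern argument entirely. Fix $\CH\in\CE(\CT)$ and set $\psi_{\CH}(X):=-f_{u_{\CT}}^{-1}\big(\BEW{\PW}{u_{\CT}(X)}{\CH}\big)$ (well-defined on $\Linf{\CT}{d}$ by the intermediate value theorem, since $f_{u_{\CT}}$ is continuous and $\BEW{\PW}{u_{\CT}(X)}{\CH}$ lies pathwise between $f_{u_{\CT}}(\pm\|X\|_{d,\infty})$). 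As in the sufficiency part of the proof of \thref{t:1}, $\psi_{\CH}$ is a CRM $\Linf{\CT}{d}\to\Linf{\CH}{}$, it is normalized on constants, and $\{\rho_{\CT},\psi_{\CH}\}$ is strongly consistent --- the last point because $\BEW{\PW}{u_{\CT}(X)}{\CH}\leq\BEW{\PW}{u_{\CT}(Y)}{\CH}$ forces $\EW{\PW}{u_{\CT}(X)}\leq\EW{\PW}{u_{\CT}(Y)}$, hence $\rho_{\CT}(X)\geq\rho_{\CT}(Y)$ by the representation just obtained. Now $\{\rho_{\CT},\rho_{\CH,\CT}\}$ is also strongly consistent, so by \thref{l:2} the normalized CRM $-f_{\rho_{\CH,\CT}}^{-1}\circ\rho_{\CH,\CT}$ is uniquely determined by $\rho_{\CT}$; since $\psi_{\CH}$ is another normalized CRM strongly consistent with $\rho_{\CT}$, we conclude $-f_{\rho_{\CH,\CT}}^{-1}\circ\rho_{\CH,\CT}=\psi_{\CH}$, i.e.
$$\rho_{\CH,\CT}(X)=f_{\rho_{\CH,\CT}}\Big(f_{u_{\CT}}^{-1}\big(\BEW{\PW}{u_{\CT}(X)}{\CH}\big)\Big),\qquad X\in\Linf{\CT}{d}.$$
Putting $g_{\CH,\CT}:=f_{\rho_{\CH,\CT}}$ gives \eqref{eq:321}; that $g_{\CH,\CT}$ is strictly antitone, $\CH$-local, has the Lebesgue property and satisfies $0\in\Imag g_{\CH,\CT}$ follows from \thref{l:-1}, \thref{l:0} and \thref{def:rm}.

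For the converse I would argue exactly as in the sufficiency direction of \thref{t:1}: if every $\rho_{\CH,\CT}$ has the form \eqref{eq:321} with a common $u_{\CT}$, and $\CG'\subseteq\CH'$ in $\CE(\CT)$ satisfy $\rho_{\CH',\CT}(X)\geq\rho_{\CH',\CT}(Y)$, then strict antitonicity and $\CH'$-locality of $g_{\CH',\CT}$ together with strict monotonicity of $f_{u_{\CT}}^{-1}$ give $\BEW{\PW}{u_{\CT}(X)}{\CH'}\leq\BEW{\PW}{u_{\CT}(Y)}{\CH'}$; conditioning down to $\CG'$ via the tower property (this is where $\CG'\subseteq\CH'$ enters) yields $\BEW{\PW}{u_{\CT}(X)}{\CG'}\leq\BEW{\PW}{u_{\CT}(Y)}{\CG'}$, and then $f_{u_{\CT}}^{-1}$ increasing and $g_{\CG',\CT}$ antitone give $\rho_{\CG',\CT}(X)\geq\rho_{\CG',\CT}(Y)$. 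Finally, the ``in particular'' assertion is immediate: for $\alpha\in\Linf{\CH}{}$ one has $f_{\rho_{\CH,\CT}}(\alpha)=\rho_{\CH,\CT}(\alpha\vecone)=g_{\CH,\CT}\big(f_{u_{\CT}}^{-1}(\BEW{\PW}{f_{u_{\CT}}(\alpha)}{\CH})\big)=g_{\CH,\CT}(\alpha)$, since $f_{u_{\CT}}$ is deterministic and $\alpha$ is $\CH$-measurable.
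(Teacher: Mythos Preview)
Your proposal is correct and is precisely the ``straightforward extension'' the paper alludes to when stating \thref{t:3} without proof: apply \thref{t:1} once with the distinguished pair $(\CG,\CT)$ to obtain $u_{\CT}$ and the representation of $\rho_{\CT}$, then for every other $\CH\in\CE(\CT)$ bypass the atomlessness hypotheses by invoking \thref{l:2} against the candidate $\psi_{\CH}$---exactly the device the paper itself uses in the last paragraph of the proof of \thref{t:1} to pass from $\rho$ to $\rho_\CH$. Your identification of the potential obstacle (atomlessness granted only at $\CG$) and its resolution via \thref{l:2} are spot on, and the converse and the ``in particular'' clause are handled just as in \thref{t:1}.
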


%In \thref{t:3} we have seen that our result from the prior section also carries over to two indices families as long as we just consider the subfamilies $(\rho_{\CH,\CT})_{\CH\in\CE(\CT)}$. 
%Note that also \th\ref{c:decomp} applies to the subfamilies $(\rho_{\CH,\CT})_{\CH\in\CE(\CT)}$, that is for each $\CH\in\CE(\CT)$ with $\R\ni a\mapsto f_{\rho_{\CH,\CT}}(a)$ has a continuous realization, we can decompose $\rho_{\CH,\CT}$ into a conditional aggregation function and a univariate CRM as specified in \th\ref{c:decomp}.

Note that the latter results, being extensions from the two-CRM-case of the previous sections,  only used the strong consistency as a pairwise strong consistency of the elements in  subfamilies $(\rho_{\CH,\CT})_{(\CH,\CT)\in \CE(\CT)}$ of $(\rho_{\CH,\CT})_{(\CH,\CT)\in \CE}$. But if $\CI_2$ contains more than just one $\sigma$-algebra, then the definition of strong consistency given in \th\ref{d:consforfamily} also has implications on the relations between these subfamilies corresponding to different sets $\CE(\CT)$ for $\CT\in \CI_2$.

\begin{Assumption}\th\label{ass.2}
In order to have sufficiently many subfamilies we suppose for the remainder of this section that $\CI_1=\CI_2=:\CI.$
\end{Assumption}

\begin{Proposition}\th\label{p:intercons}
Let $(\rho_{\CH,\CT})_{(\CH,\CT)\in\CE}$ be a strongly consistent family such that \eqref{eq:321} holds for all $(\CH,\CT)\in\CE$. 
Then for all $\CT_1,\CT_2\in\CI$ and $\CH\in\CT_1\cap\CT_2$, $\CH\in \CI$,
$$\rho_{\CH,\CT_1}(X)=f_{\rho_{\CH,\CT_2}}\left(f_{u_{\CT_2}}^{-1}\big(a_{\CT_1,\CT_2}\BEW{\PW}{u_{\CT_1}(X)}{\CH}+b_{\CH,\CT_1,\CT_2}\big)\right),$$
where $a_{\CT_1,\CT_2}\in\R^+\backslash\{0\}$, $b_{\CH,\CT_1,\CT_2}\in\Linf{\CH}{}$ and $\BEW{\PW}{b_{\CH,\CT_1,\CT_2}}{\CG}=b_{\CG,\CT_1,\CT_2}$ for all $\CG\in\CI$ with $\CG\subseteq\CH$.
\end{Proposition}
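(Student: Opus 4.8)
The plan is to reduce the asserted identity to a one-dimensional functional equation for the maps
$$F_{\CH,\CT}:=f_{\rho_{\CH,\CT}}\circ f_{u_{\CT}}^{-1},$$
and then to solve that equation by a von Neumann--Morgenstern type independence argument. Write $\CG_0:=\{\emptyset,\WR\}$. By \eqref{eq:321}, \thref{l:0} and the standing assumption $\rho_{\CH,\CT_1}(\Linf{\CT_1}{d})=\rho_{\CH,\CT_2}(\Linf{\CT_2}{d})$ one has $\Imag f_{\rho_{\CH,\CT_1}}=\Imag f_{\rho_{\CH,\CT_2}}=:\CY_\CH$, so for $\CT\in\{\CT_1,\CT_2\}$ each $F_{\CH,\CT}$ is a well-defined, strictly antitone, $\CH$-local map with the Lebesgue property and image $\CY_\CH$, and \eqref{eq:321} reads $\rho_{\CH,\CT}(X)=F_{\CH,\CT}\big(\BEW{\PW}{u_{\CT}(X)}{\CH}\big)$. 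Setting $\psi_\CH:=F_{\CH,\CT_2}^{-1}\circ F_{\CH,\CT_1}$ (strictly isotone, $\CH$-local, Lebesgue; well-defined as in \thref{r:1}), one checks that the asserted identity is \emph{equivalent} to
$$\psi_\CH\big(\BEW{\PW}{u_{\CT_1}(X)}{\CH}\big)=a\,\BEW{\PW}{u_{\CT_1}(X)}{\CH}+b_\CH\qquad(X\in\Linf{\CT_1}{d})$$
with $a:=a_{\CT_1,\CT_2}>0$ and $b_\CH:=b_{\CH,\CT_1,\CT_2}\in\Linf{\CH}{}$. Since for any bounded $\CH$-measurable $W$ whose essential range is compactly contained in $\Imag f_{u_{\CT_1}}$ the choice $X:=f_{u_{\CT_1}}^{-1}(W)\vecone$ produces $\BEW{\PW}{u_{\CT_1}(X)}{\CH}=W$ (and similarly for $\CT_2$), it suffices to show that $\psi_\CH$ is $\CH$-affine, $\psi_\CH(W)=aW+b_\CH$, with $a$ \emph{independent of} $\CH$ and $\BEW{\PW}{b_\CH}{\CG}=b_\CG$ whenever $\CG\subseteq\CH$.

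The next step is to extract a conditional functional equation from strong consistency. Feeding $X=f_{u_{\CT_1}}^{-1}(W_1)\vecone$ and $Y=f_{u_{\CT_2}}^{-1}(W_2)\vecone$ into \thref{d:consforfamily} with second $\sigma$-algebra $\CG\subseteq\CH$, and using $\rho_{\CG,\CT}(X)=F_{\CG,\CT}\big(\BEW{\PW}{u_{\CT}(X)}{\CG}\big)$ together with the tower property $\BEW{\PW}{u_{\CT_1}(X)}{\CG}=\BEW{\PW}{W_1}{\CG}$, the defining implication becomes
$$\psi_\CH(W_1)\le W_2\ \Longrightarrow\ \psi_\CG\big(\BEW{\PW}{W_1}{\CG}\big)\le\BEW{\PW}{W_2}{\CG}$$
for all admissible $\CH$-measurable $W_1,W_2$. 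Applying this with $W_2=\psi_\CH(W_1)$, and once more with $\CT_1,\CT_2$ interchanged (the definition in \thref{d:consforfamily} is symmetric in the two $\sigma$-algebras of the second index, and $\psi_\CH^{-1}=F_{\CH,\CT_1}^{-1}\circ F_{\CH,\CT_2}$ satisfies the analogous relation), one obtains
$$\BEW{\PW}{\psi_\CH(W)}{\CG}=\psi_\CG\big(\BEW{\PW}{W}{\CG}\big),\qquad \CG\subseteq\CH\subseteq\CT_1\cap\CT_2 ,$$
for all admissible $\CH$-measurable $W$; for $\CG=\CG_0$ this reads $\EW{\PW}{\psi_\CH(W)}=\chi(\EW{\PW}{W})$ with the \emph{deterministic} strictly increasing map $\chi:=\psi_{\CG_0}$.

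The core step is to turn this ``Jensen equality'' into affinity. Plugging $W=w_1\ind_A+w_2\ind_{A^C}$ with $w_1,w_2\in\Imag f_{u_{\CT_1}}$ deterministic and $A\in\CH$ chosen independent of $\sigma\big(\psi_\CH(w_1),\psi_\CH(w_2)\big)$ --- possible under the atomlessness hypotheses in force in this section (cf.\ \thref{t:3}), exactly as in the independence-axiom arguments behind \thref{t:1} and \thref{l:43} --- and using $\CH$-locality of $\psi_\CH$ together with $\EW{\PW}{\psi_\CH(w_i)}=\chi(w_i)$, the equation at level $\CG_0$ collapses to $\chi\big(pw_1+(1-p)w_2\big)=p\chi(w_1)+(1-p)\chi(w_2)$ with $p=\PW(A)$ arbitrary in $(0,1)$; being midpoint-affine and monotone, $\chi$ is affine, $\chi(w)=aw+b_0$ with $a>0$. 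Running the same independence construction at the level of a general $\CH$ --- now comparing $\psi_\CH$ itself along $\{A,A^C\}$ and conditioning down so as to resolve $\sigma(\psi_\CH(w_1),\psi_\CH(w_2))$ --- forces $\psi_\CH(w)=aw+b_\CH$ for deterministic $w$ with a \emph{single} $b_\CH\in\Linf{\CH}{}$ and with the same slope $a$ (the slope must coincide with that of $\chi=\psi_{\CG_0}$, obtained by conditioning down to $\CG_0$), the intercept being genuinely $\Linf{\CH}{}$-valued precisely because the $g_{\CH,\CT}$ in \eqref{eq:321} are only $\CH$-local and need not fix constants. Then $\CH$-locality and the Lebesgue property (approximating $W$ by $\CH$-simple random variables) upgrade this to $\psi_\CH(W)=aW+b_\CH$ for all admissible $W$, and substituting the affine forms $\psi_\CH(W)=aW+b_\CH$, $\psi_\CG(V)=aV+b_\CG$ into $\BEW{\PW}{\psi_\CH(W)}{\CG}=\psi_\CG\big(\BEW{\PW}{W}{\CG}\big)$ yields $\BEW{\PW}{b_\CH}{\CG}=b_\CG$. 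Undoing the substitution $W=\BEW{\PW}{u_{\CT_1}(X)}{\CH}$ gives the claim.

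I expect the main obstacle to be this core step: passing from $\EW{\PW}{\psi_\CH(W)}=\chi(\EW{\PW}{W})$ to genuine affinity, first of the deterministic map $\chi$ and then of the conditionally random maps $\psi_\CH$. This is where real regularity of $(\WR,\CF,\PW)$ is needed in order to perform the independence selections, and one must keep careful track of the fact that the slope $a$ comes out deterministic and column-independent while the intercept $b_\CH$ is only $\Linf{\CH}{}$-measurable. The remaining ingredients --- well-definedness of $\psi_\CH$, the two-sided reductions, and the tower property --- are routine bookkeeping with the properties established in \thref{l:-1} and \thref{l:0}.
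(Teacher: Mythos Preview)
Your overall architecture coincides with the paper's: you introduce exactly the same composite maps (the paper writes $h_{\CH,\CT}=f_{\rho_{\CH,\CT}}\circ f_{u_\CT}^{-1}$ and $p_{\CH,\CT_1,\CT_2}=h_{\CH,\CT_2}^{-1}\circ h_{\CH,\CT_1}$, your $F_{\CH,\CT}$ and $\psi_\CH$), derive the conditional tower identity
\[
\psi_\CG\big(\BEW{\PW}{F}{\CG}\big)=\BEW{\PW}{\psi_\CH(F)}{\CG},
\]
and then argue that this forces $\psi_\CH$ to be affine. Your derivation of the tower identity via two one-sided applications of \thref{d:consforfamily} is a legitimate alternative to the paper's direct use of the recursive form; both yield the same equation.

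The gap is in your ``core step''. To get $\chi$ affine you need $A\in\CH$ with prescribed probability that is \emph{independent} of $\sigma\big(\psi_\CH(w_1),\psi_\CH(w_2)\big)$. Atomlessness of $(\WR,\CH,\PW)$ does not furnish such an $A$: it gives sets of any probability, but not independence from a prescribed sub-$\sigma$-algebra of $\CH$, and the hypotheses in force (cf.\ \thref{t:3}) are not of ``conditionally atomless given an arbitrary sub-$\sigma$-algebra of $\CH$'' type. The subsequent sentence (``conditioning down so as to resolve $\sigma(\psi_\CH(w_1),\psi_\CH(w_2))$'') is not a well-defined operation here, since that $\sigma$-algebra need not lie in $\CI$, so the lifting from $\chi$ to $\psi_\CH$ is left unjustified.

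The paper handles precisely this point differently (\thref{l:affine2}): rather than manufacturing independence, it exploits that $F\mapsto \EW{\PW}{\psi_\CH(F)}=\chi(\EW{\PW}{F})$ is \emph{law-invariant}. If $\psi_\CH(w_1)-\psi_\CH(w_2)$ were nonconstant one could, using only atomlessness of $\CH$, pick $A_1,A_2\in\CH$ of equal probability inside opposite level sets and build two $\CH$-simple $F_1,F_2$ with the same law but $\EW{\PW}{\psi_\CH(F_1)}\neq\EW{\PW}{\psi_\CH(F_2)}$, a contradiction. Hence $\psi_\CH(w)=a(w)+\widetilde\beta_\CH$ with a deterministic $a$; uniqueness of affine numerical representations then forces $a(w)=\tilde a w+b$, and the martingale property of the intercepts drops out of the tower identity. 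If you replace your independence construction by this law-invariance comparison, your proof goes through and matches the paper's.
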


In order to prove \thref{p:intercons} we need some auxiliary lemmas and therefore the proof is deferred to the end of this section.
From \thref{p:intercons} it follows that any strongly consistent family $(\rho_{\CH,\CT})_{(\CH,\CT)\in\CE}$ (under \thref{ass.2}) is basically a family of conditional certainty equivalents as in \cite{Frittelli2011b}:

\begin{Corollary}
In the situation of \thref{p:intercons}, if $a_{\CT_1,\CT_2}=1$, $b_{\CH,\CT_1,\CT_2}=0$ for all $\CH\subseteq\CT_1\cap\CT_2$ where $\CH\in \CI$ and $\CT_1,\CT_2\in\CI$, and if $\rho_{\CT,\CT}$ are normalized on constants for all $\CT\in\CI$, then $(\rho_{\CH,\CT})_{(\CH,\CT)\in\CE}$ satisfies
\begin{equation}\label{eq:multiCCE}
\rho_{\CH,\CT}(X)=-f^{-1}_{u_\CH}\big(\BEW{\PW}{u_{\CT}(X)}{\CH}\big),\quad X\in\Linf{\CT}{d}.
\end{equation}
\end{Corollary}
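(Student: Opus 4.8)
The plan is to obtain \eqref{eq:multiCCE} as a direct specialization of \thref{p:intercons}. Under \thref{ass.2} we have $\CI_1=\CI_2=\CI$, so that whenever $\CH\subseteq\CT$ with $\CH,\CT\in\CI$, both $(\CH,\CT)$ and $(\CH,\CH)$ belong to $\CE$; in particular the representation \eqref{eq:321} is available for all of these pairs and \thref{p:intercons} applies. First I would fix such a pair $(\CH,\CT)\in\CE$ together with an arbitrary $X\in\Linf{\CT}{d}$ and keep both fixed throughout.

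The one substantive step is to invoke \thref{p:intercons} with the choice $\CT_1:=\CT$ and $\CT_2:=\CH$. This is admissible since $\CH\subseteq\CT=\CT_1$ and $\CH\subseteq\CH=\CT_2$, hence $\CH\subseteq\CT_1\cap\CT_2$, and $\CH\in\CI$. The proposition then gives
$$\rho_{\CH,\CT}(X)=f_{\rho_{\CH,\CH}}\left(f_{u_\CH}^{-1}\big(a_{\CT,\CH}\,\BEW{\PW}{u_\CT(X)}{\CH}+b_{\CH,\CT,\CH}\big)\right).$$
Next I would feed in the hypotheses of the corollary: $a_{\CT,\CH}=1$ and $b_{\CH,\CT,\CH}=0$, so that the argument of $f_{u_\CH}^{-1}$ reduces to $\BEW{\PW}{u_\CT(X)}{\CH}$, which is therefore automatically an element of $\Imag f_{u_\CH}$ (so the right-hand side is well defined); and, taking $\CT=\CH$ in the hypothesis that all $\rho_{\CT,\CT}$ are normalized on constants, $f_{\rho_{\CH,\CH}}=-\id$ on $\Linf{\CH}{}$ by \thref{r:3}. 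Substituting both facts into the previous display yields
$$\rho_{\CH,\CT}(X)=-f_{u_\CH}^{-1}\big(\BEW{\PW}{u_\CT(X)}{\CH}\big),$$
which is exactly \eqref{eq:multiCCE}. Since $(\CH,\CT)\in\CE$ and $X\in\Linf{\CT}{d}$ were arbitrary, this proves the corollary.

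I do not expect any genuine obstacle here: the whole content of the corollary has already been absorbed into \thref{p:intercons}, whose (deferred) proof is where the real work lives, namely propagating the von Neumann--Morgenstern representations across different terminal $\sigma$-algebras and pinning down the affine ambiguity $(a_{\CT_1,\CT_2},b_{\CH,\CT_1,\CT_2})$ that links $u_{\CT_1}$ and $u_{\CT_2}$. The only points worth an explicit sentence in the write-up are that the degenerate choice $\CT_2=\CH$ is permitted in \thref{p:intercons} and that, once $a=1$ and $b=0$, the conditional expectation lands in the domain of $f_{u_\CH}^{-1}$; both are immediate from the structure of $\CE$ under \thref{ass.2} and from the statement of \thref{p:intercons} itself.
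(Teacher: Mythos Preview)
Your argument is correct and matches the paper's own proof essentially line for line: invoke \thref{p:intercons} with $\CT_1=\CT$ and $\CT_2=\CH$, use the hypotheses $a_{\CT,\CH}=1$, $b_{\CH,\CT,\CH}=0$, and then apply the normalization $f_{\rho_{\CH,\CH}}=-\id$. Your additional remarks on admissibility of $\CT_2=\CH$ under \thref{ass.2} and on well-definedness of $f_{u_\CH}^{-1}$ are harmless elaborations of points the paper leaves implicit.
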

\begin{proof}
	If $a_{\CT_1,\CT_2}=1$ and $b_{\CH,\CT_1,\CT_2}=0$ for all $\CH\subseteq\CT_1\cap\CT_2$, then 
	$$\rho_{\CH,\CT_1}(X)=f_{\rho_{\CH,\CT_2}}\left(f_{u_{\CT_2}}^{-1}\big(\BEW{\PW}{u_{\CT_1}(X)}{\CH}\big)\right),$$
	and thus by choosing $\CT_2=\CH$ and since $\rho_{\CH,\CH}$ is normalized on constants we get \eqref{eq:multiCCE}.
\end{proof}
Next we prepare the proof of \thref{p:intercons}:
\begin{Lemma}\th\label{l:54}
Let $u:\R^d\to\R$ be a deterministic utility, i.e.\ $u$ is strictly increasing and continuous, and let $\CG$ and $\CH$ be a sub-$\sigma$-algebras of $\CF$ such that $\CG \subseteq\CH$. 
Then $$\BEW{\PW}{u(\Linf{\CH}{d})}{\CG}=u(\Linf{\CG}{d}).$$
\end{Lemma}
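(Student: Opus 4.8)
The plan is to prove the set equality by establishing the two inclusions separately, the inclusion ``$\supseteq$'' being immediate and the inclusion ``$\subseteq$'' resting on the intermediate value theorem applied to the deterministic univariate utility $f_u(x):=u(x\vecone)$, in the same spirit as in the proof of \thref{t:1}.

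For ``$\supseteq$'' I would simply note that for $Y\in\Linf{\CG}{d}$ the random variable $u(Y)$ is $\CG$-measurable and bounded, since $u$ is a deterministic continuous function and $Y$ has relatively compact range; hence $\BEW{\PW}{u(Y)}{\CG}=u(Y)$, and because $\Linf{\CG}{d}\subseteq\Linf{\CH}{d}$ this already shows $u(Y)\in\BEW{\PW}{u(\Linf{\CH}{d})}{\CG}$.

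For ``$\subseteq$'' I would fix $X\in\Linf{\CH}{d}$, set $r:=\|X\|_{d,\infty}$, and use $-r\vecone\leq X\leq r\vecone$ together with monotonicity of $u$ to get $f_u(-r)\leq u(X)\leq f_u(r)$, hence $f_u(-r)\leq\BEW{\PW}{u(X)}{\CG}\leq f_u(r)$ $\PW$-a.s. Since $f_u:\R\to\R$ is continuous and strictly increasing, the intermediate value theorem yields $[f_u(-r),f_u(r)]\subseteq\Imag f_u$ and a continuous (hence Borel), strictly increasing inverse $f_u^{-1}$ on $\Imag f_u$. Choosing the representative of $\BEW{\PW}{u(X)}{\CG}$ taking all its values in the closed interval $[f_u(-r),f_u(r)]$, I would then put $Y:=f_u^{-1}\!\big(\BEW{\PW}{u(X)}{\CG}\big)\vecone$, which lies in $\Linf{\CG}{d}$ (it is $\CG$-measurable as a composition of a $\CG$-measurable random variable with a Borel function, and is bounded by $r$), and conclude $u(Y)=f_u\big(f_u^{-1}(\BEW{\PW}{u(X)}{\CG})\big)=\BEW{\PW}{u(X)}{\CG}$.

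The only slightly delicate point — and it is a very minor one — is to make sure that $\BEW{\PW}{u(X)}{\CG}$ takes its values inside $\Imag f_u$, so that $f_u^{-1}$ can be applied $\om$-wise; this is exactly what the a priori bounds coming from boundedness of $X$, combined with the intermediate value theorem for $f_u$, provide. Everything else is routine measurability and monotonicity bookkeeping.
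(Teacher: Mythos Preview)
Your proof is correct. Both inclusions are established cleanly, and the one subtle point you flag --- that $\BEW{\PW}{u(X)}{\CG}$ lands in the range of $f_u$ so that $f_u^{-1}$ can be applied $\omega$-wise --- is properly handled by the a priori bounds $f_u(-r)\leq\BEW{\PW}{u(X)}{\CG}\leq f_u(r)$ and the intermediate value theorem.

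The paper takes a different, more abstract route for the inclusion ``$\subseteq$'': it observes that $\rho_\CG(X):=-\BEW{\PW}{u(X)}{\CG}$ defines a CRM in the sense of \thref{def:rm} and then invokes \thref{l:0}, which says $\rho_\CG(\Linf{\CH}{d})=f_{\rho_\CG}(\Linf{\CG}{})$ for any CRM; unwinding this gives $\BEW{\PW}{u(\Linf{\CH}{d})}{\CG}=\BEW{\PW}{u(\Linf{\CG}{}\vecone)}{\CG}\subseteq u(\Linf{\CG}{d})$. Your argument is more elementary and self-contained: because $u$ is deterministic and continuous you can write down the preimage $Y=f_u^{-1}(\BEW{\PW}{u(X)}{\CG})\vecone$ explicitly, whereas \thref{l:0} constructs the corresponding $\alpha^\ast$ via an essential-supremum argument that is needed in the general CRM setting but is overkill here. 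The paper's approach buys reuse of existing machinery; yours buys transparency and avoids checking that $-\BEW{\PW}{u(\cdot)}{\CG}$ satisfies all the CRM axioms.
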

\begin{proof}
"$\supseteq$": Obvious. "$\subseteq$": Define the CRM $\rho_\CG:\Linf{\CH}{d}\to\Linf{\CG}{};X\mapsto-\BEW{\PW}{u(X)}{\CG}$. By \thref{l:0} it follows that 
\begin{align*}
\BEW{\PW}{u(\Linf{\CH}{d})}{\CG}&=-\rho_\CG(\Linf{\CH}{d})=-f_{\rho_\CG}(\Linf{\CG}{})=\BEW{\PW}{u(\Linf{\CG}{}\vecone)}{\CG}\\&\subseteq \BEW{\PW}{u(\Linf{\CG}{d})}{\CG}=u(\Linf{\CG}{d}).
\end{align*}
\end{proof}
%\begin{proof}
%That the subfamily $(\rho_{\CG,\WCT})_{\CG\in\CE_1(\WCT)}$ of the form \eqref{eq:321} is strongly consistent can be proven similarly to \thref{t:1}. For the reverse, we first apply \thref{t:1} to the strongly consistent CRMs $\rho_{\WCT}$ and $\rho_{\WCG}$. Thus $\rho_{\WCT}$ is of the form \eqref{eq:321} and the statement follows from \thref{l:2}.
%\end{proof}

\begin{Lemma}\th\label{l:affine2}
For an arbitrary $\CT\in\CI$ let $u_\CT:\R^d\to\R$ be a deterministic utility and define $\CX_\CH:=u_\CT(\Linf{\CH}{d})$ for all $\CH\in\CE(\CT)$. Moreover, let $p_\CH:\CX_\CH\to\Linf{\CH}{}$ be functions such that $p_\CH$ is $\CH$-local, strictly isotone and fulfills the Lebesgue-property. If for all $\CG,\CH\in\CE(\CT)$ with $\CG\subseteq\CH$ and $\CH$ atomless it holds that
\begin{equation}\label{eq:tower} p_\CG\left(\BEW{\PW}{F}{\CG}\right)=\BEW{\PW}{p_\CH(F)}{\CG}\text{ for all }F\in\CX_\CH,\end{equation}
then 
$$p_\CH(F)=a F+\beta_\CH,$$
where $a\in\R^+\backslash\{0\}$ and $\beta_\CH\in\Linf{\CH}{}$ such that $\BEW{\PW}{\beta_\CH}{\CG}=\beta_\CG$.\\
Note that \eqref{eq:tower} is well-defined by \th\ref{l:54}.
\end{Lemma}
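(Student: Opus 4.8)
The plan is to push the identity \eqref{eq:tower} all the way down to the trivial $\sigma$-algebra $\CG_0:=\{\emptyset,\WR\}\in\CE(\CT)$, where conditioning disappears and \eqref{eq:tower} becomes a genuine functional equation for the deterministic function $p_{\CG_0}\colon\CX_{\CG_0}\to\R$. Throughout I would write $\CR:=u_\CT(\R^d)\subseteq\R$; since $u_\CT$ is continuous and strictly increasing, $\CR$ is a non-degenerate interval, a constant $r$ belongs to $\CX_\CK$ iff $r\in\CR$ (as $r=u_\CT(f_{u_\CT}^{-1}(r)\vecone)$), and for $r_1,r_2\in\CR$ and $A$ in a sub-$\sigma$-algebra $\CK$ one has $r_1\ind_A+r_2\ind_{A^C}\in\CX_\CK$ (apply $u_\CT$ to $f_{u_\CT}^{-1}(r_1)\vecone\ind_A+f_{u_\CT}^{-1}(r_2)\vecone\ind_{A^C}\in\Linf{\CK}{d}$). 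Fix once and for all an atomless $\CH_0\in\CE(\CT)$ (available in the settings where the lemma is invoked, e.g.\ $\CH_0=\CT$ in \thref{t:3}) and a reference point $\bar r$ in the interior of $\CR$.

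First I would show that $p_{\CG_0}$ is affine on $\CR$. Given $r_1,r_2\in\CR$ and $\lambda\in(0,1)$, pick any $A\in\CH_0$ with $\PW(A)=\lambda$ (possible since $\CH_0$ is atomless) and set $F:=r_1\ind_A+r_2\ind_{A^C}\in\CX_{\CH_0}$. Then $\EW{\PW}{F}=\lambda r_1+(1-\lambda)r_2\in\CR$ does not depend on $A$, while $\CH_0$-locality gives $p_{\CH_0}(F)=p_{\CH_0}(r_1)\ind_A+p_{\CH_0}(r_2)\ind_{A^C}$, so \eqref{eq:tower} with $(\CG,\CH)=(\CG_0,\CH_0)$ reads
\begin{equation*}
p_{\CG_0}\bigl(\lambda r_1+(1-\lambda)r_2\bigr)=\EW{\PW}{p_{\CH_0}(r_2)}+\EW{\PW}{\bigl(p_{\CH_0}(r_1)-p_{\CH_0}(r_2)\bigr)\ind_A}.
\end{equation*}
Since the left-hand side is independent of $A$, the map $A\mapsto\EW{\PW}{\psi\ind_A}$ is constant over $\{A\in\CH_0:\PW(A)=\lambda\}$ for $\psi:=p_{\CH_0}(r_1)-p_{\CH_0}(r_2)\in\Linf{\CH_0}{}$, which, as $\CH_0$ is atomless, forces $\psi$ to be $\PW$-a.s.\ constant (see the last paragraph). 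Taking expectations, this constant equals $\EW{\PW}{p_{\CH_0}(r_1)}-\EW{\PW}{p_{\CH_0}(r_2)}=p_{\CG_0}(r_1)-p_{\CG_0}(r_2)$ (applying \eqref{eq:tower} to the constants $r_1,r_2$), and substituting back yields $p_{\CG_0}(\lambda r_1+(1-\lambda)r_2)=\lambda p_{\CG_0}(r_1)+(1-\lambda)p_{\CG_0}(r_2)$ for all $\lambda\in(0,1)$, $r_1,r_2\in\CR$; hence $p_{\CG_0}$ is affine on $\CR$, say $p_{\CG_0}(r)=ar+b_0$, and strict isotonicity forces $a>0$.

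Next I would upgrade this to arbitrary $\CH\in\CE(\CT)$ and set $\beta_\CH:=p_\CH(\bar r)-a\bar r\in\Linf{\CH}{}$. For $\CH$ atomless, running the same "difference is constant" step with the pair $(\CG_0,\CH)$ shows $p_\CH(r_1)-p_\CH(r_2)$ is a.s.\ constant, equal to $p_{\CG_0}(r_1)-p_{\CG_0}(r_2)=a(r_1-r_2)$; hence $p_\CH(r)=ar+\beta_\CH$ for every constant $r\in\CR$, then for $\CH$-simple $F\in\CX_\CH$ by $\CH$-locality, and then for all $F\in\CX_\CH$ by the Lebesgue property (approximate $F$ within $\CX_\CH$ by uniformly bounded $\CH$-simple functions valued in the compact value-range of $F$, which lies in $\CR$). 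For a non-atomless $\CH\in\CE(\CT)$ contained in some atomless $\CH'\in\CE(\CT)$ (e.g.\ $\CH'=\CT$ in our setting), \thref{l:54} gives $\BEW{\PW}{\CX_{\CH'}}{\CH}=\CX_\CH$, so every $G\in\CX_\CH$ is of the form $\BEW{\PW}{F}{\CH}$ with $F\in\CX_{\CH'}$, and \eqref{eq:tower} together with the atomless case already proved gives $p_\CH(G)=\BEW{\PW}{aF+\beta_{\CH'}}{\CH}=aG+\BEW{\PW}{\beta_{\CH'}}{\CH}$, which one checks equals $aG+\beta_\CH$. Finally, for $\CG\subseteq\CH$ the martingale identity $\BEW{\PW}{\beta_\CH}{\CG}=\beta_\CG$ follows by applying \eqref{eq:tower} to the constant $\bar r$ when $\CH$ is atomless (so $a\bar r+\beta_\CG=p_\CG(\bar r)=\BEW{\PW}{p_\CH(\bar r)}{\CG}=a\bar r+\BEW{\PW}{\beta_\CH}{\CG}$), and otherwise by iterated conditioning through an atomless $\CH'\supseteq\CH$.

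The step I expect to be the main obstacle is the one isolated above: showing that if $\EW{\PW}{\psi\ind_A}$ is the same for every $A\in\CH_0$ with $\PW(A)=\lambda\in(0,1)$, then $\psi$ is $\PW$-a.s.\ constant. This is exactly where atomlessness is essential, and it must be argued via a rearrangement/quantile comparison rather than a crude level-set argument: the minimum and maximum of $\EW{\PW}{\psi\ind_A}$ over measure-$\lambda$ sets are $\int_0^\lambda q_\psi(s)\,ds$ and $\int_{1-\lambda}^1 q_\psi(s)\,ds$ respectively, where $q_\psi$ is the quantile function of $\psi$ and atomlessness is used to realize the extremal sets of exact measure $\lambda$; these two integrals agree iff $q_\psi$ is constant on $(0,1)$, i.e.\ iff $\psi$ is a.s.\ constant. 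Everything else is routine bookkeeping with $\CH$-locality, the Lebesgue property, \thref{l:54}, and the tower rule for conditional expectations.
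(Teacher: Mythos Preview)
Your proof is correct and reaches the same conclusion, but the route differs from the paper's in two places worth noting.

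\emph{Showing that $p_\CH(r_1)-p_\CH(r_2)$ is deterministic.} The paper argues by law-invariance: since $p_{\CG_0}$ is deterministic, $F\mapsto\EW{\PW}{p_\CH(F)}=p_{\CG_0}(\EW{\PW}{F})$ depends only on the law of $F$; if the difference were not constant one could swap the roles of $r_1,r_2$ on two equally-likely pieces of $\WR$ to produce $F_1\eqd F_2$ with $\EW{\PW}{p_\CH(F_1)}\neq\EW{\PW}{p_\CH(F_2)}$. You instead fix $\lambda$ and observe that $A\mapsto\EW{\PW}{\psi\ind_A}$ is constant over measure-$\lambda$ sets, then invoke the quantile identities $\min=\int_0^\lambda q_\psi$ and $\max=\int_{1-\lambda}^1 q_\psi$. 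Both arguments are short; yours is pleasantly self-contained, while the paper's swap trick avoids any appeal to rearrangement theory.

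\emph{From constant differences to affinity.} Here the approaches diverge more substantially. The paper sets $a(x):=p_\CH(x)-p_\CH(\tilde x)$, extends it to $\CX_\CH$ by locality and the Lebesgue property, and then recognizes $F\mapsto\EW{\PW}{a(F)}$ and $F\mapsto\EW{\PW}{F}$ as two affine numerical representations of the same preference order on distributions; the von Neumann--Morgenstern uniqueness theorem (\cite{Follmer2011}~Theorem~2.21) then forces $a(x)=\tilde a x+b$. You bypass this external result entirely: once $\psi$ is constant you read off $p_{\CG_0}(\lambda r_1+(1-\lambda)r_2)=\lambda p_{\CG_0}(r_1)+(1-\lambda)p_{\CG_0}(r_2)$ directly from the tower identity, so $p_{\CG_0}$ is affine with slope $a>0$, and then the constant difference $p_\CH(r_1)-p_\CH(r_2)=a(r_1-r_2)$ gives the form of $p_\CH$ on constants immediately. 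Your argument is more elementary and keeps the slope $a$ common across all $\CH$ by construction; the paper's buys brevity at the cost of citing representation theory. The remaining steps (extension by simple functions and the Lebesgue property, the non-atomless case via \thref{l:54}, and the martingale identity for $\beta_\CH$) are handled the same way in both proofs.
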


\begin{proof}
Firstly, we consider the case where $\CG$ is the trivial $\sigma$-algebra. We write $p:=p_{\{\WR,\emptyset\} }$.
Note that, since $p$ is a deterministic function, $p\left(\EW{\PW}{F}\right)$ is law-invariant and thus by \eqref{eq:tower} also $\EW{\PW}{p_\CH(F)}$.\\
Now suppose that there exist $x,y\in\CX:=\CX_{\{\WR,\emptyset\}}$ with $p_\CH(x)-p_\CH(y)\not\in\R$, i.e.\ there exists a $c\in\R$ such that 
$\PW(p_\CH(x)\leq p_\CH(y)+c)\in(0,1)$.
%\quad\text{or}\quad\PW(p_\CH(x)\geq p_\CH(y)+c)\in(0,1).$$
%W.l.o.g.\ we assume that $\PW(p_\CH(x)\leq p_\CH(y)+c)\in(0,1)$. 
Since $\CH$ is an atomless space we can choose $A_1,A_2,A_3\in\CH$ with
$$\PW(A_1)=\PW(A_2):=q>0$$
such that 
$$A_1\subseteq\{p_\CH(x)\leq p_\CH(y)+c\},A_2\subseteq\{p_\CH(x)> p_\CH(y)+c\},A_3:=(A_1\cup A_2)^C.$$ 
Moreover, we define
$$F_1:=x\ind_{A_1}+y\ind_{A_2}+x\ind_{A_3}\quad\text{and}\quad F_2:=y\ind_{A_1}+x\ind_{A_2}+x\ind_{A_3}.$$
Obviously $F_1,F_2\sim q\delta_y+(1-q)\delta_x$, that is $F_1\eqd F_2$. However, since $p_\CH$ is $\CH$-local, we have
\begin{align*}
\EW{\PW}{p_\CH(F_1)}+cq&=\EW{\PW}{p_\CH(x)\ind_{A_1}}+\EW{\PW}{(p_\CH(y)+c)\ind_{A_2}}+\EW{\PW}{p_\CH(x)\ind_{A_3}}\\
&<\EW{\PW}{(p_\CH(y)+c)\ind_{A_1}}+\EW{\PW}{p_\CH(x)\ind_{A_2}}+\EW{\PW}{p_\CH(x)\ind_{A_3}}\\
&=\EW{\PW}{p_\CH(F_2)}+cq,
\end{align*}
which contradicts the law-invariance of $F\mapsto\EW{\PW}{p_\CH(F)}$.\\
Hence we have that $p_\CH(x)-p_\CH(y)\in\R$ for all $x,y\in\CX$. 
Choose an arbitrary $\widetilde x\in\CX$, and let $$a(x):=p_\CH(x)-p_\CH(\widetilde x),\quad x\in\CX, $$
so $a:\CX\to\R$. Define $\widetilde\beta_\CH:=p_\CH(\widetilde x)\in\Linf{\CH}{}$, then 
$p_\CH(x)=a(x)+\widetilde\beta_\CH$.
The function $a$ is continuous, since otherwise there would exist a sequence $(x_n)_{n\in\N}\subset\CX$ with $x_n\to x\in\CX$, but $a(x_n)\not\to a(x)$ and the Lebesgue-property would imply the contradiction 
$$p_\CH(x)=\lim_{n\to\infty}p_\CH(x_n)=\lim_{n\to\infty}a(x_n)+\widetilde{\beta}_\CH\neq a(x)+\widetilde\beta_\CH=p_\CH(x).$$
%Thus the extension
%$$\overline a:\CX_\CH\to\Linf{\CH}{};F\mapsto a(F(\om))$$ 
%has the Lebesgue-property and is clearly $\CH$-local. 
Let $F\in\CX_\CH$. Since the $\CH$-measurable simple random vectors are dense in $\Linf{\CH}{d}$ and by the definition of $\CX_\CH$ there exists a sequence of $\CH$-measurable simple random variables $(F_n)_{n\in\N}\subset\CX_\CH\cap \CS$ with $F_n=\sum_{i=1}^{k_n}x_i^n1_{A_i^n}\to F$ $\PW$-a.s. Thus
\begin{align*}
p_\CH(F)&=\lim_{n\to\infty}p_\CH(F_n)=\lim_{n\to\infty}\sum_{i=1}^{k_n} p_\CH(x_i^n)\ind_{A_i^n}=\lim_{n\to\infty}\sum_{i=1}^{k_n} a(x_i^n)\ind_{A_i^n}+\widetilde\beta_\CH\\&=\lim_{n\to\infty} a\left(\sum_{i=1}^{k_n} x_i^n\ind_{A_i^n}\right)+\widetilde\beta_\CH=\lim_{n\to\infty}  a(F_n)+\widetilde\beta_\CH= a(F)+\widetilde\beta_\CH.
\end{align*}
The function $\CX_\CH\ni F\mapsto\EW{\PW}{F}$ induces a preference relation on $\CM:=\{\mu: \exists F\in \CX_\CH$ such that $F\sim \mu\}$ via 
$$\mu\prefto\nu\quad\Longleftrightarrow\quad\EW{\PW}{F}\geq\EW{\PW}{G},F\sim\mu,G\sim\nu.$$
Moreover the function $ x\mapsto p^{-1}(x+\mathbb{E}[\widetilde\beta_\CH])$ is strictly increasing and by \eqref{eq:tower}
$$\EW{\PW}{F}=p^{-1}\left(\EW{\PW}{p_\CH(F)}\right)=p^{-1}\left(\EW{\PW}{ a(F)}+\mathbb{E}\big[\widetilde\beta_\CH\big]\right).$$
Thus $\EW{\PW}{a(F)}$ is another affine numerical representation of $\prefto$. It is well-known that the affine numerical representation of $\prefto$ is unique up to a positive affine transformation (see e.g.\ \cite{Follmer2011}~Theorem~2.21), i.e.\ there exist $\tilde a,b\in\R, \tilde a>0$ such that $\EW{\PW}{a(F)}=\tilde a \EW{\PW}{F}+b$ for all $F\in\CX_\CH$.
In particular this implies that for all $x\in\CX$
$$a(x)=\EW{\PW}{a(x)}=a\EW{\PW}{x}+b=\tilde ax+b.$$ By setting $b+\widetilde\beta_\CH=:\beta_\CH\in\Linf{\CH}{}$ we get for all $F\in\CX_\CH$ that
$$p_\CH(F)= a(F)+\widetilde\beta_\CH=\tilde a F+b+\widetilde\beta_\CH=\tilde a F+\beta_\CH.$$
Finally we obtain by \eqref{eq:tower} that for every $\CG\subseteq\CH$ and for all $F\in\CX_\CG$
$$p_\CG(F)=p_\CG\left(\BEW{\PW}{F}{\CG}\right)=\BEW{\PW}{p_\CH(F)}{\CG}=a F+\BEW{\PW}{\beta_\CH}{\CG},$$
which proves the martingale property of $(\beta_\CG)_{\CG\subseteq\CH}$.
\end{proof}
%Finally, we are able to prove \th\ref{p:intercons}:

\begin{proof}[Proof of \thref{p:intercons}:]
Let $(\rho_{\CH,\CT})_{(\CH,\CT)\in\CE}$ be a strongly consistent family such that \eqref{eq:321} holds for all $(\CH,\CT)\in\CE$, i.e.
$$\rho_{\CH,\CT}(X)=f_{\rho_{\CH,\CT}}\left(f_{u_{\CT}}^{-1}\big(\BEW{\PW}{u_{\CT}(X)}{\CH}\big)\right),\quad\text{for all } X\in\Linf{\CT}{d},$$
We define the functions 
$$h_{\CH,\CT}:u_{\CT}(\Linf{\CH}{d})\to\Linf{\CH}{};F\mapsto f_{\rho_{\CH,\CT}}\circ f_{u_{\CT}}^{-1}(F)$$
and 
$$p_{\CH,\CT_1,\CT_2}: u_{\CT_1}(\Linf{\CH}{d})\to\Linf{\CH}{}; F\mapsto h^{-1}_{\CH,\CT_2}\circ h_{\CH,\CT_1}(F).$$
By strong consistency, we obtain for $\CG\subseteq\CH\subseteq\CT_1\cap\CT_2$, $X\in\Linf{\CT_1}{d}$ and $F:=\BEW{\PW}{u_{\CT_1}(X)}{\CH}$ that
\begin{align}
p_{\CG,\CT_1,\CT_2}\left(\BEW{\PW}{F}{\CG}\right)&=h^{-1}_{\CG,\CT_2}\left(h_{\CG,\CT_1}\big(\BEW{\PW}{\BEW{\PW}{u_{\CT_1}(X)}{\CH}}{\CG}\big)\right)\notag\\
&=h^{-1}_{\CG,\CT_2}\left(\rho_{\CG,\CT_1}(X)\right)\notag\\
&=h^{-1}_{\CG,\CT_2}\left(\rho_{\CG,\CT_2}\left(f^{-1}_{\rho_{\CH,\CT_2}}\big(\rho_{\CH,\CT_1}(X)\big)\vecone\right)\right)\notag\\
&=\BEW{\PW}{h^{-1}_{\CH,\CT_2}\left(h_{\CH,\CT_1}\big(\BEW{\PW}{u_{\CT_1}(X)}{\CH}\big)\right)}{\CG}\notag\\
&=\BEW{\PW}{p_{\CH,\CT_1,\CT_2}(F)}{\CG}.\label{eq:532}
\end{align}
By \th\ref{l:affine2} \eqref{eq:532} is fulfilled, if and only if 
$$p_{\CH,\CT_1,\CT_2}(F)=a_{\CT_1,\CT_2}F+b_{\CH,\CT_1,\CT_2},\quad \text{for all } F\in u_{\CT_1}(\Linf{\CH}{d}),$$
where $a_{\CT_1,\CT_2}\in\R^+\backslash\{0\}$, $b_{\CH,\CT_1,\CT_2}\in\Linf{\CH}{}$ and $\BEW{\PW}{b_{\CH,\CT_1,\CT_2}}{\CG}=b_{\CG,\CT_1,\CT_2}$ for all $\CG\in\CI$ with $\CG\subseteq\CH$. Thus
\begin{equation*}
h_{\CH,\CT_1}(F)=h_{\CH,\CT_2}(a_{\CT_1,\CT_2}F+b_{\CH,\CT_1,\CT_2}),\quad F\in u_{\CT_1}(\Linf{\CH}{d}),
\end{equation*}
which implies that
$$\rho_{\CH,\CT_1}(X)=f_{\rho_{\CH,\CT_2}}\left(f_{u_{\CT_2}}^{-1}\big(a_{\CT_1,\CT_2}\BEW{\PW}{u_{\CT_1}(X)}{\CH}+b_{\CH,\CT_1,\CT_2}\big)\right).$$
\end{proof}

\bibliographystyle{chicago}

\end{document}